\documentclass[aps,physrev,twocolumn,superscriptaddress,longbibliography, 10pt]{revtex4-2}
\usepackage[T1]{fontenc}
\usepackage[utf8]{inputenc}
\usepackage[british]{babel} 
\usepackage{lmodern}
\usepackage{microtype}

\usepackage{graphicx}
\usepackage{dcolumn}
\usepackage[table,dvipsnames]{xcolor}

\usepackage{amsmath}
\usepackage{amssymb}
\usepackage{amsfonts}
\usepackage{mathtools}
\usepackage{dsfont}
\usepackage{amsthm}

\usepackage[caption=false]{subfig}
\usepackage{paralist}
\usepackage{booktabs}
\usepackage{fontawesome5}

\usepackage{silence} 
\usepackage[
    colorlinks,
    linkcolor={black!30!blue},
    citecolor={black!30!blue},
    urlcolor={black!30!blue}
]{hyperref} 

\usepackage{cleveref} 
\crefformat{appendix}{\textup{appendix~}#2\textup{#1}#3}
\Crefformat{appendix}{\textup{Appendix~}#2\textup{#1}#3}

\usepackage{tikz}
\usepackage{quantikz}
\usetikzlibrary{positioning}
\usetikzlibrary{arrows.meta}
\usetikzlibrary{calc}

\tikzset{
  pulse/.style={
    fill=tunared!75,
    minimum width=0.8cm,
    minimum height=0.8cm
  },
  native/.style={
    fill=sand!60,
    inner sep=2pt
  },
  target/.style={
    fill=targetblue!60,
    inner sep=2pt
  }
}

\usepackage[nolist]{acronym}
\newacro{LP}{linear program} 

\usepackage{cancel}

\definecolor{tunared}{RGB}{214,170,170}
\definecolor{sand}{RGB}{210,190,150}
\definecolor{targetblue}{RGB}{180,205,235}

\definecolor{martin}{rgb}{0,.4,1}

\definecolor{ozgun}{RGB}{74,173,116}

\newtheorem{theorem}{Theorem}

\newtheorem{lemma}[theorem]{Lemma} 
\newtheorem{proposition}[theorem]{Proposition}
\newtheorem{definition}[theorem]{Definition}
\AddToHook{env/lemma/begin}{\crefalias{theorem}{lemma}} 

\newcommand{\e}{\ensuremath\mathrm{e}} 
\renewcommand{\i}{\ensuremath\mathrm{i}} 
\DeclareMathOperator{\Tr}{Tr} 
\renewcommand{\Re}{\operatorname{Re}} 
\renewcommand{\Im}{\operatorname{Im}} 
\DeclareMathOperator{\rank}{rank} 
\DeclareMathOperator{\supp}{supp} 
\DeclareMathOperator{\wt}{wt} 
\newcommand{\Herm}{\mathrm{Herm}}
\newcommand{\diag}{\mathrm{diag}}

\DeclareMathOperator{\U}{U}

\newcommand{\CC}{\mathbb{C}}
\newcommand{\RR}{\mathbb{R}}

\newcommand{\ZZ}{\mathbb{Z}}

\newcommand{\1}{\mathds{1}} 
\newcommand{\PP}{\mathbb{P}} 
\newcommand{\TT}{\mathbb{T}}

\newcommand{\mc}[1]{\mathcal{#1}}

\renewcommand{\vec}[1]{\mathbf{#1}} 

\DeclareMathOperator{\LandauO}{\mathrm{O}} 

\newcommand{\func}[1]{{\ensuremath{\mathsf{#1}}}}

\newcommand{\poly}{\func{poly}}

\DeclarePairedDelimiterX{\abs}[1]{\lvert}{\rvert}{%
  \ifblank{#1}{\,\cdot\,}{#1}
}   

\DeclarePairedDelimiterX\norm[1]\lVert\rVert{%
  \ifblank{#1}{\,\cdot\,}{#1}
}   

\DeclarePairedDelimiterX{\iiiNorm}[1]{\lvert}{\rvert}{%
  \delimsize\lvert\delimsize\lvert#1\delimsize\rvert\delimsize\rvert%
}

\newcommand{\ind}[1]{{\boldsymbol{#1}}}

\newcommand{\cdag}{c^{\dagger}}

\DeclarePairedDelimiterXPP\snorm[1]{}\lVert\rVert{_\infty}{\ifblank{#1}{\,\cdot\,}{#1}}   

\DeclarePairedDelimiterXPP\twonorm[1]{}\lVert\rVert{_2}{\ifblank{#1}{\,\cdot\,}{#1}}   

\DeclarePairedDelimiterXPP\trnorm[1]{}\lVert\rVert{_1}{\ifblank{#1}{\,\cdot\,}{#1}}   

\DeclarePairedDelimiterXPP\fnorm[1]{}\lVert\rVert{_{\fro}}{\ifblank{#1}{\,\cdot\,}{#1}}   

\DeclarePairedDelimiterXPP\dnorm[1]{}\lVert\rVert{_\diamond}{\ifblank{#1}{\,\cdot\,}{#1}}   

\DeclarePairedDelimiterXPP\cbnorm[1]{}\lVert\rVert{_\mathrm{cb}}{\ifblank{#1}{\,\cdot\,}{#1}}   
\DeclarePairedDelimiterXPP\onenorm[1]{}\lVert\rVert{_{1\rightarrow 1}}{\ifblank{#1}{\,\cdot\,}{#1}}   
\DeclarePairedDelimiterXPP\ddnorm[1]{}\lVert\rVert{_{\diamond\rightarrow \diamond}}{\ifblank{#1}{\,\cdot\,}{#1}}   
\DeclarePairedDelimiterXPP\ssnorm[1]{}\lVert\rVert{_{\infty\rightarrow\infty}}{\ifblank{#1}{\,\cdot\,}{#1}}   

\providecommand\given{}
\newcommand\SetSymbol[1][]{%
  \nonscript\:#1\vert
  \allowbreak
  \nonscript\:
  \mathopen{}}
\DeclarePairedDelimiterX\Set[1]\{\}{%
  \renewcommand\given{\SetSymbol[\delimsize]}
  #1
}

\DeclarePairedDelimiterX\innerp[2]{\langle}{\rangle}{%
  \ifblank{#1}{\,\cdot\,}{#1} , \ifblank{#2}{\,\cdot\,}{#2}%
}

\let\bra\relax
\let\ket\relax
\let\braket\relax
\let\ketbra\relax

\DeclarePairedDelimiter{\bra}{\langle}{\vert}
\DeclarePairedDelimiter{\ket}{\vert}{\rangle}

\DeclarePairedDelimiterX\braket[2]{\langle}{\rangle}%
  {#1\kern0.15ex\delimsize\vert\kern0.15ex\mathopen{}#2}

\DeclarePairedDelimiterX\ketbra[2]{\vert}{\vert}%
  {#1\kern0.15ex\delimsize\rangle\delimsize\langle\kern0.15ex\mathopen{}#2}

\DeclarePairedDelimiterX\sandwich[3]{\langle}{\rangle}%
  {#1\,\delimsize\vert\kern0.15ex\mathopen{}#2\kern0.15ex\delimsize\vert\kern0.15ex\mathopen{}#3}

\DeclarePairedDelimiterX\obraket[2]{(}{)}%
  {#1\kern0.15ex\delimsize\vert\kern0.15ex\mathopen{}#2}

\DeclarePairedDelimiterX\oketbra[2]{\vert}{\vert}%
  {#1\kern0.15ex\delimsize)\delimsize(\kern0.15ex\mathopen{}#2}

\DeclarePairedDelimiterX\osandwich[3]{(}{)}%
  {#1\,\delimsize\vert\kern0.15ex\mathopen{}#2\kern0.15ex\delimsize\vert\kern0.15ex\mathopen{}#3}

\newcommand{\cone}{\operatorname{cone}}
\newcommand{\col}{\operatorname{col}}
\newcommand{\conv}{\operatorname{conv}}
\newcommand{\FH}{{\mathrm{FH}}}

\definecolor{fluxblue}{RGB}{52, 89, 149}

\begin{document}

\title{Fermionic Hamiltonian engineering with local control}

\newcommand{\tuhh}{Hamburg University of Technology, Institute for Quantum Inspired and Quantum Optimization, Hamburg, Germany}

\author{Özgün Kum}
\email{oezguen.kum@tuhh.de}
\affiliation{\tuhh}
\author{Matthias Zipper}
\affiliation{\tuhh}
\author{Ludwig Mathey}
\affiliation{Zentrum für Optische Quantentechnologien and Institut für Quantenphysik, Universität Hamburg, 22761 Hamburg, Germany}
\affiliation{The Hamburg Centre for Ultrafast Imaging, Luruper Chaussee 149, Hamburg 22761, Germany}
\author{Martin Kliesch}
\email{martin.kliesch@tuhh.de}
\affiliation{\tuhh}

\begin{abstract}
  Quantum simulators enable the exploration of complex quantum phenomena in condensed-matter systems by reproducing their dynamics on controllable quantum devices. 
  However, experimental constraints often restrict the class of Hamiltonians that can be realized natively. 
  Hamiltonian engineering addresses this limitation by expanding the set of accessible target Hamiltonians from a fixed system Hamiltonian defined by the hardware.
  We introduce a new framework for fermionic Hamiltonian engineering based on conjugating free evolution under the system Hamiltonian with sequences of experimentally feasible local fermionic unitaries.
  The required sequences and free-evolution times are obtained efficiently via a linear program.
  By interleaving system evolution with these local unitaries, our method realizes effective time evolution under a broad class of target Hamiltonians, with intrinsic robustness to finite-pulse-time errors.
  In particular, we demonstrate that arbitrary complex tunnelling coefficients can be realized, constrained only by the connectivity of the underlying system Hamiltonian.
  We illustrate this capability by engineering the dynamics of the non-interacting Harper--Hofstadter model on a 1088-mode lattice and an interacting Fermi--Hubbard chain with complex tunnelling coefficients.
  By construction, our approach avoids the continuous energy absorption inherent to Floquet engineering.
\end{abstract}

\maketitle
\hypersetup{pdftitle = {Fermionic Hamiltonian engineering with local control},
	     pdfauthor = {Özgün Kum, Matthias Zipper, Ludwig Mathey, Martin Kliesch},
	     pdfsubject = {Quantum simulation},
       pdfkeywords = {quantum simulation, quantum, fermionic, Fermi-Hubbard,
                           Harper-Hofstadter, Hofstadter-Hubbard, Hamiltonian engineering,
                           artificial gauge fields, complex tunnelling, optical lattices,
                           ultracold atoms, analogue quantum simulator, Floquet,
                           topological phases, local pulses, local control, local gates, linear program, linear programming, average Hamiltonian theory, quantum advantage}
	    }

\section{Introduction}
\label{sec:intro}
The simulation of quantum systems is widely regarded as the primary domain in which quantum devices are expected to offer a practical computational advantage over classical algorithms  \cite{georgescuQuantumSimulation2014}. 
In particular, fermionic quantum many-body systems are a central target for this advantage due to their role in describing high-temperature superconductivity  \cite{arovasHubbardModel2022,qinHubbardModelComputational2022}, exotic topological phases  \cite{cooperTopologicalBandsUltracold2019}, and complex electronic structures  \cite{McArdle2020,bauerQuantumAlgorithmsQuantum2020}. 
Classically, these systems are expected to be largely intractable due to the exponential growth of the Hilbert space dimension and the pervasive fermionic sign problem  \cite{troyerComputationalComplexityFundamental2005}.
While it has been argued that analogue quantum simulators  \cite{blochQuantumSimulationsUltracold2012, altmanQuantumSimulatorsArchitectures2021} have already demonstrated a form of quantum advantage  \cite{daleyPracticalQuantumAdvantage2022,flanniganPropagationErrorsQuantitative2022, trivediQuantumAdvantageStability2023}, such devices often lack the programmability required to fit within the formal complexity-theoretic frameworks used to certify and verify computational speed-ups.

Tailoring quantum dynamics to implement desired unitary operations has a long history. 
Around 2000, the first methods for engineering arbitrary two-local Hamiltonians using a different two-local Hamiltonian in conjunction with handcrafted local control pulses \cite{Nielsen2002,Dodd2002} were developed. 
Methods from parallel works \cite{Leung2000,Leung2002} could effectively decouple all but one interaction in a two-local Hamiltonian, also resulting in a Hamiltonian engineering scheme. 
To automate the engineering, linear programming was first used by \citet{Hayes14ProgrammableQuantumSimulation} for spin chain dynamics.
Again in parallel, substantial progress on making the engineered dynamics robust against various errors using average Hamiltonian theory were developed in the context of global control pulses \cite{choi_optimal_2014,Choi2020}. 
Then, making use of local pulses in the context of two-local Ising-type all-to-all interactions for trapped ion platforms, 
the efficient, arbitrary, simultaneously entangling (EASE) gate design was developed and implemented \cite{grzesiak_efficient_2020}. 
Later, the robustness ideas were extended to two-local XY model Hamiltonian engineering with local pulses \cite{Votto23UniversalQuantumProcessors}. 
In a similar context, linear programming was used to minimize the total gate time \cite{Bassler22MAGICal}. 
This method was then first made efficient \cite{Bassler24Time-optimal} and then extended to arbitrary Hamiltonians with the ability to change coupling types, made more efficient and, based on \cite{Votto23UniversalQuantumProcessors}, also made robust against various errors \cite{basslerGeneralEfficientRobust2025a}. 
However, so far, all these works focus on spin systems. 
Recently, universality under global control was established for a broad class of analogue simulators and extended to fermionic platforms, with effective dynamics synthesized via optimal control \cite{hu2026universaldynamicsgloballycontrolled}.

In this work, we introduce an efficient framework for fermionic Hamiltonian engineering (FHE) that enhances the programmability of natively fermionic analogue quantum simulators, which is an important step toward practical quantum advantage in the simulation of strongly correlated fermionic matter.
Building upon the conjugation-based framework for qubit systems \cite{basslerGeneralEfficientRobust2025a}, our method synthesizes target evolutions by interleaving the evolution under a native fermionic system Hamiltonian $H_S$ with sequences of local fermionic unitaries. 
We demonstrate that this approach enables the simulation of a broad class of target Hamiltonians featuring locally tunable, complex tunnelling coefficients, constrained only by the connectivity of $H_S$.
This provides a systematic route to engineering artificial gauge fields and exploring interacting topological phases without requiring complex hardware modifications. 
The required unitary sequences are synthesized efficiently by solving a linear program.
We extend this framework to account for implementation errors caused by the finite duration of pulses in the presence of an always-on system Hamiltonian. Using average Hamiltonian theory, we derive a corrective modification to the linear constraints that systematically mitigates these errors, ensuring the viability of the method for near-term analogue quantum simulators.

Our framework is motivated by the rapid experimental progress in the single-site control of ultracold atoms in optical lattices \cite{esslingerFermiHubbardPhysicsAtoms2010}. In particular, the development of quantum gas microscopes \cite{bakrQuantumGasMicroscope2009, grossQuantumGasMicroscopy2021} has enabled both site-resolved imaging and the local manipulation of fermionic modes. Furthermore, the advent of optical tweezer arrays \cite{barredoSyntheticThreedimensionalAtomic2018, bernienProbingManybodyDynamics2017, manetschTweezerArray61002024} allows atoms to be arranged in highly programmable geometries. These technologies provide the toolkit of local operations that our method leverages, transforming these platforms from static analogue simulators into versatile, programmable quantum devices.

As a concrete application of our method, we consider quantum quenches under the Harper--Hofstadter \cite{cooperTopologicalBandsUltracold2019} and Fermi--Hubbard Hamiltonians \cite{arovasHubbardModel2022, qinHubbardModelComputational2022}, the latter of which is a popular model for exploring strongly correlated electronic systems.
While standard analogue simulators are typically restricted to the real and homogeneous native tunnelling and interaction terms of the lattice, our approach enables the engineering of complex, site-dependent tunnelling coefficients.
Quenches in large interacting systems are among the most promising candidates for achieving early practical quantum advantage \cite{flanniganPropagationErrorsQuantitative2022}, precisely because the linear growth of entanglement quickly renders classical tensor-network simulations intractable \cite{schuchEntropyScalingSimulability2008, schollwoeckDensitymatrixRenormalizationGroup2011}.
We provide a numerical demonstration of our protocol, showing that the synthesized dynamics accurately track the target evolution.
For the non-interacting case we reach a 2D lattice of 1088 modes. 
The same engineering approach works as efficiently for interacting systems. 
Only because we cannot practically simulate those, our numerical demonstrations on interacting systems are restricted to small system sizes. 

For the past two decades, the state-of-the-art tool for Hamiltonian engineering for fermionic systems has been Floquet driving \cite{eckardtAtomicQuantumGases2017,nixonIndividuallyTunableTunnelling2024}, enabling the realization of synthetic gauge fields and topological phases in otherwise inaccessible systems \cite{struckEngineeringIsingXYSpin2013,jotzuExperimentalRealizationTopological2014,aidelsburgerExperimentalRealizationStrong2011,aidelsburgerRealizationHofstadterHamiltonian2013,aidelsburgerMeasuringChernNumber2015,miyakeRealizingHarperHamiltonian2013}.
By periodically modulating system parameters, Floquet schemes reshape the effective time-averaged dynamics.
However, such driving is fundamentally limited by Floquet heating, where energy absorption from the drive eventually breaks down the effective Hamiltonian description.
By replacing continuous modulation with optimized sequences of discrete unitaries, our approach does not suffer from this energy absorption, allowing engineered Hamiltonians to be explored over longer timescales while preserving many-body quantum correlations.

The remainder of this paper is organized as follows. \Cref{sec:Preliminaries} introduces the necessary notation and preliminaries.
In \cref{sec:tunnelling}, we demonstrate the core capabilities of the method by locally tuning tunnelling parameters in both quadratic and interacting Hamiltonians. 
Here, we also demonstrate the power of our method by providing some applications.
First, in \cref{sec:quadratic}, we study the Harper--Hofstadter model on a 1088-mode lattice and a triangular lattice with artificial gauge fields, then in \cref{sec:interactions}, we focus on the Fermi--Hubbard model, using the fermionic Hamiltonian engineering framework to tune the relative strength of interactions to tunnellings, and implementing interacting Hamiltonians with arbitrary complex tunnelling coefficients.
\Cref{sec:FHE_general} develops the technical foundations of our approach and presents the main technical results. 
In \cref{sec:tritstring_Hamiltonians}, we introduce a notation that allows fermionic Hamiltonians without number operators to be expressed in close analogy to Pauli decompositions of qubit Hamiltonians. 
Using this notation, \cref{sec:he_tritstring} shows that any Hamiltonian sharing the same connectivity graph as the system Hamiltonian can be realized and derives the corresponding linear program used to compute the local operations required for Hamiltonian engineering. 
In \cref{sec:implementation_errors}, we analyze the dominant error sources of the protocol, namely Trotter errors and finite pulse-time errors, and show how to systematically address finite-pulse time errors. 
We conclude with an outlook in \cref{sec:conclusion}.

\begin{figure*}[t]
  \centering

  \begin{minipage}[c]{0.19\textwidth}
  \centering
  \vspace{0.7cm}
  \begin{tikzpicture}[
    font=\sffamily,
    box/.style={
      draw,
      rounded corners=4pt,
      line width=1pt,
      align=left,
      inner sep=8pt,
      minimum width=3.5cm
    },
    arrow/.style={-{Latex[length=3mm]}, line width=1pt}
  ]

  \node[box] (input) {
    \textbf{Input:} \\[3pt]
   Hamiltonians: $H_S, H_T$
  };

  \node[
    below=0.5cm of input,
    font=\fontsize{40}{40}\selectfont
  ] (laptop) {\faLaptop};

  \node[
    box,
    below=0.5cm of laptop,
  ] (output) {
    \textbf{Output:}\\[3pt]
   Pulses, times: $\{V_\ind{b},\lambda_\ind{b}\}_\ind{b}$
  };

  \draw[arrow] (input.south) -- (laptop.north);
  \draw[arrow] (laptop.south) -- (output.north);

  \end{tikzpicture}
  \end{minipage}
  \hfill
  \begin{minipage}[c]{0.78\textwidth}
  \centering
  \vspace{0pt}
  \[
  \begin{quantikz}[row sep=0.30cm, column sep=0.40cm]
      & \gate[style=pulse]{V_1}\gategroup[5,steps=1,style={dashed,rounded corners,fill=red!12},background,label style={label position = above, anchor=north,yshift = 0.4cm}]{$V_{\ind{b}_1}$}
      & \gate[5, style = native]{e^{-i H_S \lambda_{\ind{b}_1} t}}
      & \gate[style=pulse]{V_1^{\dagger}}\gategroup[5,steps=1,style={dashed,rounded corners,fill=red!12},background,label style={label position = above, anchor=north,yshift = 0.4cm}]{$V_{\ind{b}_1}^\dagger$}
      & \ \cdots \
      & \gate[style=pulse]{V_1}\gategroup[5,steps=1,style={dashed,rounded corners,fill=red!12},background,label style={label position = above, anchor=north,yshift = 0.4cm}]{$V_{\ind{b}_r}$}
      & \gate[5, style = native]{e^{-i H_S \lambda_{\ind{b}_r} t}}
      & \gate[style=pulse]{V_1^{\dagger}}\gategroup[5,steps=1,style={dashed,rounded corners,fill=red!12},background,label style={label position = above, anchor=north,yshift = 0.4cm}]{$V_{\ind{b}_r}^\dagger$}
      & \qw\\
      & \gate[style=pulse]{V_2^\dagger}
      &
      & \gate[style=pulse]{V_2}
      & \ \cdots \
      & \ghost{V_1}
      &
      & \ghost{V_1^\dagger}
      & \qw \\
      & \ghost{V_1}
      &
      & \qw
      & \ \cdots \
      & \ghost{V_1}
      &
      & \qw
      & \qw \\
      & \ghost{V_1}
      &
      & \qw
      & \ \cdots \
      & \gate[style=pulse]{V_4^\dagger}
      &
      & \gate[style=pulse]{V_4}
      & \qw \\
      & \gate[style=pulse]{V_5}
      &
      & \gate[style=pulse]{V_5^{\dagger}}
      & \ \cdots \
      & \ghost{V_1^\dagger}
      &
      & \qw
      & \qw
  \end{quantikz}
   \approx 
   \begin{quantikz}[row sep=0.35cm, column sep=0.35cm]
      & \ghost{V_1}
      & \gate[5, style = target]{e^{-i H_T t}}
      & \ghost{V_1^\dagger}
      & \qw \\
      & \qw
      &
      & \ghost{V_1^\dagger}
      & \qw \\
      & \ghost{V_1}
      &
      & \qw
      & \qw \\
      & \ghost{V_1^\dagger}
      &
      & \qw
      & \qw \\
      & \ghost{V_1^\dagger}
      &
      & \ghost{V_1}
      & \qw
  \end{quantikz}\]
  \end{minipage}

  \caption{
    Schematic overview of the fermionic Hamiltonian engineering protocol.
    Given a description of a \emph{system} $H_S$ and a \emph{target} Hamiltonian $H_T$, the algorithm runs in $\poly(n)$ time and returns local pulses $V_\ind{b}$ and quantum evolution times $\lambda_\ind{b}$.
    This output is then used to construct the experimental implementation.
    Here the implementation is depicted using a quantum circuit for simplicity, where each line corresponds to a fermionic mode, but in general the framework is applicable to arbitrary geometries.
    The red boxes depict local unitaries from the set $\{\1,V_j,V_j^\dagger\}$, for mode $j$.
    The product of such local unitaries comprise a local pulse, shown here as vertical boxes with dashed borders. 
    Using these local pulses, we conjugate time evolution under the native system Hamiltonian.
    Applying the pulses at correct times simulates the target evolution.
    The accuracy of the simulation is governed by the standard Trotter error.
    }
  \label{fig:fhe_schematic}
\end{figure*}

\section{Notation and preliminaries}
\label{sec:Preliminaries}
For a positive integer $n\in\ZZ_+$, we define $[n]\coloneqq \{1,2,\dots,n\}$.
We consider $n$ fermionic modes, which for concreteness can be thought of as different lattice sites in an optical lattice. 
The relevant Hilbert space is the fermionic Fock space $\mathcal{F}\coloneqq \bigoplus_{j=0}^n \mathcal{F}_j$, where $\mathcal{F}_j$ is the fully antisymmetric subspace of the $j$-particle tensor product space built from the single particle Hilbert space $\mathcal{H} \cong \mathbb{C}^n$, $\mathcal{F}_j \coloneqq \bigwedge^j \mathcal{H}\subset \mathcal{H}^{\otimes j}$, where $j\in\{0,1,\dots,n\}$. 
The operators $\Set{c_j}_{j\in [n]}$ and $\Set{\cdag_j}_{j\in [n]}$ are the annihilation and creation operators respectively, and satisfy the canonical anticommutation relations $\{c_j,c_k^\dagger\}=\delta_{jk}$, and $\{c_j,c_k\}= \{c_j^\dagger,c_k^\dagger\}=0$ for all $j,k\in [n]$. 
The number operator on the mode $j$ is defined by $n_j \coloneqq c_j^\dagger c_j$.
It can be checked straightforwardly that
\begin{equation}
    \label{eq:commutator}
  [n_k,c_j] = -\delta_{jk} c_k.
\end{equation} 

We will extensively use \emph{local unitaries} generated by the number operator, 
\begin{align}
	\label{eq:V_definition}
	V_j(\theta)\coloneqq \exp{\left(-\i \theta n_j\right)},
\end{align}
where $\theta \in [0,2\pi)$.
These operators are local in the sense that they act on single modes.

A direct calculation using the Campbell identity yields
\begin{align}
  \begin{split}
    \label{eq:conjugations}
  V_j^\dagger(\theta) c_k V_j(\theta) 
  &= 
  \e^{-\i\theta \delta_{jk}} c_k, 
  \\
  V_j^\dagger(\theta) \cdag_k V_j(\theta) 
  &= 
  \e^{\i\theta \delta_{jk}} \cdag_k, 
  \end{split}
\end{align}
see \cref{eq:Campbell} for a derivation.
In other words, under conjugation with these local
unitaries, the annihilation and creation operators acquire a phase $\e^{\pm\i\theta}$ if the operators act on the same mode. 
This simple fact provides the basic mechanism for effectively changing the coefficients of a Hamiltonian under time evolution.
By using it repeatedly, it allows simulating time evolution under a wide range of Hamiltonians as repeated applications of the local unitaries allow us to incrementally modify the phases acquired by the annihilation and creation operators, i.e., 
\begin{align}
  \begin{split}
    (V_j^\dagger(\theta))^m c_k (V_j(\theta))^m & =  V_j(m\theta)^\dagger c_k V_j(m\theta) \\
    & = \e^{-\i m\theta  \delta_{jk}} c_k.  
  \end{split}
\end{align}
Frequently, we will make use of multiple local unitaries applied in parallel, which we call a \emph{local pulse}
\begin{align}
  \label{eq:defn_pulse}
  V_\ind{b}(\theta) \coloneqq \prod_{j\in[n]} V_j^{b_j}(\theta),
\end{align}
where the index string $\ind{b} \in \ZZ^n$ indicates the modes which the pulse acts on and how many times the local operators (or sometimes which local operators) $V_j(\theta)$ are applied.

\section{Local tuning of tunnelling coefficients}
\label{sec:tunnelling}

Our framework grants independent control over each tunnelling coefficient, enabling simulation of interacting fermionic Hamiltonians with arbitrary complex-valued tunnelling coefficients.
The main idea is to exploit the interplay between native time evolution under a fixed system Hamiltonian and conjugation by local unitaries in order to modify effective tunnelling coefficients.
We first illustrate the method in the setting of quadratic Hamiltonians, where the effect of the conjugations can be analyzed transparently, before extending the construction to interacting Hamiltonians, focusing in particular on the Fermi--Hubbard model.
\subsection{Quadratic Hamiltonians}
\label{sec:quadratic}

Our Hamiltonian engineering method is best illustrated when considering Hamiltonians that are quadratic in the annihilation and creation operators, which are known to be efficiently simulable on classical computers \cite{bravyiLagrangianRepresentationFermionic2004,jozsaMatchgatesClassicalSimulation2008a}. 
Later in this section and more properly in \cref{sec:FHE_general}, we extend these results to non-quadratic Hamiltonians.

The goal is to simulate the time evolution under a \emph{target Hamiltonian} $H_T$. 
For now, we assume it to be a quadratic Hamiltonian on $n$ fermionic modes, i.e., to be of the form 
\begin{align}
  \label{eq:quadratic_target_Hamiltonian}
  H_T = \sum_{\{j,k\}\in E_T} \beta_{jk}  c_j^\dagger c_k + \mathrm{H. c.},
\end{align}
where $\beta_{jk}\in\CC$ and $E_T\subset \binom{[n]}{2}$ 
is the set of mode pairs with a non-vanishing tunnelling term, with $\binom{[n]}{2}\coloneqq~\{e~\subset~[n]~\mid~|e|=2\}$ and $+\mathrm{H.c.}$ denotes that we are adding the Hermitian conjugate of the terms left of it. 
For a quadratic Hamiltonian $H$ of the form \cref{eq:quadratic_target_Hamiltonian}, we call the undirected graph $G_T = ([n],E_T)$ the \emph{connectivity graph} 
of $H$. 
Thus, $|E_T|$ corresponds to the number of couplings between modes.
For instance, $E_T=\Set{\{i,i+1\}\given i\in [n-1]}$ corresponds to a 1D system consisting of $|E_T|=n-1$ nearest-neighbour hopping terms.

We assume we have access to a quantum simulator that
\begin{compactenum}[(i)]
\item realizes the time evolution under the \emph{system Hamiltonian}
\begin{align}
  \label{eq:quadratic_system_Hamiltonian}
  H_S = \alpha \sum_{\{j,k\}\in E_S}  c_j^\dagger c_k + \mathrm{H. c.},
\end{align}
defined on the connectivity graph $G_S = ([n],E_S)$, with $r=|E_S|$ coupling terms, where $\alpha\in \RR$ is the \emph{tunnelling constant}, and 
\item can additionally apply the single-mode unitaries $V_j\coloneqq V_j(2\pi/3)=\e^{-2\pi\i n_j/3 }$, as a special case of the local unitaries \eqref{eq:V_definition},
to any mode $j$ at any time. 
\end{compactenum}
The single-mode unitaries give us access to the local pulses $V_\ind{b}\coloneqq V_\ind{b}(2\pi/3)$ from \cref{eq:defn_pulse}.
To be able to generate arbitrary complex coefficients in the target Hamiltonian we have chosen the phase angle $2\pi/3$ corresponding to third roots of unity, on which we elaborate in \cref{sec:he_tritstring}.

Given a description of the system and target Hamiltonians, we efficiently compute the pulses and the evolution times for the simulation task. These parameters are then used to implement the target evolution, as depicted in \cref{fig:fhe_schematic}.
The classical algorithm used to calculate the pulses follows from the following theorem.

\begin{theorem}[Simulation of quadratic Hamiltonians]
\label{thm:quadratic}
The time evolution under any quadratic target Hamiltonian \eqref{eq:quadratic_target_Hamiltonian}, whose connectivity graph $G_T$ is a subgraph of the connectivity graph of the system Hamiltonian, i.e.\ $E_T\subseteq E_S$, can be simulated by applying $2r=2|E_S|$ conjugations under local pulses $V_\ind{b}$ at suitably chosen times. 
\end{theorem}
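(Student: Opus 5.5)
The plan is to reduce the simulation task to a conic feasibility problem in $\CC^{r}\cong\RR^{2r}$, show that the cone is the whole space, and then apply Carath\'eodory's theorem to obtain at most $2r$ pulses.

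\textbf{Step 1: effect of one conjugation.} I would first compute how a single local pulse acts on $H_S$. By \cref{eq:conjugations} with $\theta=2\pi/3$ and $\omega\coloneqq\e^{2\pi\i/3}$,
\begin{equation*}
  V_\ind{b}^\dagger H_S V_\ind{b} = \alpha\sum_{\{j,k\}\in E_S}\omega^{b_j-b_k}\, c_j^\dagger c_k + \mathrm{H.c.},
\end{equation*}
for $\ind{b}\in\ZZ_3^n$. The evolution in \cref{fig:fhe_schematic} is
\begin{equation*}
  \prod_\ind{b} V_\ind{b}^\dagger \e^{-\i H_S\lambda_\ind{b} t}V_\ind{b} = \prod_\ind{b}\e^{-\i \lambda_\ind{b} t\,V_\ind{b}^\dagger H_S V_\ind{b}}.
\end{equation*}
By the Lie--Trotter product formula (repeating the sequence $N$ times with $t\to t/N$), it approximates $\e^{-\i H_{\mathrm{eff}} t}$ with
\begin{equation*}
  H_{\mathrm{eff}}=\sum_\ind{b}\lambda_\ind{b} V_\ind{b}^\dagger H_S V_\ind{b},
\end{equation*}
where the error is the standard Trotter error. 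For each edge $\{j,k\}$ I fix an orientation. It then suffices to find weights $\lambda_\ind{b}\ge 0$ with
\begin{equation*}
  \sum_\ind{b}\lambda_\ind{b}\, v_\ind{b} = \beta/\alpha,
  \qquad (v_\ind{b})_{\{j,k\}}\coloneqq\omega^{b_j-b_k}\in\CC^{r},
\end{equation*}
where $\beta_{jk}\coloneqq 0$ for $\{j,k\}\in E_S\setminus E_T$. This is the linear program mentioned in the text.

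\textbf{Step 2: the cone is everything.} I would show that $\cone\{v_\ind{b}\}_{\ind{b}\in\ZZ_3^n}=\CC^{r}$, viewed as $\RR^{2r}$. This rests on two facts.
\begin{compactenum}[(a)]
\item \emph{Zero is an average.} For every edge, $b_j-b_k$ is uniformly distributed mod $3$ when $\ind{b}$ is uniform on $\ZZ_3^n$. Hence $\sum_\ind{b} v_\ind{b}=0$, so $0$ is a strictly positive combination of all the $v_\ind{b}$.
\item \emph{Full real span.} Suppose a real functional $y\in\CC^r$ annihilates all $v_\ind{b}$, i.e.
\begin{equation*}
  \sum_e\bigl(\bar y_e\chi_e(\ind{b})+y_e\overline{\chi_e(\ind{b})}\bigr)=0\quad\text{for all }\ind{b},
  \qquad \chi_e(\ind{b})=\omega^{b_j-b_k}.
\end{equation*}
The $\chi_e$ and $\bar\chi_e=\omega^{b_k-b_j}$ are $2r$ pairwise distinct nontrivial characters of $\ZZ_3^n$. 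Here we use that distinct edges give distinct difference vectors $e_j-e_k$, and that $e_j-e_k\neq-(e_j-e_k)$ mod $3$, which is exactly why third roots of unity are used rather than signs. Characters are linearly independent, so $y=0$.
\end{compactenum}
A finite set of vectors whose real span is the whole space and which has $0$ as a strictly positive combination generates the entire space as a cone. Indeed, any $x$ can be written as a real combination of the $v_\ind{b}$, and adding a large multiple of $\sum_\ind{b}v_\ind{b}=0$ makes all coefficients nonnegative.

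\textbf{Step 3: counting.} Conic Carath\'eodory in $\RR^{2r}$ then gives a representation of $\beta/\alpha$ using at most $2r$ vectors $v_\ind{b}$. This yields at most $2r$ conjugations $V_\ind{b}^\dagger(\cdot)V_\ind{b}$, with times $\lambda_\ind{b}t$. If $\alpha<0$ the sign is absorbed, since $-1$ is itself in the cone. A basic feasible solution of the LP realises this bound and is computable in polynomial time, once the LP is restricted to a polynomial-size generating subset. For example, per edge one can take pulses supported only on one endpoint, combined with a few global averaging patterns; I would verify that such a subset still satisfies (a) and (b).

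\textbf{Main obstacle.} The only substantive step is (b), together with correctly handling edges of $E_S\setminus E_T$ that must be switched off. The character-independence argument handles both. The remaining care is making explicit that ``simulated'' means up to the standard Trotter error of Step~1.
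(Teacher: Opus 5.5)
Your proof is correct and follows essentially the same route as the paper. The paper also writes $H_T=\sum_{\ind b}\lambda_{\ind b}V_{\ind b}^\dagger H_S V_{\ind b}$ and reduces this to $F\boldsymbol\lambda=\boldsymbol\beta\oslash\boldsymbol\alpha$. In \cref{lem:main_appendix} it proves $\cone(\col(F))=\CC^d$ by showing that the rows of $(\Re F;\Im F)$ are orthogonal, which is character orthogonality, the Gram-matrix version of your step (b), and that $\boldsymbol 1\in\ker F$, your step (a). It then shifts a real solution along $\boldsymbol 1$ to make it nonnegative and finishes with Carath\'eodory in real dimension $2r$.

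Your closing remark about a polynomial-size generating subset is not needed, since the theorem only asserts existence; the paper's efficiency claim rests on the numerically observed random-subsampling relaxation. As stated, the remark is also wrong. Pulses supported on single sites give only $2n+1$ distinct vectors $v_{\ind b}$, plus a constant number of global patterns, which is fewer than the $2r\approx 4n$ real dimensions needed on, for example, the square lattice, so condition (b) would fail there.
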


Below we sketch the proof of this theorem, the full proof for a generalized class of Hamiltonians can be found in \cref{sec:FHE_general}.
A more pedagogical treatment based on simple examples is provided in \cref{sec:simple_examples_appendix}.

In order to simulate the time evolution $\e^{-\i H_T t}$, we decompose the target Hamiltonian as a sum of conjugations of the system Hamiltonian under pulses, 
\begin{align}
  \label{eq:decomposition}
  H_T = \sum_{\ind{b}\in\mathcal{B}}\lambda_{\ind{b}}V_\ind{b}^\dagger H_S V_\ind{b},
\end{align} 
with nonnegative coefficients $\lambda_\ind{b}\in\RR_{\geq 0}$, where $\mathcal{B} \subset \ZZ^n$ determines the set of pulses we consider and we set $c\coloneqq \abs{\mc B}$.
Such a decomposition can always be found as long as the connectivity graph of the target Hamiltonian is a subgraph of the system Hamiltonian.
This decomposition allows us to approximate the time evolution under $H_T$ by the product
\begin{align}
    \label{eq:product}
    U_\mathrm{eng}(t) = \prod_{\ind{b}\in\mathcal{B}} V_\ind{b}^\dagger \e^{-\i \lambda_{\ind{b}}H_S t} V_\ind{b}  \approx \e^{-\i H_T t},
  \end{align}
where the approximation is controlled via Trotterization \cite{trotterProductSemiGroupsOperators1959,childsTheoryTrotterError2021,dalzellQuantumAlgorithmsSurvey2023}. 
In the rest of the paper, we use the second-order Trotter formula presented in \cref{sec:trotter} to construct the engineered time evolution operator \eqref{eq:product} for all numerical applications.

The time evolution $U_\mathrm{eng}(t)$ can now be implemented by the quantum simulator, where the nonnegative coefficients $\lambda_\ind{b}$ correspond to the duration of the \emph{free evolution} under the system Hamiltonian between pulses, indicating the times at which the local pulses need to be applied.
This finishes the sketch for the proof of \cref{thm:quadratic}.

The same decomposition \eqref{eq:decomposition}, allows us to construct an efficient classical algorithm for computing the pulses and evolution times necessary for a given target evolution.
We do this by expressing the condition \cref{eq:decomposition} as a matrix equation $F\boldsymbol{\lambda} = \boldsymbol{\beta}/\alpha$, where the \emph{constraint matrix} $F\in\CC^{r\times c}$ captures the effect of all possible conjugations on the system Hamiltonian with $c = |\mathcal{B}|$ pulses, $\boldsymbol{\lambda}\in\RR_{\geq 0}^c$ is the vector of all free evolution times, and $\boldsymbol{\beta}\in \CC^r$ is the vector of target Hamiltonian coefficients.
This allows us to phrase the Hamiltonian engineering problem as a \ac{LP}
\begin{align}
\begin{split}
  \label{eq:LP_quadratic}
  \text{minimize} \quad & \boldsymbol{1}^\top \boldsymbol{\lambda} \\
  \text{subject to} \quad &  F \boldsymbol{\lambda} = \boldsymbol{\beta}/\alpha, \\
                          &  \boldsymbol{\lambda} \in \RR_{\geq 0}^{c},
  \end{split}
\end{align}
where $\boldsymbol{1}\in\RR^c$ is the vector of all ones, and  we minimize the total evolution time $\boldsymbol{1}^\top \boldsymbol{\lambda} = \sum_{\ind{b}\in\mathcal{B}}\lambda_\ind{b}$, such that the optimal solution corresponds to the shortest quantum runtime.
In \cref{sec:FHE_general}, we show that this \ac{LP} has an optimal solution for all $\ind{\beta}$, and argue that we can efficiently find a set of pulses which allows us to simulate any Hamiltonian defined on the same connectivity graph. 

In the rest of this section, we consider concrete applications of our method to quadratic Hamiltonians.
Note that all linear programs in this work are modelled with the CVXPY Python package \cite{agrawalRewritingSystemConvex2019,diamondCVXPYPythonEmbeddedModeling2016}, and solved with the MOSEK solver \cite{MOSEKOptimizerAPI}.
Numerical simulations of non-interacting systems are performed using free fermionic simulation techniques (see \cref{sec:Harper--Hofstadter_appendix}), and interacting systems are simulated using exact diagonalization and the QuSpin Python package \cite{weinbergQuSpinPythonPackage2017,weinbergQuSpinPythonPackage2019}.
The code used for the simulations is available at \cite{kum_fhe_code_2026}.

\subsubsection{Harper--Hofstadter model} 
\label{sec:harper_hofstadter}

\begin{figure}[t]
    \centering
    \begin{tikzpicture}[>=stealth, line cap=round, line join=round]

        \begin{scope}[xshift=0.5cm, yshift=0cm, scale=1.0]
            \foreach \x in {0,1} {
                \foreach \y in {0,1,2,3} {
                    \draw (\x,\y) -- (\x+1,\y);
                }
            }
            \foreach \x in {0,1,2} {
                \foreach \y in {0,1,2} {
                    \draw (\x,\y) -- (\x,\y+1);
                }
            }

            \foreach \x in {0,1,2} {
                \foreach \y in {0,1,2,3} {
                    \fill (\x,\y) circle (1.7pt);
                }
            }
        \end{scope}

        \draw[->, very thick] (3.2,1.5) -- (4.8,1.5);
        \node at (4.0,1.8) {\large FHE};

        \begin{scope}[xshift=5.5cm, yshift=0cm, scale=1.0]
            \foreach \x in {0,1} {
                \foreach \y in {0,1,2,3} {
                    \draw (\x,\y) -- (\x+1,\y);
                }
            }
            \foreach \x in {0,1,2} {
                \foreach \y in {0,1,2} {
                    \draw (\x,\y) -- (\x,\y+1);
                }
            }

            \foreach \x in {0,1,2} {
                \foreach \y in {0,1,2,3} {
                    \fill (\x,\y) circle (1.7pt);
                }
            }

            \foreach \x/\y in {0.5/0.5, 1.5/0.5, 0.5/1.5, 1.5/1.5, 0.5/2.5, 1.5/2.5} {
                \draw[->, thick, fluxblue]
                    (\x,\y+0.24)
                    arc[start angle=90, end angle=-180, radius=0.28];
                \node at (\x,\y) {$\Phi$};
            }
        \end{scope}

    \end{tikzpicture}
    \caption{Fermionic Hamiltonian engineering (FHE) maps a trivial nearest-neighbour lattice Hamiltonian to a Harper--Hofstadter-type model with flux \(\Phi=2\pi/3\) per plaquette.}
    \label{fig:fhe_flux_cartoon}
\end{figure}
\begin{figure*}[t]
  \centering
  \begin{minipage}{0.39\textwidth}
    \includegraphics[width = \columnwidth]{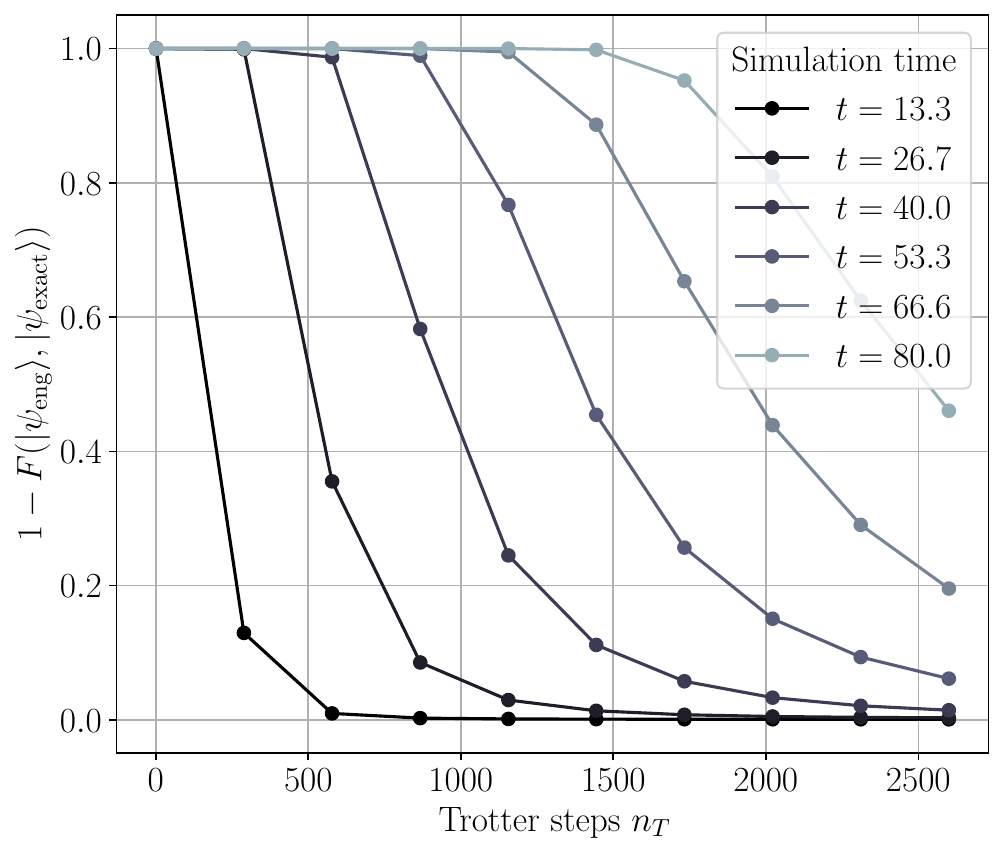}
    
    (a)
  \end{minipage}
  \begin{minipage}{0.6\textwidth}
    \includegraphics[width = \columnwidth]{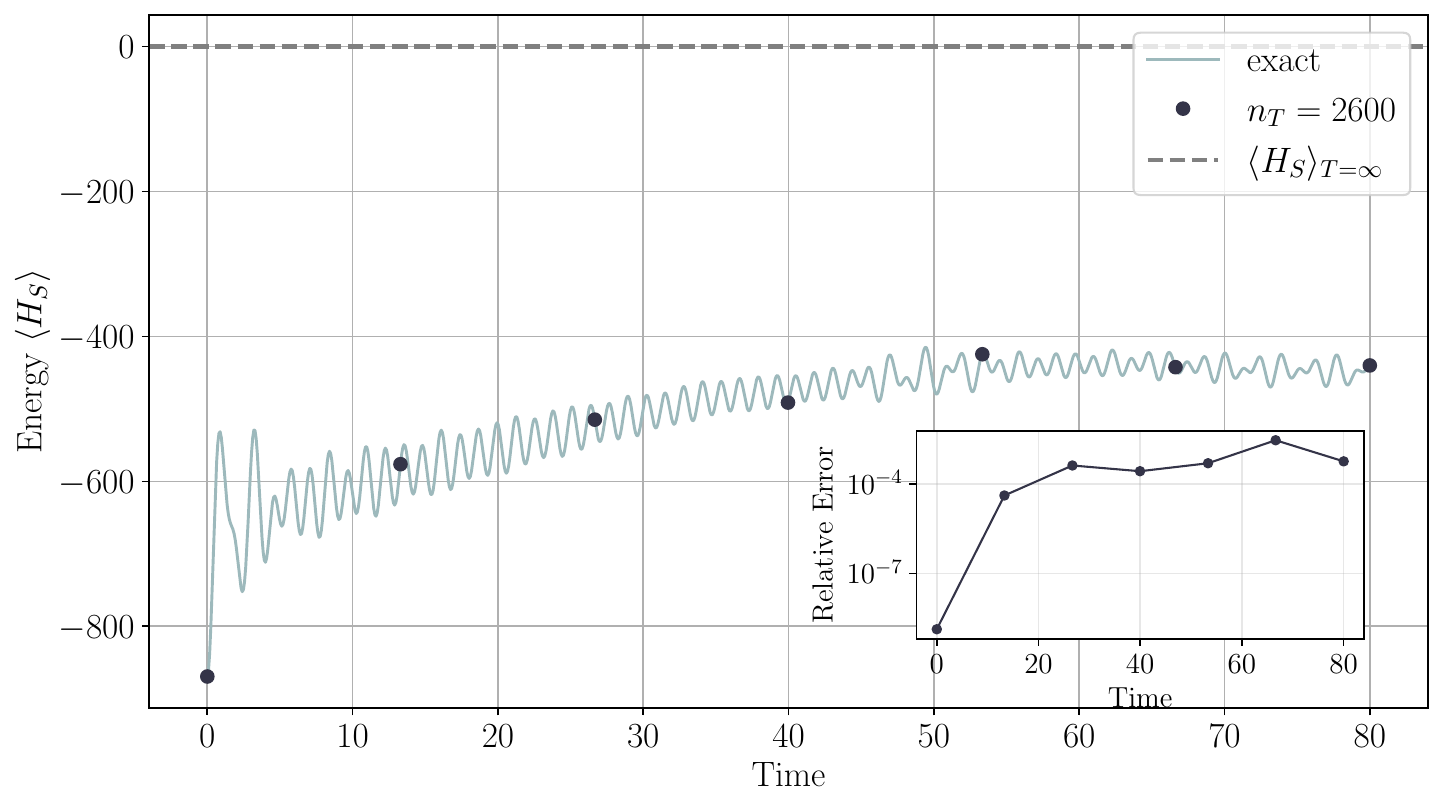}

    (b)
  \end{minipage}
\caption{
  Numerical simulation of the Harper--Hofstadter dynamics on a $32\times34$ lattice with $n=1088$ modes at half-filling $N=544$. (a) Infidelity $1-F(\ket{\psi_\mathrm{eng}},\ket{\psi_\mathrm{exact}})$ between the engineered state $\ket{\psi_\mathrm{eng}}$ and the exact target state $\ket{\psi_\mathrm{exact}}$ as a function of the number of Trotter steps. 
  The engineered evolution converges to the exact evolution for a wide range of simulation times as the number of Trotter steps is increased.
  (b) The energy $\langle H_S\rangle$ of the time evolved state simulated exactly using direct exponentiation of the target Hamiltonian, and using the FHE method with $n_T = 2600$ Trotter steps.
  The inset plot shows the evolution of the relative error $|\langle H_S\rangle_\mathrm{exact}-\langle H_S\rangle_\mathrm{eng}|/|\langle H_S\rangle_\mathrm{exact}|$.
  The dashed horizontal line at zero energy corresponds to the energy of the  maximally mixed (i.e.\ infinite temperature) state in the relevant particle sector, denoted by $\langle H_S\rangle_{T=\infty}$.
  Unlike the heating expected under Floquet driving, the engineered dynamics track the exact evolution under $H_T$ and stay below the infinite-temperature value $\langle H_S\rangle_{T=\infty}$ throughout the simulated window.
}
  
  \label{fig:HH_fidelity_energy}
\end{figure*}

As a first example, we demonstrate that the protocol can be used to engineer artificial gauge fields by simulating time evolution under the Harper--Hofstadter model. 
We consider a quadratic system Hamiltonian defined on a two-dimensional $32\times 34$ square lattice with $n=1088$ sites at half-filling $N=544$, with uniform nearest-neighbour tunnelling amplitudes
\begin{align}
  \label{eq:HH_system_hamiltonian}
  H_S = -J\sum_{x,y} \left(c_{x+1,y}^\dagger c_{x,y} + c_{x,y+1}^\dagger c_{x,y} + \mathrm{H.c.}\right).
\end{align}
Using the Hamiltonian engineering protocol, we simulate time evolution under the Harper--Hofstadter Hamiltonian, given in the Landau gauge by
\begin{align}
  \label{eq:HH_target_hamiltonian}
  H_T = -J\sum_{x,y} \left(\e^{\i 2\pi y/3} c_{x+1,y}^\dagger c_{x,y} + c_{x,y+1}^\dagger c_{x,y} + \mathrm{H.c.}\right),
\end{align}
with a uniform magnetic flux of $\Phi~=~2\pi/3$ through each plaquette of the lattice \cite{cooperTopologicalBandsUltracold2019}, as in \cref{fig:fhe_flux_cartoon}.
\begin{figure*}[t]
  \centering
  \includegraphics[width = \textwidth]{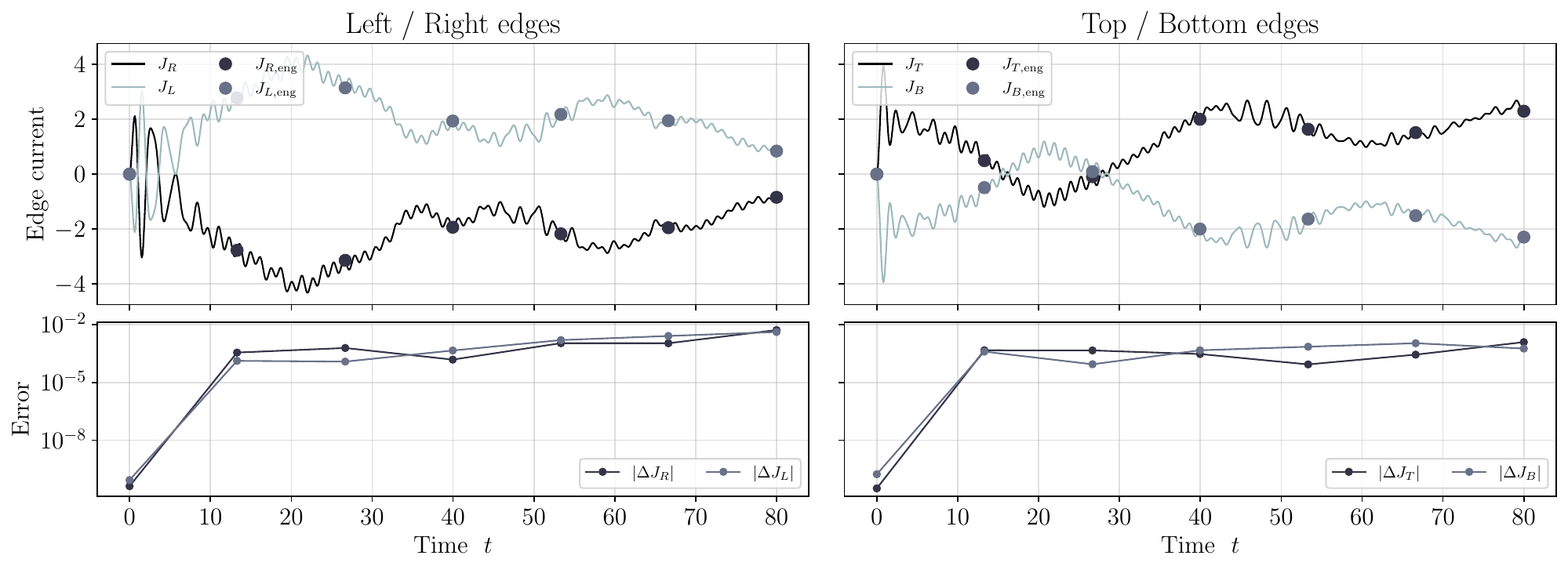}
\caption{
  Numerical simulation of the edge currents for the Harper--Hofstadter model on a $32\times34$ lattice with 1088 modes at half-filling $N=544$.
  The top plots 
  show the edge currents for the four edges of the lattice as functions of time, where the solid lines correspond to exact evolution obtained by directly exponentiating the target Hamiltonian, and the dots correspond to the observables obtained through the engineered evolution with the number of Trotter steps set to $n_T=2600$.
  As expected, both the left-right and the top-bottom edges are seen to start with no edge current but then exhibit the signature chiral currents of the Harper--Hofstadter model.
  The bottom plots 
  show the error $|\Delta J_E| \coloneqq |J_E - J_{E,\mathrm{eng}}|/|\mathrm{max}_t(J_E(t))|$, for $E\in\{R,L,T,B\}$ for the right, left, top, and bottom edge of the lattice.
  The size of the errors indicate that the FHE method faithfully reproduces the target dynamics.
}
  \label{fig:edge_currents}
\end{figure*}

We prepare the ground state $\ket{\psi_0}$ of the system Hamiltonian \eqref{eq:HH_system_hamiltonian}, and simulate time evolution under the target Hamiltonian \eqref{eq:HH_target_hamiltonian}. 
This corresponds to a sudden change of the Hamiltonian governing the system and is referred to as a \emph{quantum quench}.
In general, such quenches are a natural fit for our framework: an experimentally feasible state is prepared as an eigenstate of the system Hamiltonian (typically the ground state), which is by construction, the initial state for evolution under the engineered target Hamiltonian. 
The switch to the target evolution is then implemented with no intermediate ramp, the moment the first pulse in the sequence is applied. 

The first step of the simulation task is to solve the \ac{LP} \eqref{eq:LP_quadratic}, which takes the system and target Hamiltonians as input and returns the pulses and evolution times.
Averaged over 100 random instances of the \ac{LP}, the classical solver took 171 seconds on a laptop, while the resulting quantum runtime $\lVert\lambda\rVert_{\ell_1}$ was $80.34 \pm 0.17$ in units of $1/J$.

We benchmark the engineered evolution $U_\mathrm{eng}(t)$, obtained via the second-order Trotter formula \eqref{eq:second_order_trotter}, against the exact evolution $U_\mathrm{exact}(t)$, obtained by direct exponentiation of $H_T$. 
We quantify the error using the 
state fidelity
\begin{align}
  \label{eq:fidelity}
  F(\ket{\psi_\mathrm{eng}},\ket{\psi_\mathrm{exact}})\coloneqq|\bra{\psi_0}U^\dagger_{\mathrm{eng}}(t)U_\mathrm{exact}(t)\ket{\psi_0}|^2,
\end{align}
and observe that the engineered state converges to the target dynamics for long simulation times for a suitably chosen number of Trotter steps, see \cref{fig:HH_fidelity_energy}~(a).

We also track the expectation value of the system Hamiltonian, $\langle H_S \rangle(t)$.
Since the quench starts in the ground state of $H_S$ with energy $E_0$, the quantity $\langle H_S\rangle - E_0$ has the direct interpretation of the energy injected into the system relative to its initial reference.
It is also the standard diagnostic for Floquet heating.
In periodically driven systems, the drive continuously injects energy and $\langle H_S \rangle(t)$ grows over time, eventually heats up to infinite temperature, invalidating any effective Hamiltonian description \cite{kuwaharaFloquetMagnusTheoryGeneric2016}.

While a direct comparison to Floquet methods is beyond the scope of this work, tracking $\langle H_S\rangle(t)$  provides a natural point of contact between our protocol and a Floquet-based implementation of the same target Hamiltonian.
This is summarized in \cref{fig:HH_fidelity_energy}~(b), where the engineered dynamics closely mirror the exact evolution under $H_T$, while remaining well below the infinite-temperature energy $\langle H_S\rangle_{T=\infty}$ throughout the simulated window.

Another natural probe for the Harper--Hofstadter Hamiltonian are the edge currents, defined to be the sum of bond currents on a given edge of the lattice.
The current operator associated with a bond $e = \left((x,y),(x',y')\right)$ is given by
\begin{align}
  J_{e} = -\i \left(\beta_{e} c_{(x',y')}^\dagger c_{(x,y)} - \beta_{e}^* c_{(x,y)}^\dagger c_{(x',y')}\right),
\end{align}
where $\beta_{e}$ denotes the tunnelling coefficient associated with the bond.
For example, the edge current for the left edge is given by
\begin{align}
  J_L = \sum_{y=0}^{L_y-1} J_{(0,y),(0,y+1)},
\end{align}
where $L_y$ is the length of the lattice in the $y$-direction.

We compare the edge currents under exact target evolution, obtained by direct exponentiation of the target Hamiltonian, with those produced by our engineered dynamics and find close agreement, as can be seen in \cref{fig:edge_currents}.
The chiral edge currents that are absent in the native lattice, are faithfully reproduced by the engineered dynamics.
An artificial gauge field of the kind ordinarily realized through laser-assisted tunnelling or periodic lattice modulation \cite{aidelsburgerRealizationHofstadterHamiltonian2013} is thus generated on a 1088-mode lattice, with neither periodic driving nor any modification of the native couplings, constrained only by the connectivity of $H_S$.
See \cref{sec:Harper--Hofstadter_appendix} for more details on the simulation techniques used to simulate the Harper--Hofstadter model.

\begin{figure*}[t]
    \centering
    \begin{minipage}{0.40\textwidth}
      \centering
      (a)
      \begin{tikzpicture}[>={Stealth[length=3mm, width=2mm]}, line cap=round, line join=round, scale=1.8]

        \coordinate (BL) at (0,0);
        \coordinate (BR) at (1,0);
        \coordinate (TM) at (0.5, 0.866);

        \draw[gray!50] (BL) -- ++(-0.6, 0);
        \draw[gray!50] (BR) -- ++(0.6, 0);
        \draw[gray!50] (BL) -- ++(-0.3, -0.520);
        \draw[gray!50] (BR) -- ++(0.3, -0.520);
        \draw[gray!50] (BL) -- ++(0.3, -0.520);
        \draw[gray!50] (BR) -- ++(-0.3, -0.520);
        \draw[gray!50] (TM) -- ++(-0.3, 0.520);
        \draw[gray!50] (TM) -- ++(0.3, 0.520);
        \draw[gray!50] (TM) -- ++(-0.6, 0);
        \draw[gray!50] (TM) -- ++(0.6, 0);

        \draw[thick] (BL) -- (BR);
        \draw[thick] (BL) -- (TM);
        \draw[thick] (BR) -- (TM);

        \draw[->, thick,fluxblue]
            (BL) to[bend right=35]
            node[midway, below=2pt] {$c^\dagger_{x+1,y} c_{x,y}$}
            (BR);

        \draw[->, thick,fluxblue]
            (BL) to[bend left=35]
            node[midway, above left=-2pt] {$c^\dagger_{x,y+1} c_{x,y}$}
            (TM);

        \draw[->, thick,fluxblue]
            (BR) to[bend right=35]
            node[midway, above right=-2pt] {$c^\dagger_{x-1,y+1} c_{x,y}$}
            (TM);

        \fill (BL) circle (1.2pt);
        \fill (BR) circle (1.2pt);
        \fill (TM) circle (1.2pt);

    \end{tikzpicture}
    \end{minipage}
    \hfill
    \begin{minipage}{0.55 \textwidth}
      \centering
      (b)
      \begin{tikzpicture}[>={Stealth[length=3mm, width=2mm]}, line cap=round, line join=round, scale=1.8]

        \begin{scope}[xshift=-0.2cm]
            \coordinate (BL) at (0,0);
            \coordinate (BR) at (1,0);
            \coordinate (TM) at (0.5, 0.866);

            \draw[gray!50] (BL) -- ++(-0.3, 0);
            \draw[gray!50] (BR) -- ++(0.3, 0);
            \draw[gray!50] (BL) -- ++(-0.15, -0.26);
            \draw[gray!50] (BR) -- ++(0.15, -0.26);
            \draw[gray!50] (BL) -- ++(0.15, -0.26);
            \draw[gray!50] (BR) -- ++(-0.15, -0.26);
            \draw[gray!50] (TM) -- ++(-0.15, 0.26);
            \draw[gray!50] (TM) -- ++(0.15, 0.26);
            \draw[gray!50] (TM) -- ++(-0.3, 0);
            \draw[gray!50] (TM) -- ++(0.3, 0);

            \draw[thick] (BL) -- (BR);
            \draw[thick] (BL) -- (TM);
            \draw[thick] (BR) -- (TM);

        \draw[->, thick,fluxblue]
            (BL) to[bend right=35]
            node[midway, below=2pt] {$-J$}
            (BR);

        \draw[->, thick,fluxblue]
            (BL) to[bend left=35]
            node[midway, above left=-2pt] {$-J$}
            (TM);

        \draw[->, thick,fluxblue]
            (BR) to[bend right=35]
            node[midway, above right=-2pt] {$-J$}
            (TM);

            \fill (BL) circle (1.2pt);
            \fill (BR) circle (1.2pt);
            \fill (TM) circle (1.2pt);
        \end{scope}

        \draw[->, very thick] (1.55,0.433) -- (2.05,0.433);
        \node at (1.8,0.2) {\small FHE};

        \begin{scope}[xshift=2.7cm]
            \coordinate (BL) at (0,0);
            \coordinate (BR) at (1,0);
            \coordinate (TM) at (0.5, 0.866);

            \draw[gray!50] (BL) -- ++(-0.3, 0);
            \draw[gray!50] (BR) -- ++(0.3, 0);
            \draw[gray!50] (BL) -- ++(-0.15, -0.26);
            \draw[gray!50] (BR) -- ++(0.15, -0.26);
            \draw[gray!50] (BL) -- ++(0.15, -0.26);
            \draw[gray!50] (BR) -- ++(-0.15, -0.26);
            \draw[gray!50] (TM) -- ++(-0.15, 0.26);
            \draw[gray!50] (TM) -- ++(0.15, 0.26);
            \draw[gray!50] (TM) -- ++(-0.3, 0);
            \draw[gray!50] (TM) -- ++(0.3, 0);

        \draw[->, thick,fluxblue]
            (BL) to[bend right=35]
            node[midway, below=2pt] {$-J\e^{\i\pi/2}$}
            (BR);

        \draw[->, thick,fluxblue]
            (BL) to[bend left=35]
            node[midway, above left=-2pt] {$-J\e^{\i\pi}$}
            (TM);

        \draw[->, thick,fluxblue]
            (BR) to[bend right=35]
            node[midway, above right=-2pt] {$-J\e^{-\i\pi/2}$}
            (TM);

            \draw[thick] (BL) -- (BR);
            \draw[thick] (BL) -- (TM);
            \draw[thick] (BR) -- (TM);
            \node[fluxblue] at (0.5, 0.289) { $\Phi = \pi$};

            \fill (BL) circle (1.2pt);
            \fill (BR) circle (1.2pt);
            \fill (TM) circle (1.2pt);
        \end{scope}
    \end{tikzpicture}
    \end{minipage}

    \caption{(a) The three nearest-neighbour tunnelling operators on a triangular lattice plaquette, with each arrow indicating the direction of hopping for one term in $H_S$. (b) Fermionic Hamiltonian engineering on a triangular plaquette. The system Hamiltonian has uniform tunnelling coefficients (left), which the protocol maps to engineered coefficients with phases $\e^{\i\pi/2}$, $\e^{\i\pi}$, $\e^{-\i\pi/2}$ on the three bonds, producing a $\pi$-flux per plaquette (right). Labels indicate the coefficient of the directed hopping along each arrow.}
    \label{fig:fhe_triangular}
\end{figure*}

\subsubsection{Artificial gauge fields on a triangular lattice}
\label{sec:triangular}

In certain cases, it is possible to bypass the \ac{LP}, and construct the required pulse sequence analytically. We illustrate this by calculating the pulses necessary to engineer artificial gauge fields on a triangular lattice with a degenerate ground state \cite{struckEngineeringIsingXYSpin2013}.
We consider a quadratic system Hamiltonian on a triangular lattice with uniform nearest-neighbour tunnelling coefficients, where the tunnelling terms are defined as in \cref{fig:fhe_triangular} (a),
\begin{align}
\begin{split}
  H_S = -J\sum_{\langle x,y \rangle} \Big(&
  c_{x+1,y}^\dagger c_{x,y}
  + c_{x,y+1}^\dagger c_{x,y}\\
  &+ c_{x-1,y+1}^\dagger c_{x,y}
  + \mathrm{H.c.}
  \Big),
\end{split}
\end{align}
and introduce the following set of local pulses
\begin{align}
  V_1 &= \prod_{x,y} \exp\!\left(-\i\frac{\pi}{2} x\, n_{x,y}\right),\\
  V_2 &= \prod_{x,y} \exp\!\left(-\i\pi y\, n_{x,y}\right),\\
  V_3 &= \prod_{x,y} \exp\!\left(-\i\pi (x+y)\, n_{x,y}\right).
\end{align}
Applying these pulses and summing the resulting conjugated Hamiltonians yields the effective Hamiltonian
\begin{align}
\begin{split}
H_{\mathrm{eff}}
&=\sum_{j=1}^3 V_j^\dagger H_S V_j \\
&= -J\sum_{\langle x,y \rangle} \Big(
\e^{\i\pi/2} c_{x+1,y}^\dagger c_{x,y}
+ \e^{\i\pi} c_{x,y+1}^\dagger c_{x,y} \\
&\hspace{2.2cm}
+ \e^{-\i\pi/2}c_{x-1,y+1}^\dagger c_{x,y}
+ \mathrm{H.c.}
\Big), 
\end{split}
\end{align}
which corresponds to a triangular lattice with $\pi$ fluxes in each plaquette, see \cref{fig:fhe_triangular} (b),
\begin{align}
  \Phi = \pi.
\end{align}
The dispersion of the effective Hamiltonian
\begin{align}
E_k = -2J\left[\sin(k_x) - \cos(k_y)  + \sin(k_x-k_y)\right]
\end{align} 
has two global minima, reflecting the frustration of the lattice model, demonstrating that nontrivial gauge structures can be engineered analytically, without recourse to numerical optimization.

\subsection{Interacting Fermions and the Fermi--Hubbard model}
\label{sec:interactions}

We now extend our analysis to interacting Hamiltonians, and 
focus on interaction terms that are polynomials of the number operators,
i.e., of the form
\begin{equation}
I = \sum_{\emptyset \subsetneq \Omega \subseteq [n]} g_\Omega N_\Omega,
\qquad N_\Omega \coloneqq \prod_{j \in \Omega} n_j. 
\end{equation}
The resulting class of Hamiltonians includes the on-site Hubbard interaction and density-density interactions of arbitrary range. 
The local unitaries $V_j(\theta) = \e^{-i\theta n_j}$ commute with every $N_\Omega$, so the coefficients $g_\Omega$ are unchanged under conjugation by any pulse $V_\ind{b}$.
Consequently, when our engineering protocol is applied to a Hamiltonian that contains both quadratic and interaction terms, the quadratic part is modified as before, while the interaction part is only rescaled by a positive global factor.
This restriction is not an artifact of our particular choice of pulses: in \cref{sec:FHE_general} we elaborate on the fact that no set of local unitaries locally modifying the $n_j$ can do more than globally rescale interaction terms of this form.

Despite this limitation, our framework can still simulate a broad class of physically relevant interacting Hamiltonians, including those combining artificial gauge fields with on-site or density-density interactions, such as the Hofstadter--Hubbard model.
The key observation is that the global rescaling of the interaction term is itself a tunable knob in some experimental platforms, therefore the global scaling introduced by our engineering framework can be accounted for by a suitable choice of the interaction strength in the system Hamiltonian.
More precisely, if our protocol effectively results in an overall factor $\gamma > 0$ on the interaction term, we may simply set the system interaction strength to $U_S = U_T / \gamma$ to recover the desired target value $U_T$.
This approach requires a platform that allows the bare interaction strength to be tuned globally, as in optical lattices employing Feshbach resonances to control on-site interactions \cite{chinFeshbachResonancesUltracold2010a}.
Within this setting, our framework realizes interacting Hamiltonians with arbitrary complex tunnelling coefficients, bringing models such as the Hofstadter--Hubbard model within reach of programmable analogue simulation.

To see how this works, consider a system Hamiltonian of the form
\begin{align}
\label{eq:int_system_Hamiltonian}
H_S = K_S + I_S,
\end{align}
where
\begin{align}
\begin{split}
K_S &= \alpha \sum_{\{j,k\}\in E} c_j^\dagger c_k + \mathrm{H.c.}, \\
I_S &= U_S\sum_\Omega N_\Omega,
\end{split}
\end{align}
with a constant tunnelling parameter $\alpha \in \mathbb{R}$ and a globally tunable interaction strength $U_S\in \RR$.
Suppose we have identified a set of pulses $\{V_\ind{b}\}_{\ind{b}}$ and corresponding evolution times $\lambda_\ind{b}$ such that the quadratic tunnelling Hamiltonian $K_S$ is mapped to an engineered quadratic Hamiltonian
\begin{align}
K_T = \sum_{\{j,k\}\in E} \beta_{jk} c_j^\dagger c_k + \mathrm{H.c.}
\end{align}
This implies that the effective interaction term takes the form
\begin{align}
I_\mathrm{eff}
= \sum_{\ind{b}} \lambda_{\ind{b}}\, V_\ind{b}^\dagger H_{\mathrm{int}} V_\ind{b}
= \|\boldsymbol{\lambda}\|_{\ell_1} H_{\mathrm{int}}.
\end{align}
Applying this set of pulses to the full system Hamiltonian~\eqref{eq:int_system_Hamiltonian} therefore yields the effective Hamiltonian
\begin{align}
H_{\mathrm{eff}} = K_T + \|\boldsymbol{\lambda}\|_{\ell_1} I_S.
\end{align}
Hence, the interaction strength in the engineered model is rescaled by a global factor $\|\boldsymbol{\lambda}\|_{\ell_1}$, while the structure of the interaction term itself remains unchanged.
Therefore, by choosing the interaction strength of the system Hamiltonian suitably, i.e. $U_S = U_T/\|\boldsymbol{\lambda}\|_{\ell_1}$, we can simulate time evolution under a target Hamiltonian of the form
\begin{align}
  \label{eq:int_target_Hamiltonian}
  H_T = K_T + U_T \sum_\Omega N_\Omega.
\end{align}
In the remainder of this section, we present two applications of this method to interacting Hamiltonians.

\subsubsection{Tuning \texorpdfstring{$U/t$}{U/t} in the Fermi--Hubbard model}

\begin{figure}[t]
\centering
\resizebox{0.4\columnwidth}{!}{%
\begin{tikzpicture}[
  node/.style={circle, draw, thick, minimum size=5.5mm, inner sep=0pt},
  A/.style={node, fill=gray!20},
  B/.style={node, fill=white},
  edge/.style={thick}
]

  \foreach \x in {0,1,2,3} {
    \foreach \y in {0,1,2,3} {

      \ifnum\x<3
        \draw[edge] (\x,\y) -- (\the\numexpr\x+1\relax,\y);
      \fi
      \ifnum\y<3
        \draw[edge] (\x,\y) -- (\x,\the\numexpr\y+1\relax);
      \fi

      \pgfmathtruncatemacro{\parity}{mod(\x+\y,2)}
      \ifnum\parity=0
        \node[A] at (\x,\y) {};
      \else
        \node[B] at (\x,\y) {};
      \fi
    }
  }

  \begin{scope}[shift={(0,-1.2)}]
    \node[A] at (0,0) {};
    \node[anchor=west] at (0.4,0) {$A$};
    \node[B] at (2.6,0) {};
    \node[anchor=west] at (3.0,0) {$B$};
  \end{scope}

\end{tikzpicture}}
\caption{A $4\times 4$ square lattice with a bipartition into $A$ and $B$ sublattices.}
\label{fig:2D_lattice}
\end{figure}
\begin{figure*}[t]
  \centering
  \begin{minipage}{0.49\textwidth}
    \includegraphics[width = 0.95\columnwidth]{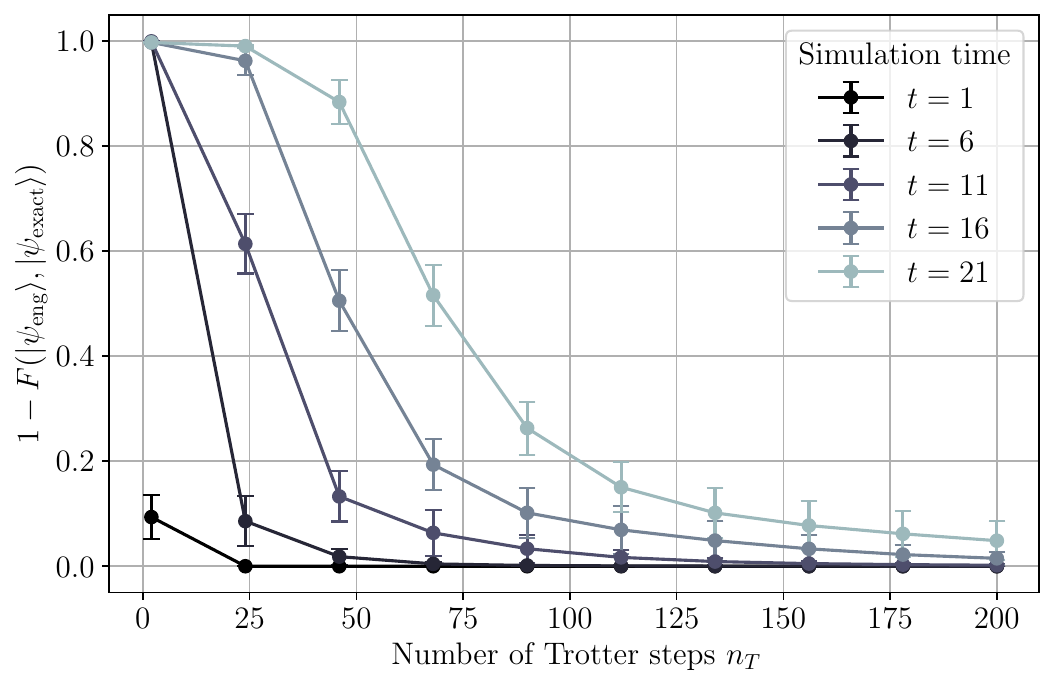}
    
    (a)
  \end{minipage}
  \begin{minipage}{0.49\textwidth}
    \includegraphics[width = 0.95\columnwidth]{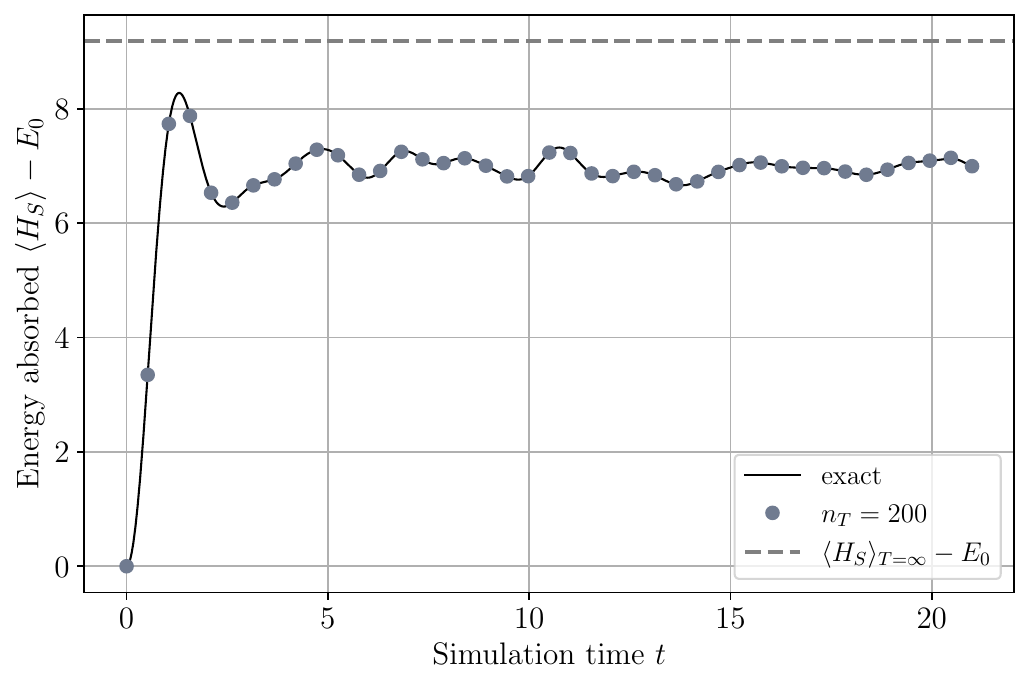}

    (b)
  \end{minipage}
\caption{
Numerical simulation of engineered Fermi--Hubbard dynamics on a one-dimensional chain with $L = 6$ sites, corresponding to $12$ fermionic modes, at half filling.
(a) Infidelity $1-F(\ket{\psi_\mathrm{eng}},\ket{\psi_\mathrm{exact}})$ between the engineered state $\ket{\psi_{\mathrm{eng}}}$ and the exact target state $\ket{\psi_{\mathrm{exact}}}$ as a function of the number of Trotter steps $n_T$.
The target Hamiltonians \eqref{eq:FH_target} contain quadratic hopping terms with complex coefficients sampled uniformly from the unit disk.
The engineered evolution converges systematically to the exact target evolution as $n_T$ is increased, over a range of target simulation times $T$.
Error bars show the standard error of the mean over $20$ random target Hamiltonians.
(b) Energy absorbed relative to the ground-state energy of the native Hamiltonian, $\langle H_S\rangle(t)-E_0$, for a representative target Hamiltonian at $n_T=200$.
The engineered dynamics closely follow the exact target dynamics, obtained by direct exponentiation of $H_T$, over the simulated time window.
The dashed line indicates the energy corresponding to the infinite-temperature state in the relevant particle-number sector, $\langle H_S\rangle_{T=\infty}-E_0$, where $\langle H_S\rangle_{T=\infty} = U_S L/4 = 3U_S/2$.
As in the Harper--Hofstadter case, the engineered dynamics remain close to the target trajectory and show no drift toward the infinite-temperature value within the simulated time window, as would be expected from Floquet based methods.
}
  
  \label{fig:1D_sims}
\end{figure*}

As a first application of our method to interacting Hamiltonians, we consider tuning the $U/t$ ratio of the Fermi--Hubbard model on a 2D square lattice with nearest-neighbour tunnelling,
\begin{align}
  \label{eq:2D_Fermi--Hubbard}
  H_{\FH} = -t \sum_{\langle j,k\rangle,\sigma\in\{\uparrow,\downarrow\}} \left(c_{j\sigma}^\dagger c_{k\sigma} + \mathrm{H.c.}\right) + U \sum_{j} n_{j\uparrow} n_{j\downarrow}.
\end{align}
We take the pulse sequence $\{\1,V_A\}$ with evolution times $\{\lambda_0,\lambda_1\}$, where $V_A \coloneqq \prod_{j\in A} V_j(\pi)$ and $A$ denotes one of the two sublattices of the bipartite square lattice (see \cref{fig:2D_lattice}). Since $V_j(\pi)^\dagger\, c_{j\sigma}^\dagger\, V_j(\pi) = -c_{j\sigma}^\dagger$, conjugation by $V_A$ flips the sign of every nearest-neighbour hopping -- each of which connects sublattices $A$ and $B$ -- while leaving the on-site interaction invariant. The effective Hamiltonian therefore reads
\begin{align}
  \begin{split}
  H_\mathrm{eff}  &= \lambda_0 H_\FH + \lambda_1\, V_A^\dagger H_\FH V_A \\
                  &= -\underbrace{t(\lambda_0-\lambda_1)}_{\equiv\, t_T}\sum_{\langle j,k\rangle,\sigma}\left( c_{j\sigma}^\dagger c_{k\sigma} + \mathrm{H.c.}\right) \\
                  &\quad + \underbrace{U(\lambda_0+\lambda_1)}_{\equiv\, U_T}\sum_{j} n_{j\uparrow} n_{j\downarrow},
  \end{split}
\end{align}
i.e., a Fermi--Hubbard Hamiltonian with rescaled interaction strength to tunnelling ratio
\begin{align}
  \frac{U_T}{t_T} = \gamma\,\frac{U}{t}, \qquad \gamma \coloneqq \frac{\lambda_0+\lambda_1}{\lambda_0-\lambda_1}.
\end{align}

Since $\lambda_0,\lambda_1\geq 0$, we have $|\gamma|\geq 1$, meaning the construction can only increase the absolute $U/t$ ratio, never decrease it. The excluded case $\lambda_0=\lambda_1$ corresponds to the atomic limit $t_T=0$. 

Thus, the effective ratio $U_T/t_T$ is set by the evolution times rather than by the microscopic parameters that fix $U$ and $t$ in the system Hamiltonian.
In analogue platforms the two couplings are usually controlled by a common knob, e.g. the optical-lattice depth, which restricts the range of $U/t$ ratios that can be feasibly implemented in the lab.
Our construction provides a workaround.
Starting from a system Hamiltonian in a moderate, experimentally convenient regime, the effective $U/t$ can be tuned up by a factor $|\gamma|\geq 1$.

\subsubsection{Artificial gauge fields in the 1D Fermi--Hubbard chain}
\label{sec:1D_sim}
We now consider the task of simulating artificial gauge fields in an interacting system.
To this end, we consider a one-dimensional Fermi--Hubbard system Hamiltonian where, for concreteness, the uniform tunnelling coefficient is set to one,
\begin{align}
  \label{eq:FH_system}
  H_S = \sum_{j\in[L-1],\sigma} \left( c_{j+1,\sigma}^\dagger c_{j,\sigma} + \mathrm{H.c.} \right)
  + U_S \sum_{j} n_{j\uparrow} n_{j\downarrow}.
\end{align}
Using the Hamiltonian engineering protocol, we simulate time evolution under randomly generated target Hamiltonians of the form
\begin{align}
  \label{eq:FH_target}
  \begin{split}
  H_T =
  \sum_{j\in[L-1],\sigma}
  \left(
  \beta_{(j+1,j),\sigma}\,
  c_{j+1,\sigma}^\dagger c_{j,\sigma}
  + \mathrm{H.c.}
  \right)\\
  + U_T \sum_{j} n_{j\uparrow} n_{j\downarrow},
  \end{split}
\end{align}
where $\beta_{(j+1,j),\sigma}\in \mathbb{C}$ is sampled uniformly from the unit disc on the complex plane, i.e. $|\beta_{(j+1,j),\sigma}|\leq 1$, and $U_S$ is set such that $U_T =\|\ind{\lambda}\|_{\ell_1} U_S = 2$.

As in the Harper--Hofstadter case,
we prepare the system to be in the ground state $\ket{\psi_0}$ of the system Hamiltonian~\eqref{eq:FH_system} and quench to the target Hamiltonian~\eqref{eq:FH_target}.
We then benchmark the engineered evolution $U_\mathrm{eng}(t)$, obtained via the second-order Trotter formula \eqref{eq:second_order_trotter}, against the exact evolution $U_\mathrm{exact}(t)$, obtained by direct exponentiation of $H_T$ using the state fidelity $F(\ket{\psi_\mathrm{eng}},\ket{\psi_\mathrm{exact}})$ defined in \cref{eq:fidelity}, and observe that the engineered state converges to the target dynamics for long simulation times and for a suitably chosen number of Trotter steps, see \cref{fig:1D_sims}~(a).
To complement the state fidelity with a physically interpretable observable and to have a point of comparison against Floquet based approaches, we track the expectation value of energy $\langle H_S\rangle$.
This is summarized by \cref{fig:1D_sims}~(b), where we see that the engineered dynamics closely mirrors the exact dynamics while remaining below the energy of the maximally mixed state.

Quenches of this kind are a natural setting for early practical quantum advantage.
After a quench, entanglement typically grows rapidly with time, so the cost of classical simulation grows quickly and becomes impractical beyond modest system sizes and evolution times \cite{schuchEntropyScalingSimulability2008, schollwoeckDensitymatrixRenormalizationGroup2011, flanniganPropagationErrorsQuantitative2022}.
The small system sizes accessible to our exact-diagonalization benchmarks are themselves a reflection of this classical hardness, and a scaled-up analogue realization would operate in precisely the regime where classical simulation is no longer feasible.

\section{Hamiltonian engineering for number-operator-free Hamiltonians}
\label{sec:FHE_general}

In this section, we characterize a general class of Hamiltonians that is naturally amenable to our Hamiltonian engineering approach with local pulses.
First, we show that conjugation-based methods using local pulses cannot modify terms which are monomials of the number operators, and then focus on the class of number-operator-free fermionic Hamiltonians. 
This class includes tunnelling terms in quadratic Hamiltonians as a special case, but also accommodates higher-order terms.
The algebraic structure of these Hamiltonians closely parallels Pauli decomposition used for qubit Hamiltonians, which allows many concepts from conjugation-based qubit Hamiltonian engineering \cite{basslerGeneralEfficientRobust2025a} to be carried over to the fermionic setting.
The key difference in the case of fermionic Hamiltonians is that the relevant degrees of freedom are described by complex numbers rather than real numbers. 

\subsection{No-go result for conjugating interaction terms}
\label{sec:nogo}

We begin by showing that conjugation-based methods using local pulses alone are not sufficient to modify interaction terms.
We define
\begin{align}
  N_\Omega \coloneqq \prod_{j\in\Omega} n_j \quad \text{where} \quad \Omega \subseteq [n],
\end{align}
to be an arbitrary monomial of the number operators with nontrivial support on a subset $\Omega$ of all fermionic modes, such as the two-body density-density interactions $n_j n_k$.
Interaction terms are polynomials of number operators
\begin{align}
  \label{eq:interaction_term}
  I = \sum_{\emptyset \subsetneq \Omega\subseteq [n]}g_\Omega N_\Omega,
\end{align}
where $g_\Omega$ are real coefficients.
As we argued before, the local unitaries $V_j(\theta)$ commute with the number operators, i.e.
\begin{align}
  [n_j,V_k(\theta)] = 0
\end{align}
for all $j, k \in [n]$ and for all $\theta \in [0,2\pi)$.
This means that all monomials $N_\Omega$, and therefore also the interaction terms $I$, are invariant under conjugation by the local unitaries 
\begin{align}
  V_k^\dagger(\theta) N_\Omega V_k(\theta) = N_\Omega.
\end{align}
Hence, the local unitaries $V_k(\theta)$ cannot be used to engineer local effective interaction strengths for interacting Hamiltonians.

Moreover, we show that \emph{any} conjugation-based method that locally rescales $n_j$ cannot be used to modify interaction terms.
To see this, consider the set of local fermionic unitaries $\{O_j(b_j)\}_{b_j\in\mc{I}}$ acting on mode $j$ and labelled by a finite index set $\mc{I}$.
Conjugation under unitaries leaves the spectrum invariant and $\operatorname{spec}(n_j) = \{0,1\}$, therefore
\begin{align}
  \label{eq:n_conjugation}
  O_j(b_j)^\dagger n_j O_j(b_j) =  n_j, \quad \forall b_j.
\end{align}

Just like in \cref{eq:decomposition}, given system and target interactions $I_S$ and $I_T$ of the form \eqref{eq:interaction_term}, we would like to find a set of local pulses $\{O_\ind{b}\}_{\ind{b}\in\mc{I}}$, where
\begin{align}
  \label{eq:pulses_for_n}
  O_\ind{b} = \prod_{j\in [n]} O_j(b_j),
\end{align}
such that the target interaction term $I_T$ can be written as a conical combination
\begin{align}
  I_T = \sum_{\ind{b}\in\mc{I}} \lambda_\ind{b} O_\ind{b}^\dagger I_S O_\ind{b}.
\end{align}
We now show that \cref{eq:n_conjugation} implies that the target interaction term is restricted to be of the form $\gamma I_S$, for $\gamma\geq 0$, i.e. the target interaction is a global positive scaling of the system interaction.

We recall the following definition which will be needed to state the upcoming proposition.
For a set of vectors $\mathcal{V} = \{\ind{v}_j\}_{j\in [n]}$, the space spanned by non-negative linear combinations of elements from $\mc{V}$ is called the conical hull of $\mc{V}$, which is denoted by 
\begin{align}
  \label{eq:cone_defn}
  \cone(\mc{V}) \coloneq \Set*{ \sum_j \lambda_j \ind{v}_j \given \lambda_j \ge 0 \text{ for all } j\in [n] }. 
\end{align}

\begin{proposition}
  \label{thm:nogo} 
  The conical span of all possible conjugations of an interaction term $I_S$ under local pulses $\{O_\ind{b}\}_{\ind{b}\in \mc{I}}$ is a ray in the direction of $I_S$,
  \begin{align}
    \operatorname{cone}\left(\{O_\ind{b}^\dagger I_S O_\ind{b}\}_\ind{b}\right) = \{ \gamma I_S \ | \ \gamma \geq 0  \}.
  \end{align}

\end{proposition}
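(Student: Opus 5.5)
The plan is to show that every single conjugated term equals $I_S$, i.e.\ $O_\ind{b}^\dagger I_S O_\ind{b} = I_S$ for all $\ind{b}\in\mc{I}$. Once this holds, both inclusions follow directly from the definition \eqref{eq:cone_defn} of the conical hull. Any conical combination reads $\sum_\ind{b}\lambda_\ind{b} O_\ind{b}^\dagger I_S O_\ind{b} = \bigl(\sum_\ind{b}\lambda_\ind{b}\bigr) I_S$, which is $\gamma I_S$ with $\gamma = \sum_\ind{b}\lambda_\ind{b}\ge 0$. Conversely, $\gamma I_S$ with $\gamma \ge 0$ is obtained by putting weight $\gamma$ on a single pulse, or by setting all $\lambda_\ind{b}=0$ when $\gamma=0$.

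To prove invariance, I would work multiplicatively, using that conjugation by a unitary is an algebra automorphism. First, for a pulse $O_\ind{b} = \prod_j O_j(b_j)$ as in \eqref{eq:pulses_for_n}, I need $O_\ind{b}^\dagger n_k O_\ind{b} = n_k$ for every $k$. The factors $O_j(b_j)$ with $j\neq k$ act on other modes. Since $n_k$ is parity-even, I take \emph{local} to mean that these factors commute with $n_k$; I will read this off as part of the locality assumption (which holds, e.g., for parity-even local unitaries such as the $V_j(\theta)$). Then $O_\ind{b}^\dagger n_k O_\ind{b} = O_k(b_k)^\dagger n_k O_k(b_k) = n_k$ by \cref{eq:n_conjugation}.

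Next comes the monomials. Inserting $O_\ind{b} O_\ind{b}^\dagger = \1$ between factors gives $O_\ind{b}^\dagger N_\Omega O_\ind{b} = \prod_{j\in\Omega} O_\ind{b}^\dagger n_j O_\ind{b} = N_\Omega$. By linearity, and since the $g_\Omega$ are real scalars, $O_\ind{b}^\dagger I_S O_\ind{b} = I_S$.

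The main obstacle is justifying \cref{eq:n_conjugation} itself. Spectrum invariance alone gives only $O^\dagger n_j O$ as some projector of the same rank, not $n_j$. I would argue as follows. A local fermionic unitary on mode $j$ lives in the algebra generated by $c_j$ and $c_j^\dagger$. Physical locality together with the parity superselection rule makes it parity-even, so it is diagonal in $\{\ket{0},\ket{1}\}$ of mode $j$ and commutes with $n_j$. The alternative, a particle-hole swap $c_j\leftrightarrow c_j^\dagger$, is odd and non-number-conserving, and I would exclude it by that assumption. I would state this assumption explicitly and treat \cref{eq:n_conjugation}, taken as given earlier in the paper, as the precise input.
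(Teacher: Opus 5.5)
Your proposal is correct and follows essentially the paper's route: factorize $O_\ind{b}^\dagger N_\Omega O_\ind{b}$ into single-mode conjugations, apply \cref{eq:n_conjugation} to each factor, and sum to obtain $\|\boldsymbol{\lambda}\|_{\ell_1} I_S$. You are also right that spectrum invariance alone does not give \cref{eq:n_conjugation} (a particle--hole swap sends $n_j \mapsto \1 - n_j$), so an extra hypothesis is needed---the paper's is that the pulses only rescale $n_j$, yours is parity-evenness---and your explicit reverse inclusion fills in a step the paper leaves implicit.
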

\begin{proof}
  \begin{align}
    \sum_{\ind{b}} \lambda_\ind{b} O_\ind{b}^\dagger I_S O_\ind{b} &= \sum_{\Omega,\ind{b}}g_\Omega\lambda_\ind{b} \   O_\ind{b}^\dagger N_\Omega O_\ind{b}\\
     &=  \sum_{\Omega,\ind{b}}g_\Omega\lambda_\ind{b}\prod_{j\in\Omega} O_j^\dagger(b_j)n_j O_j(b_j)\\ 
    & = \sum_{\Omega,\ind{b}}\lambda_\ind{b}g_\Omega\prod_{j\in\Omega}n_j \\
    & = \|\boldsymbol{\lambda}\|_{\ell_1} \sum_{\Omega}g_\Omega N_\Omega = \|\boldsymbol{\lambda}\|_{\ell_1} I_S.
  \end{align}
  where $\|\boldsymbol{\lambda}\|_{\ell_1} \coloneqq \sum_\ind{b} \lambda_\ind{b} \geq 0$ is the $\ell_1$-norm of $\boldsymbol{\lambda}$.
\end{proof}
\subsection{Number-operator-free Hamiltonians}
\label{sec:tritstring_Hamiltonians}

Guided by our findings in \cref{sec:nogo}, we now focus on the class of number-operator-free Hamiltonians.
Let $\mathbb{T} = \{-1,0,+1\}$ denote the set of \emph{balanced ternary digits}.
We define a family of functions $\{C^{(j)}\}_{j\in [n]}$, $C^{(j)}:~\mathbb{T}\to~L(\mathcal{F})$, where $L(\mathcal{F})$ is the space of linear operators on the Fock space $\mathcal{F}$, by
\begin{align}\begin{split}
  \label{eq:trit}
	C^{(j)}_a
	\coloneqq \begin{cases} 	
		\1 &\text{if } a =  \phantom{+}0,\\
		\cdag_j &\text{if } a = +1,\\
		c_j &\text{if } a = -1. 
	\end{cases}
\end{split}\end{align}
This definition enables us to identify the operators $\1, c_j^\dagger, c_j$  with the numbers $0,+1,-1$, respectively. 

Next, we define $C:~\TT^n\to~L(\mathcal{F})$ by 
\begin{align}
  \label{eq:defn_C}
	C_{\ind{a}} \coloneqq \prod_{j\in[n]} C^{(j)}_{a_j} = C^{(1)}_{a_1}C^{(2)}_{a_2}\dots C^{(n)}_{a_n},
\end{align}
which allows us to identify 
a string of fermionic annihilation and creation operators on different modes with 
an $n$-trit string $\ind{a} = (a_1,a_2,\dots, a_n) \in \TT^n$.
We often refer to strings $\ind{a}\in\TT^n$ as \emph{tritstrings}. 
For example, the operator $C_{(1,0,-1,-1)} = \cdag_1 c_3 c_4$ is associated with the tritstring~$(1,0,-1,-1)$. 
The order in which the operators on the right-hand side of \cref{eq:defn_C} appear is important because of the fermionic anticommutation relations, so the operator is defined such that the mode index increases as we go from left to right. 
A useful observation is that under the map $C$, additive inverses $-\ind{a},\ind{a}\in \TT^n$ map to operators which are (up to a sign) Hermitian conjugates of each other:
\begin{equation}
C_{-\ind{a}}= (-1)^w C_\ind{a}^\dagger 
\quad \text{with}\quad 
w=\binom{\wt(\ind{a})}{2}, 
\end{equation}
where $\wt(\ind{a}) 
\coloneqq\abs{\Set*{j\in [n]\given a_j\neq 0}}
$
is the \emph{Hamming weight} of $\ind{a}$.

Any fermionic Hamiltonian $H$ whose terms do not involve number operators can be written as a complex linear combination of the operators $C_\ind{a}$,
\begin{equation}\label{eq:number-freeH}
  H = \sum_{\ind{a}\in\TT^n} \alpha_{\ind{a}} C_\ind{a}
\end{equation}
with $\alpha_\ind{a}\in \CC$. 
Since $H$ is required to be Hermitian, the coefficients associated with inverse tritstrings are not independent, but are related by 
\begin{equation}
\alpha_{-\ind{a}} 
= 
(-1)^w\alpha_\ind{a}^*.
\end{equation} 

As a consequence, each pair $\{\ind{a},-\ind{a}\}\subset \TT^n\setminus\{0\}$ contributes only a single complex degree of freedom.
To avoid this redundancy, it is convenient to choose a bipartition
\begin{equation}\label{eq:trit_split}
  \mathcal{T}_n\sqcup \left(-\mathcal{T}_n\right) = \TT^n\setminus\{\ind{0}\},
\end{equation}
such that exactly one representative from each pair $\{\ind{a},-\ind{a}\}$ is contained in $\mathcal{T}_n$, i.e. $\ind{a}\in\mathcal{T}_n$ if and only if $-\ind{a}\notin\mathcal{T}_n$. 
This allows us to parameterize Hamiltonians using a minimal set of independent coefficients, which reduces the dimension of and redundancy in the Hamiltonian engineering problem and we can write the Hamiltonian \eqref{eq:number-freeH} as 
\begin{align}
  H = \sum_{\ind{a}\in\mathcal{T}_n}\alpha_\ind{a} C_\ind{a} + \mathrm{H.c.},
\end{align}
where we have dropped the term proportional to the identity, which corresponds to a physically unobservable constant energy shift.
For brevity, wherever it does not produce any confusion, 
we adopt the convention that the Hermitian conjugate terms are implied, but not indicated explicitly, i.e.
\begin{align}
  H = \sum_{\ind{a}\in\mathcal{T}_n}\alpha_\ind{a} C_\ind{a}.
\end{align}

For concreteness, one admissible choice of bipartition of $\TT^n\setminus\{\boldsymbol{0}\}$ is given by the set of tritstrings of length $n$, for which the first non-zero entry is ``$1$''.
As a regular expression
\begin{align}
\label{eq:regex}
\mathcal{T}_n \coloneqq \{\ind{a}\in\TT^n \ |\  \ind{a}\in 0^*1\TT^* \},
\end{align}
meaning that $\mc{T}_n$ is the set of tritstrings of length $n$ of the form $(0,0,\dots,0,1,\ind{a}')$ with a shorter tritstring $\ind a'$. 
Some elements from $\mathcal{T}_n$ and $-\mathcal{T}_n$ constructed in this way are explicitly given in \cref{tab:SminusS}.
The total number of terms in each column of \cref{tab:SminusS} is
\begin{align}
	\sum_{k=0}^{n-1}3^k =\frac{3^n-1}{2}. 
\end{align}
So, this is indeed a bipartition, since $|\TT^n\setminus\{\ind{0}\}| = 3^n-1$.
Throughout this paper, we will use the definition \eqref{eq:regex} for $\mathcal{T}_n$.

\begin{table}[t]
\centering
\setlength{\tabcolsep}{8pt}
\renewcommand{\arraystretch}{1.2}

\begin{tabular}{cc}
\toprule
$\mathcal{T}_n$ & $-\mathcal{T}_n$ \\
\midrule

$\begin{aligned}
&(0,\cdots,0,0,1),\\
&(0,\cdots,0,1,0),\\
&(0,\cdots,0,1,1),\\
&(0,\cdots,0,1,-1),
\end{aligned}$ &
$\begin{aligned}
&(0,\cdots,0,0,-1),\\
&(0,\cdots,0,-1,0),\\
&(0,\cdots,0,-1,-1),\\
&(0,\cdots,0,-1,1),
\end{aligned}$ \\

$\vdots$ & $\vdots$ \\

\bottomrule
\end{tabular}

\caption{Example elements of the sets $\mathcal{T}_n$ and $-\mathcal{T}_n$.}
\label{tab:SminusS}
\end{table}

The notation introduced in this section is useful to write down the conjugation behavior of the operators $C_\ind{a}$. Using the definitions \eqref{eq:trit} and \eqref{eq:defn_C}, and the conjugation \eqref{eq:conjugations} of annihilation and creation operators we find 
\begin{align}
  \label{eq:tritstring_conjugation}
  \begin{split}
  V_\ind{b}^\dagger(\theta) C^{(j)}_{a_j} V_\ind{b}(\theta) &=  \e^{\i a_jb_j\theta} C^{(j)}_{a_j}, 
  \\
  V_\ind{b}^\dagger(\theta) C_\ind{a} V_\ind{b}(\theta) &=  \e^{\i\theta \ind{a}\cdot\ind{b}} C_\ind{a},
  \end{split}
\end{align}
where $\ind{a}\cdot \ind{b} = \sum_{j\in[n]} a_jb_j \in \ZZ$.

We focus on local unitaries defined by \cref{eq:V_definition} with $\theta = \frac{2\pi}{3}$,
\begin{align}
  \begin{split}
      \label{eq:local_ops_3}
    	V_j&\coloneqq V_j\left(\frac{2\pi}{3}\right) = \exp{\left(-\i \frac{2\pi}{3} n_j\right)}, 
      \\
      V_j^\dagger &= V_j^{-1} = V_j \left(-\frac{2\pi}{3}\right).
  \end{split}
\end{align}
As we discuss later, the particular choice of phase $2\pi/3$ leads to the smallest member of a family of Hamiltonian engineering protocols, which can efficiently simulate target Hamiltonians with arbitrary complex coefficients. 
Applying these operators in parallel we obtain the pulses~\eqref{eq:defn_pulse}
\begin{align}
  \label{eq:pulses_23}
  V_\ind{b} &\coloneqq \prod_{j\in[n]} V_j^{b_j},
\end{align}
where $\ind{b}\in\TT^n$ and $b_j = \pm 1$ indicates we apply the local operation $V_j(\pm \frac{2\pi}{3})$ and $b_j = 0$ implies we apply the identity operator at mode $j$. 
Conveniently, both terms in number-operator-free Hamiltonians $C_\ind{a}$ and local pulses $V_\ind{b}$ are indexed by tritstrings in $\TT^n$.

In terms of the primitive third root of unity $\omega \coloneqq \e^{2\pi\i/3}$ and for this particular set of pulses, \cref{eq:tritstring_conjugation} becomes
\begin{align}
  \label{eq:conjugation}
  V_\ind{b}^\dagger C_\ind{a} V_\ind{b} = \omega^{\ind{a}\cdot{\ind{b}}}C_\ind{a}.
\end{align}
For a single mode, tabulating the phases obtained via conjugation yields the Fourier matrix  \cite{banica2024invitationhadamardmatrices} of the cyclic group of order 3
\begin{align}
	\label{eq:fourier_3}
	F^{(3)} &= \begin{pmatrix}
		1 & 1 & 1\\
		1 & \omega & \omega^2\\
		1 & \omega^2 & \omega 
	\end{pmatrix}.
\end{align}

We define $F^{(3)}\big|_{\mathcal{T}_1}$, the restriction of $F^{(3)}$ on row indices restricted to be taken from $\mathcal{T}_1=\{1\}$, 

\begin{align}
  F^{(3)}\big|_{\mathcal{T}_1} = \begin{pmatrix}
		1 & \omega & \omega^2
	\end{pmatrix}.
\end{align} 
For $n$ modes, we define the matrix $F\in\CC^{d \times c}$, by
\begin{equation}\label{eq:def:F}
  F_{\ind{a}\ind{b}}\coloneqq \omega^{\ind{a}\cdot\ind{b}},
\end{equation}
for $\ind{a}\in\mathcal{T}_n$ and $\ind{b}\in\TT^n$, where $d = |\mathcal{T}_n|$ is the number of rows and $c=3^n$ is the number of columns. 

Therefore, the conjugations \eqref{eq:conjugation} produce phases that correspond to the entries of the matrix $F$, which is a submatrix of the Fourier matrix of the group $\ZZ_3^n$ obtained by restricting the row indices to be taken from $\mathcal{T}_n$, i.e.\ $F = (F^{(3)})^{\otimes n}\big|_{\mathcal{T}_n}$. 

We denote the conical hull of a set for vectors $\mc{V}$ as $\cone(\mc{V})$ defined in \cref{eq:cone_defn}, and the set of columns of a matrix $F$ as $\col(F)\subset \CC^d$.
These definitions allow us to state the following lemma, which is useful for proving our main \cref{thm:main}.
A proof of the lemma can be found in \cref{ap:lemma_proof}.

\begin{lemma}
  \label{lem:main}
  The conical hull of the set of columns of $F\in\CC^{d \times c}$ is the complex vector space $\CC^d$ with $d = \tfrac{1}{2}(3^n-1)$, i.e. $\cone(\col(F)) = \CC^d$. 
  Hence, as a map $F:\RR_{\geq 0}^c \to \CC^d$, $F$ is surjective.
\end{lemma}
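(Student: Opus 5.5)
Identify $\CC^d$ with $\RR^{2d}$; the claim is that the columns of $F$ positively span this real space. By the standard duality for finitely generated cones (Farkas), $\cone(\col(F))=\CC^d$ if and only if there is no nonzero real linear functional that is nonnegative on every column. Every real functional on $\CC^d$ has the form $\ind{y}\mapsto \Re\langle \ind{z},\ind{y}\rangle$ for some $\ind{z}\in\CC^d$. It therefore suffices to show the following: if $\ind{z}\in\CC^d$ satisfies $\Re\big(\sum_{\ind{a}\in\mathcal{T}_n} \bar z_\ind{a}\,\omega^{\ind{a}\cdot\ind{b}}\big)\geq 0$ for all $\ind{b}\in\TT^n$, then $\ind{z}=0$.

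To show this, define the real trigonometric polynomial on $\ZZ_3^n$
\begin{align}
  f(\ind{b}) \coloneqq \Re\Big(\sum_{\ind{a}\in\mathcal{T}_n}\bar z_\ind{a}\,\omega^{\ind{a}\cdot\ind{b}}\Big) = \frac12\sum_{\ind{a}\in\mathcal{T}_n}\big(\bar z_\ind{a}\omega^{\ind{a}\cdot\ind{b}} + z_\ind{a}\omega^{-\ind{a}\cdot\ind{b}}\big).
\end{align}
Its Fourier coefficients are $\bar z_\ind{a}/2$ at $\ind{a}\in\mathcal{T}_n$ and $z_\ind{a}/2$ at $-\ind{a}\in-\mathcal{T}_n$. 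The coefficient at $\ind{0}$ is zero, because $\mathcal{T}_n\sqcup(-\mathcal{T}_n)=\TT^n\setminus\{\ind{0}\}$ by \cref{eq:trit_split}, and $\ind{0}$ lies in neither set. Summing over the group and using character orthogonality, $\sum_{\ind{b}\in\TT^n}\omega^{\ind{a}\cdot\ind{b}}=0$ for every $\ind{a}\neq\ind{0}$ (computed modulo $3$), which gives $\sum_{\ind{b}} f(\ind{b})=0$. Since $f\geq 0$ pointwise and sums to zero, $f\equiv 0$. Fourier inversion on $\ZZ_3^n$, i.e.\ invertibility of $(F^{(3)})^{\otimes n}$, then forces all Fourier coefficients to vanish, so $\ind{z}=0$. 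Hence $\cone(\col(F))=\CC^d$.

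Surjectivity of $F:\RR_{\geq0}^c\to\CC^d$ is just a restatement of this, since the image of the nonnegative orthant is exactly the conical hull of the columns. For a more constructive version that the paper could use afterwards, one can write an explicit preimage. The all-columns sum gives $F\boldsymbol{1}=0$ by the same orthogonality. Also, $(F^{(3)})^{\otimes n}$ is invertible with inverse $3^{-n}$ times its conjugate transpose. So any $\ind{\beta}$ has a real, possibly signed, preimage $\boldsymbol{\mu}$. To get it, extend $\ind{\beta}$ Hermitian-symmetrically to $-\mathcal{T}_n$, set the $\ind{0}$ entry to zero, and apply the inverse transform; the result is real-valued. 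Adding $\big(\max_\ind{b}|\mu_\ind{b}|\big)\boldsymbol{1}$ then makes it nonnegative without changing $F\boldsymbol{\mu}$.

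The only delicate step is the real/complex bookkeeping. One must check that every real functional on $\CC^d$ is captured by the $\Re\langle \ind{z},\cdot\rangle$ form, and that the Fourier coefficients of the real function $f$ on $\mathcal{T}_n$ and $-\mathcal{T}_n$ are the independent entries $\bar z_\ind{a}/2$ and $z_\ind{a}/2$ with nothing at $\ind{0}$. This is where the bipartition property of $\mathcal{T}_n$ is used essentially.
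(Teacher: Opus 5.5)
Your proof is correct. Its main argument takes a dual route to the paper's, but it rests on the same two facts.

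The paper argues on the primal side. It stacks $\Re(F)$ on top of $\Im(F)$ into a real matrix $\tilde F\in\RR^{2d\times c}$ and uses character sums to show $\tilde F\tilde F^\top=\tfrac{3^n}{2}\1$. Hence $\tilde F^+\tilde{\ind{\gamma}}=\tfrac{2}{3^n}\tilde F^\top\tilde{\ind{\gamma}}$ is an explicit real solution, which it then shifts along $\boldsymbol{1}\in\ker\tilde F$ until it is nonnegative.

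You instead use Farkas duality. A real functional that is nonnegative on all columns must vanish on them, because the columns sum to zero ($F\boldsymbol{1}=\boldsymbol{0}$). It must then be zero, because its values on the columns are $f=\tilde F^\top\tilde{\ind{z}}$ (with $\tilde{\ind{z}}$ the stacked real and imaginary parts of $\ind{z}$), and these determine $\ind{z}$. Your Fourier-inversion step is exactly the injectivity of $\tilde F^\top$, i.e.\ the full row rank $2d$ that the paper gets from row orthogonality.

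What each buys: your version is shorter, avoids the pseudo-inverse bookkeeping, and makes the general principle visible. That principle is that a real matrix with full row rank and a strictly positive kernel vector maps the nonnegative orthant onto its codomain. The paper's version directly produces a preimage.

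Your constructive addendum in fact coincides with the paper's proof. The Hermitian-symmetric inverse transform gives $\mu_{\ind{b}}=\tfrac{2}{3^n}\sum_{\ind{a}\in\mathcal{T}_n}\Re\bigl(\omega^{-\ind{a}\cdot\ind{b}}\beta_{\ind{a}}\bigr)=(\tilde F^+\tilde{\ind{\beta}})_{\ind{b}}$. This is the same particular solution; only the size of the shift along $\boldsymbol{1}$ differs.
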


\subsection{Hamiltonian engineering for number-operator-free Hamiltonians}
\label{sec:he_tritstring}
Our main objective is to simulate time evolution under a \emph{target Hamiltonian} $H_T$, given access to a native \emph{system Hamiltonian} $H_S$ and local pulses, as in \cref{eq:pulses_23}. 
Let the system and the target Hamiltonians be given as
\begin{align}
    \label{eq:system_hamiltonian}
  H_S = \sum_{\ind{a}\in\mathcal{T}_n} \alpha_\ind{a} C_\ind{a},
  \\
  \label{eq:target_hamiltonian}
  H_T = \sum_{\ind{a}\in\mathcal{T}_n} \beta_\ind{a} C_\ind{a}, 
\end{align}
where $\alpha_\ind{a}, \beta_\ind{a} \in \CC$. 
Frequently, we group the components for the system and target Hamiltonians in $d$-dimensional vectors, $\ind{\alpha},\ind{\beta}\in\CC^d$.
The support of a vector 
$\boldsymbol{v}=(v_\ind{a})_{\ind{a} \in \mc P}\in\CC^d$ 
is the set of indices for which the components are non-zero, i.e. $\operatorname{supp}(\boldsymbol{v})\coloneqq \{\ind{a}\in\mathcal{T}_n\mid v_\ind{a}\neq 0\}$.

We decompose the target Hamiltonian in terms of conjugations of the system Hamiltonian under the pulses $\{V_\ind{b}\}_{\ind{b}\in\TT^n}$,
  \begin{align}
  \label{eq:target_ham_constraint}
  H_T {=} \sum_{\ind{b}\in\TT^n} \lambda_\ind{b} V_\ind{b}^\dagger H_S V_\ind{b} \equiv \sum_{\ind{b}\in\TT^n} H_\ind{b},
  \end{align}
  because then,
\begin{align}
  \begin{split}
  \e^{-\i H_T t}  &= \e^{-\i t \sum_{\ind{b}}\lambda_\ind{b} V_\ind{b}^\dagger H_S V_\ind{b}}\\
                  \label{eq:approximation}
                  &\approx \prod_{\ind{b}\in\TT^n} e^{-\i t  \lambda_\ind{b} V_\ind{b}^\dagger H_S V_\ind{b}}\\
                  &= \prod_{\ind{b}\in\TT^n} V_\ind{b}^\dagger \, \e^{-\i t \lambda_\ind{b} H_S} \ V_\ind{b},
  \end{split}
\end{align}
where the last line can be implemented on the simulator, allowing the experimenter to simulate time evolution under the target Hamiltonian by only using the local pulses \eqref{eq:pulses_23} and free evolution under the system Hamiltonian $H_S$. 
The thus implemented time evolution is exact only if the terms $H_\ind{b}$ in \cref{eq:target_ham_constraint} commute. 
When they do not commute, the product on the right side of \cref{eq:approximation} is replaced by a Suzuki-Trotter type product formula \cite{trotterProductSemiGroupsOperators1959,childsTheoryTrotterError2021,dalzellQuantumAlgorithmsSurvey2023}. 
More details on the Trotter error can be found in \cref{sec:trotter}. 

\begin{theorem}
  \label{thm:main}
Let $H_T$ be a target Hamiltonian of the form \eqref{eq:target_hamiltonian}. 
Given access to a system Hamiltonian $H_S$ of the form \eqref{eq:system_hamiltonian} with $r = \abs{\operatorname{supp}(\boldsymbol{\alpha})}$ terms such that $\operatorname{supp}(\boldsymbol{\beta})\subseteq \operatorname{supp}(\boldsymbol{\alpha})$, and access to local pulses $V_\ind{b}$ as in \cref{eq:pulses_23}, there exists a set of $2r$ pulses $\{V_{\ind{b}_\mu}\}_{\mu\in[2r]}$ and $2r$ positive numbers $\{\lambda_{\ind{b}_\mu}\}_{\mu\in[2r]}$ such that
 \begin{align}
    \label{eq:thm}
    U_\mathrm{eng}(t) = \prod_{\mu=1}^{2r} V_{\ind{b}_\mu}^\dagger \e^{-\i t H_S \lambda_{\ind{b}_\mu}} V_{\ind{b}_\mu}  \approx \e^{-\i H_T t}. 
  \end{align}
  The approximation error in operator norm $\norm{}$ is bounded by the standard Trotter scaling, i.e., by $\LandauO\bigl(\norm{H_T}_1^{2p+1}t^{2p+1}/{n_T^{2p}}\bigr)$ when using the $2p$-th order product formula, where $\norm{H_T}_1\coloneqq \sum_{\mu=1}^{r} \norm{H_{\ind{b}_\mu}}$.
\end{theorem}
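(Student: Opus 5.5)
The plan is to reduce \cref{thm:main} to the surjectivity statement of \cref{lem:main}, restricted to the support of $\boldsymbol{\alpha}$. A Carathéodory-type argument will then bound the number of pulses, and the error claim will be quoted from the standard product-formula analysis.

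\emph{Step 1: reduce to a linear system on $\operatorname{supp}(\boldsymbol{\alpha})$.} Insert $H_S=\sum_{\ind{a}\in\mathcal{T}_n}\alpha_\ind{a}C_\ind{a}$ into the decomposition \eqref{eq:target_ham_constraint} and use \cref{eq:conjugation}. This gives
\begin{align}
  \sum_{\ind{b}\in\TT^n}\lambda_\ind{b}V_\ind{b}^\dagger H_S V_\ind{b}=\sum_{\ind{a}\in\mathcal{T}_n}\Bigl(\alpha_\ind{a}\sum_{\ind{b}}\omega^{\ind{a}\cdot\ind{b}}\lambda_\ind{b}\Bigr)C_\ind{a}.
\end{align}
The Hermitian-conjugate partners $C_{-\ind{a}}$ pick up the phase $\omega^{-\ind{a}\cdot\ind{b}}=(\omega^{\ind{a}\cdot\ind{b}})^*$. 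Since $\lambda_\ind{b}$ is real, the conjugated sum stays Hermitian, and matching coefficients on $\mathcal{T}_n$ is enough. Each conjugation only multiplies $C_\ind{a}$ by a phase, so coefficients of $\ind{a}\notin\operatorname{supp}(\boldsymbol{\alpha})$ remain zero. Because $\operatorname{supp}(\boldsymbol{\beta})\subseteq\operatorname{supp}(\boldsymbol{\alpha})$, this agrees with $H_T$ on those terms. The only conditions left are the $r$ complex equations
\[
  (F|_{S}\boldsymbol{\lambda})_\ind{a}=\beta_\ind{a}/\alpha_\ind{a}\quad\text{for } \ind{a}\in S\coloneqq\operatorname{supp}(\boldsymbol{\alpha}),
\]
where $F|_S$ is $F$ with its rows restricted to $S$.

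\emph{Step 2: existence via \cref{lem:main}.} By \cref{lem:main}, $F:\RR^c_{\ge0}\to\CC^d$ is surjective. Composing with the coordinate projection $\CC^d\to\CC^r$ onto the rows in $S$ keeps it surjective. Hence some $\boldsymbol{\lambda}\ge0$ solves the system in Step 1, which also shows the \ac{LP} \eqref{eq:LP_quadratic} is feasible.

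\emph{Step 3: sparsify to $2r$ pulses.} Identify $\CC^r\cong\RR^{2r}$. The target vector $(\beta_\ind{a}/\alpha_\ind{a})_{\ind{a}\in S}$ lies in the cone generated by the $3^n$ columns of $F|_S$. Carathéodory's theorem for cones says it is a nonnegative combination of at most $2r$ linearly independent columns. Equivalently, take a basic feasible solution of the \ac{LP} with the $2r$ real equality constraints; it has at most $2r$ nonzero entries. Discard the columns with zero weight so that all remaining $\lambda_{\ind{b}_\mu}$ are positive. If fewer than $2r$ remain and exactly $2r$ is wanted, split a pulse, replacing $\lambda_\ind{b}$ by $\lambda_\ind{b}/2+\lambda_\ind{b}/2$ on repeated copies. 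This keeps the decomposition and positivity unchanged. Ordering the terms $H_{\ind{b}_\mu}=\lambda_{\ind{b}_\mu}V_{\ind{b}_\mu}^\dagger H_SV_{\ind{b}_\mu}$ then gives the product in \eqref{eq:thm}.

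\emph{Step 4: error bound.} Since $\e^{-\i tH_{\ind{b}}}=V_\ind{b}^\dagger\e^{-\i t\lambda_\ind{b}H_S}V_\ind{b}$, the sequence is exactly a $2p$-th order Suzuki product formula for $H_T=\sum_\mu H_{\ind{b}_\mu}$ with $n_T$ steps. The standard bound (\cref{sec:trotter}, Childs et al.) gives $\LandauO\bigl((\sum_\mu\|H_{\ind{b}_\mu}\|)^{2p+1}t^{2p+1}/n_T^{2p}\bigr)$. I expect no real obstacle once \cref{lem:main} is granted. The only delicate points are the bookkeeping in Step 1, namely the Hermitian-conjugate terms and the zero coefficients outside $S$, and the real-dimension count $2r$ in Step 3, as opposed to $r$.
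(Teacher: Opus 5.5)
Your proposal is correct and follows the paper's proof essentially step for step: coefficient matching reduces the decomposition to $F\boldsymbol{\lambda}=\boldsymbol{\beta}\oslash\boldsymbol{\alpha}$ on the rows in $\operatorname{supp}(\boldsymbol{\alpha})$, \cref{lem:main} gives a nonnegative solution, Carathéodory's theorem in real dimension $2r$ makes it $2r$-sparse, and the Trotter bound is quoted as standard. Your explicit projection from $\CC^d$ onto the $r$ relevant coordinates and your Hermitian-conjugate bookkeeping simply make precise what the paper states loosely as ``$\CC^r\subseteq\CC^d$''.
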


\begin{proof}
  It suffices to show that for system and target Hamiltonians which satisfy the assumptions of the theorem, a decomposition \eqref{eq:target_ham_constraint} with only $2r$ terms exists. 
  Using \cref{eq:target_ham_constraint,eq:conjugation}, 
  we obtain
  \begin{align}
    \begin{split}
    H_T= 
  \sum_{\ind{a}\in\mathcal{T}_n} \beta_\ind{a} C_\ind{a}  &= \sum_{\ind{b}\in\TT^n} \lambda_\ind{b} V_\ind{b}^\dagger \left(\sum_{\ind{a}\in\mathcal{T}_n}\alpha_\ind{a} C_\ind{a}\right) V_\ind{b}\\
                                        &= \sum_{\ind{a}\in\mathcal{T}_n} \alpha_\ind{a} \left(\sum_{\ind{b}\in\TT^n} \lambda_\ind{b} V_\ind{b}^\dagger C_\ind{a} V_\ind{b} \right)\\
                                        &= \sum_{\ind{a}\in\mathcal{T}_n} \alpha_\ind{a} \left(\sum_{\ind{b}\in\TT^n} \omega^{\ind{a}\cdot\ind{b}}\lambda_\ind{b} \right)C_\ind{a}.
    \end{split}
  \end{align}
  Therefore, a decomposition of the target Hamiltonian of the form \eqref{eq:target_ham_constraint} exists, if the following linear system of equations has a solution,
  \begin{align}
    \label{eq:elementwise_constraint}
    \alpha_\ind{a} \sum_{\ind{b}\in\TT^n} \omega^{\ind{a}\cdot\ind{b}}\lambda_\ind{b} = \beta_\ind{a},
  \end{align}
which we write in matrix language as
\begin{align}
  \label{eq:matrix_constraint}
  F\boldsymbol{\lambda} = \boldsymbol{\gamma} \equiv \boldsymbol{\beta}\oslash\boldsymbol{\alpha},
\end{align}
where $\boldsymbol{\lambda}\in\RR_{\geq 0}^{c}$ is a nonnegative vector of the evolution times and we used entrywise division $\oslash$, defined as
\begin{align}
  \label{eq:entrywise_division}
  \boldsymbol{\beta}\oslash\boldsymbol{\alpha} \coloneqq 
  \begin{cases}
    \beta_\ind{a}/\alpha_\ind{a} \quad &\text{if} \quad \alpha_\ind{a} \neq 0\\
    0 \quad &\text{if} \quad \alpha_\ind{a} = 0.
  \end{cases}  
\end{align}
Since by assumption $\supp(\boldsymbol{\beta})\subseteq \supp(\boldsymbol{\alpha})$, we have $\supp(\boldsymbol{\gamma}) = \supp(\boldsymbol{\beta})$.
By \cref{lem:main}, the columns of the Fourier submatrix $F$ conically span the full complex space $\CC^d$, and therefore
\begin{align}
  \boldsymbol{\gamma} \in\cone(\col(F)) \quad \forall\boldsymbol{\gamma}\in\CC^r\subseteq \CC^d.
\end{align}
In other words the map $F:\RR_{\geq 0}^c \to \CC^r$ is surjective.
This means that  by Carathéodory's theorem \cite{baranyGeneralizationCaratheodorysTheorem1982}, there exists a $2r$-sparse nonnegative vector $\boldsymbol{\lambda}^*\in\RR_{\geq 0}^c$ such that
\begin{align}
  F\boldsymbol{\lambda}^* = \boldsymbol{\gamma}.
\end{align}
\end{proof}

Let us now see how to best find the pulses required for the simulation task.
For this, it is important to understand how the objects in \cref{eq:matrix_constraint} relate to the physical problem at hand.
The rows of the matrix $F$ correspond to number-operator-free terms in the system Hamiltonian, and the columns correspond to pulses which we can use to conjugate the Hamiltonian with. 
The entry $F_{\ind{a}\ind{b}}=\omega^{\ind{a}\cdot\ind{b}}$ therefore corresponds to the phase acquired by the term $C_\ind{a}$ in the Hamiltonian when conjugated under the $V_\ind{b}$.
The vector $\ind{\gamma}$ on the right side of \cref{eq:matrix_constraint}, tells us how much and in which direction in the complex plane the system Hamiltonian needs to be ``scaled'' to obtain the target Hamiltonian.
A solution $\ind{\lambda}$ to the equation system corresponds to a description of an implementation of the simulation task, where the pulses that need to be applied can be read-off from the indices of the non-zero entries of $\ind{\lambda}$, and the evolution time between the conjugation layers for a given pulse, say $V_{\ind{b}_\nu}$, corresponds to the entry $\lambda_{\ind{b}_\nu}$.

It is easy to see why we require $\supp(\ind{\beta}) \subseteq \supp(\ind{\alpha})$, since otherwise a decomposition of the form \eqref{eq:target_ham_constraint} does not exist, as can be seen from \cref{eq:elementwise_constraint}; setting $\alpha_\ind{a} = 0$ requires $\beta_\ind{a} = 0$.
However, this means that the rows of $F$ which are labelled by indices outside $\supp(\ind{\alpha})$ do not play a role in determining the set of pulses required for the simulation, and may be discarded.
Therefore we may define
\begin{align}
  F' \coloneqq F\big|_{\supp(\ind{\alpha})}\in\CC^{r\times c}, \quad  \ind{\gamma}'\coloneqq \ind{\gamma}\big |_{\supp(\ind{\alpha})}\in\CC^r
\end{align}
so that the Hamiltonian engineering problem reduces to
\begin{align}
  \label{eq:reduced_matrix_constraint}
  F'\ind{\lambda} = \ind{\gamma}'.
\end{align}
This is important, since usually the number of terms in the Hamiltonian is $r\in\poly(n)$ in practice, whereas $d~=~|\mathcal{T}_n|~=~(3^n-1)/2$, so this allows us to exponentially reduce the number of rows.

In general, the equation system \eqref{eq:reduced_matrix_constraint} is underdetermined, therefore it has infinitely many solutions and thus there are infinitely many ways of simulating the target Hamiltonian.
It is desirable to find the solution which minimizes the total quantum runtime $\norm{\boldsymbol{\lambda}}_{\ell_1} = \boldsymbol{1}^\top \boldsymbol{\lambda} = \sum_\ind{b} \lambda_\ind{b}$.
For this reason, we phrase the Hamiltonian engineering problem as an \ac{LP}
\begin{align}
\begin{split}
\label{eq:LP}
\text{minimize} \quad & \boldsymbol{1}^\top \boldsymbol{\lambda} \\
\text{subject to} \quad &  F' \boldsymbol{\lambda} = \boldsymbol{\gamma}', \\
                        &  \boldsymbol{\lambda} \in \RR_{\geq 0}^{c}.
\end{split}
\end{align}
In this context, the matrix $F'$ is called the \emph{constraint matrix} and a vector $\boldsymbol{\lambda}$ satisfying all the constraints of the \ac{LP} \eqref{eq:LP} is called a \emph{feasible solution}.
An \emph{optimal solution} $\boldsymbol{\lambda}^*$, marked by an asterisk, is a feasible solution that also minimizes the objective function.
If the \ac{LP} \eqref{eq:LP} has a feasible solution, then there exists a $2r$-sparse optimal solution \cite{baranyGeneralizationCaratheodorysTheorem1982},
which corresponds to a decomposition of the target Hamiltonian \eqref{eq:target_ham_constraint}, with only $2r$ terms.
Such a $2r$-sparse optimal solution can be found using the simplex algorithm which, in practice, has a runtime that scales polynomially in the size of the linear program $r \cdot c$ \cite{spielmanSmoothedAnalysisAlgorithms2004,bachOptimalSmoothedAnalysis2025}. 

Even though the size of the \ac{LP} grows exponentially with the system size, it can be solved as it is for small system sizes and the optimum pulse sequence for the simulation task can be found. 
For larger system sizes, it was observed in \cite{basslerGeneralEfficientRobust2025a} for a qubit setting, that we can consider an exponentially smaller constraint matrix which allows us to find a solution to the Hamiltonian engineering problem efficiently, at the cost of suboptimal quantum runtimes.
The idea is to construct an \emph{efficient relaxation} of the original \ac{LP} by uniformly subsampling $s \sim 4r$ columns from the matrix $F'$, and use these to define a new constraint matrix $\tilde{F}\in\CC^{r\times s}$.
Crucially, the matrix $\tilde{F}$ grows polynomially with the size of the system, and thus the optimal solution $\tilde{\ind{\lambda}}^*$ to the \ac{LP} defined using $\tilde{F}$ can be found efficiently. 
As $\tilde{F}$ contains only a subset of the columns of the original constraint matrix $F'$, $\tilde{\ind{\lambda}}^*$ is a feasible solution of the original \ac{LP}, but is not optimal in general. 
However, we see that by increasing the number of columns we sample from $F'$, we can get close to the optimal quantum runtime while preserving the efficiency of the relaxation.
For example, for quadratic Hamiltonians, $r\in \LandauO(n^2)$ and therefore the efficient relaxation allows us to reduce the size of the \ac{LP} to $rs \in \LandauO(n^4)$ and thus efficiently find a set of pulses for the engineering task.
For more details on efficient relaxation see \cref{sec:relaxation}.

While our engineering method cannot \emph{generate} terms that are not already present in the system Hamiltonian, it can \emph{remove} terms by dynamically setting their coefficients to zero. 
\Cref{eq:entrywise_division} shows that this is agnostic of the original coefficient $\alpha_{\ind a}$, since $0/\alpha_{\ind a} = 0$ for any non-zero $\alpha_{\ind a}$ in $H_S$.
This is particularly useful for mitigating error terms in the system Hamiltonian when the exact coefficients in the number-operator-free decomposition are not known.

Earlier we made the seemingly arbitrary choice of phase angle $\theta = 2\pi/3$. 
As we see in \cref{ap:lemma_proof}, this choice is justified since it yields the smallest constraint matrix obtained from local pulses whose conical hull is equal to the complex space $\CC^r$. 
But we are free to choose any phase $\theta_k = 2\pi/k$, with $k\geq 3$, which gives rise to a primitive $k$-th root of unity $\omega_k = \e^{2\pi\i/k}$. 
It is easy to see that if we choose $k=2$, we always get either $\pm1$ as a phase in front of the terms of the Hamiltonian, therefore it is not possible to span the complex space $\CC^r$.
However, other roots of unity with $k\geq 4$ may be used to define the local unitaries.
Conjugation of the Hamiltonian terms under pulses generated by these local unitaries give
\begin{align}
  V_\ind{b}^\dagger\left(\theta_k\right) C_\ind{a} V_\ind{b}\left(\theta_k\right) = \omega_k^{\ind{a}\cdot{\ind{b}}}C_\ind{a},
\end{align}
and the constraint matrix we obtain is of the form
\begin{align}
  F_k' \coloneqq (F^{(k)})^{\otimes n}\big|_{\supp(\ind{\alpha})}
\end{align}
with $F^{(k)}_{\ind a\ind b}= \omega_k^{\ind a\cdot\ind b}$.
That is, the constraint matrix is obtained by restricting to the relevant rows of the Fourier matrix of the group $\ZZ_k^n$.
Importantly, the proof of \cref{lem:main_appendix} applies for all $k \geq 3$, implying that \cref{thm:main} holds in this case as well.

Choosing higher orders for the roots of unity increases the size of the linear program, thereby reducing the efficiency of the algorithm we use to find the pulses.
However, it is possible to reduce the quantum evolution times substantially for some engineering tasks when we consider local unitaries generated by different roots of unity. 
For example, the simple task of engineering 
\begin{align}
  \alpha_\ind{a}C_\ind{a}\longmapsto -\alpha_\ind{a}C_\ind{a}
\end{align}
requires only one pulse, $V_\ind{a}(2\pi/4)$ and an evolution time of $1$ if we choose $k=4$ (or any even number larger than $2$), but two pulses $V_\ind{a}(2\pi/3)$, $V_\ind{a}(-2\pi/3)$ and an evolution time of $2$ if we choose $k=3$.

\section{Error analysis}
\label{sec:implementation_errors}
There are two main assumptions in our procedure which give rise to errors. 
The first is that the product of exponentials implemented in the experiment, can be made to converge to the exponential of the sum, 
\begin{align}
  \label{eq:trotter_target}
  \prod_{\ind{b}\in\TT^n} e^{-\i t  \lambda_\ind{b} V_\ind{b}^\dagger H_S V_\ind{b}} \approx e^{-\i t \sum_{\ind{b}}\lambda_\ind{b} V_\ind{b}^\dagger H_S V_\ind{b}}.
\end{align}
This is a standard approximation and taken care of by product formulae like the Trotter-Suzuki formula.

The second assumption is that the system Hamiltonian can be ``turned off'' while we apply a pulse, which is not the case for all analogue experiments.
For ultracold atoms in optical lattices, it is possible to turn off the tunnelling part of the Hamiltonian by increasing the lattice depth, thereby enabling the application of the pulses not to overlap with tunnelling part of the system Hamiltonian.
For other implementations of analogue quantum simulators, we follow \cite{Votto23UniversalQuantumProcessors,basslerGeneralEfficientRobust2025a}  and use average Hamiltonian theory techniques to mitigate the errors stemming from these so-called \emph{finite-pulse-time} effects.
\subsection{Product formulae}
\label{sec:trotter}
Given a Hamiltonian decomposition $H = \sum_{j\in [L]} H_j$ with $L\in\ZZ_+$ terms, time evolution under $H$ can be approximated using the first order product formula with $n_T$ \emph{Trotter cycles}~\cite{dalzellQuantumAlgorithmsSurvey2023}
\begin{align}
  \label{eq:first_order_trotter}
  S_1(t) \coloneqq \left(\prod_{j=1}^{L}\e^{-\i H_j t/n_T}\right)^{n_T}.
\end{align}
The error in the first-order product formula is upper bounded as
\begin{align}
  \left\|S_1(t)-\e^{\i Ht}\right\| \leq \frac{t^2}{2n_T}\sum_{i=0}^{L}\left\|\sum_{j>i}^{L}[H_i,H_j]\right\|,
\end{align}
where $\norm{}$ denotes the operator norm. 
The $2p$-th order product formula is defined recursively \cite{dalzellQuantumAlgorithmsSurvey2023,childsTheoryTrotterError2021}, with Trotter error bounded by
\begin{align}
  \left\|S_{2p}-\e^{-\i H t} \right\| \in \LandauO\left(\frac{\left\|H\right\|_1^{2p+1}t^{2p+1}}{n_T^{2p}}\right),
\end{align}
where $\left\|H\right\|_1\coloneqq \sum_{j=1}^{L} \left\|H_j\right\|$.
In this work, we exclusively make use of the second-order product formula
\begin{align}
  \label{eq:second_order_trotter}
  S_2(t) \coloneqq \left(\prod_{j=1}^{\stackrel{L}{\leftarrow}}\e^{-\i H_j t/(2n_T)}\prod_{j=1}^{\stackrel{L}{\rightarrow}}\e^{-\i H_j t/(2n_T)}\right)^{n_T},
\end{align}
where $\prod_{j=1}^{\stackrel{L}{\leftarrow}} x_j\coloneqq x_1x_2x_3\cdots x_L$ and $\prod_{j=1}^{\stackrel{L}{\rightarrow}} x_j\coloneqq x_Lx_{L-1}\cdots x_1$.

\subsection{Average Hamiltonian theory and the Magnus expansion}
In this section, we briefly review some standard tools which will be used to mitigate finite pulse time errors.
Let $U(t)$ be the time evolution operator for a time-dependent Hamiltonian $H(t)$, described by the Schrödinger equation
\begin{align}
  \frac{d U(t)}{dt} = -\i H(t) U(t),
\end{align}
with the condition $U(0)=\1$.
It is often useful to consider the time-independent \emph{average Hamiltonian} defined as the time-independent Hermitian operator generating the unitary $U(\tau)$, for some predetermined evolution time $\tau$, i.e.
\begin{align}
  \e^{-\i H_{\mathrm{av}}\tau} \coloneqq U(\tau).
\end{align}
The average Hamiltonian can be expressed using the \emph{Magnus expansion} \cite{brinkmannIntroductionAverageHamiltonian2016}
\begin{align}
  H_{\mathrm{av}} = H_{\mathrm{av}}^{(1)}+H_{\mathrm{av}}^{(2)}+\dots
\end{align}
where the first and second order terms are given by
\begin{align}
  H_{\mathrm{av}}^{(1)} \coloneqq \frac{1}{\tau} \int_0^\tau  H(t)dt
\end{align}
and
\begin{align}
   H_{\mathrm{av}}^{(2)} \coloneqq \frac{1}{2\i\tau} \int_0^\tau\int_{0}^{t}[H(t),H(t')]dt'dt.
\end{align}
The Magnus expansion converges if \cite{moanConvergenceMagnusSeries2008}
\begin{align}
  \int_{0}^{\tau} \left\|H(t)\right\| dt < \pi 
\end{align}
and as a rule of thumb, it converges rapidly \cite{brinkmannIntroductionAverageHamiltonian2016} if, for any time $t \in [0,\tau]$
\begin{align}
  \left\|H(t)\right\| \tau \ll 1.
\end{align}

\subsection{Addressing finite pulse time errors}
In this section, we show how the fermionic Hamiltonian engineering protocol can be made robust against finite pulse time errors. 
Here, we focus on having a concise presentation.
For a more detailed analysis see \cref{ap:fpt_error_matrix}.

Let $S_\ind{b}$ denote the unitary corresponding to the $\ind{b}$-th \emph{step} of the fermionic Hamiltonian engineering procedure, i.e.
\begin{align}
  \label{eq:ideal_step}
  S_\ind{b}^{\mathrm{ideal}}(\lambda_\ind{b}t) = V_\ind{b}^\dagger \e^{-\i H_S \lambda_\ind{b}t} V_\ind{b}.
\end{align}
The direct implementation of this unitary requires the system Hamiltonian to be ``turned off'' while the pulses $V_{\ind b}$ and $V_{\ind b}^\dagger$ are applied. 
More precisely, assuming interaction terms are globally tunable, \cref{eq:ideal_step} only requires that we turn off the tunnelling part of the system Hamiltonian.
This follows from the same argument from \cref{sec:interactions},  that interaction terms commute with the pulses $V_\ind{b}$, and are globally scaled by the engineering protocol.

If we cannot turn off the system Hamiltonian, the pulses are applied ``on top'' of the system Hamiltonian.
The local unitaries \eqref{eq:local_ops_3} we assume access to are generated by the two generators $\left\{\frac{-2\pi}{3t_p}n_j,\frac{2\pi}{3t_p}n_j\right\}$, where $t_p$ is the constant pulse time and $n_j$ is the number operator for mode $j$.
So the \emph{generator} of a pulse $V_\ind{b}$ is given as
\begin{align}
  G_\ind{b}\coloneqq \frac{2\pi}{3t_p} \sum_{j\in[n]} b_j n_j,
\end{align}
where
\begin{align}
  V_\ind{b}(t) \coloneqq \e^{-\i G_\ind{b} t} \quad \text{and} \quad V_\ind{b}\equiv V_\ind{b}(t_p).
\end{align}
Therefore, the $\ind{b}$-th step that is implemented in reality is better described by the unitary
\begin{align}
  \label{eq:realistic_unitary}
    S_\ind{b}(\lambda_\ind{b}t) = \e^{-\i(H_S-G_\ind{b})t_p } \e^{-\i H_S \lambda_\ind{b}t} \e^{-\i(H_S+G_\ind{b})t_p }.
\end{align}

Using first order Magnus expansion, we find the effective Hamiltonian $K_\ind{b}$ generating the unitary \eqref{eq:realistic_unitary}, i.e.
\begin{align}
  S_\ind{b}(\lambda_\ind{b}t) = \e^{-\i K_\ind{b} T_\ind{b}},
\end{align}
where $T_\ind{b}\coloneqq \lambda_\ind{b}t + 2t_p$ is the total time of the $\ind{b}$-th step.
We find that
\begin{align}
  \label{eq:fpt_effective_hamiltonian}
  K_\ind{b} = \frac{1}{\lambda_\ind{b}t+2t_p} \left[\sum_{\ind{a}\in\mathcal{T}_n}\alpha_\ind{a}\left(E_{\ind{a}\ind{b}} + F_{\ind{a}\ind{b}}\lambda_\ind{b}t\right) C_\ind{a}\right],
\end{align}
where the \emph{error matrix} is given by
\begin{align}
  \label{eq:fpt_error_matrix}
  E_{\ind{a}\ind{b}} = \frac{3 t_p}{\pi(\ind{a}\cdot\ind{b})} \sin{\left(\frac{2\pi}{3}(\ind{a}\cdot\ind{b})\right)},
\end{align}
and $F_\ind{ab} = \omega^{\ind{a}\cdot\ind{b}}$ as before.

We require
\begin{align}
  \prod_{\ind{b}} S_\ind{b}(\lambda_\ind{b}t) = \prod_\ind{b}\e^{-\i K_\ind{b} T_\ind{b}} \approx \e^{-\i\sum_\ind{b} K_\ind{b} T_\ind{b}} \stackrel{!}{=} \e^{-\i H_T \tau},
\end{align}
which we ensure by requiring
\begin{align}
  \sum_\ind{b} T_\ind{b} K_\ind{b}  = \tau H_T .
\end{align}
Plugging in \cref{eq:fpt_effective_hamiltonian} and writing the resulting expression in matrix language we get
\begin{align}
  \label{eq:fpt_constraint_matrix}
  F'\boldsymbol{\lambda}+E\boldsymbol{1} = \tau \boldsymbol{\gamma}',
\end{align}
which may be rewritten as
\begin{align}
  F'\ind{\lambda} = \ind{\eta},
\end{align}
where we defined $\ind{\eta}\coloneqq \tau \ind{\gamma}' - E\ind{1}\in\CC^r$. 
Crucially, $\supp(\ind{\eta}) = \supp(\ind{\gamma}) \cup \supp(\ind{\alpha})$, so \cref{thm:main} applies to this case as well and the linear program
\begin{align}
\begin{split}
\label{eq:LP_fpt}
\text{minimize} \quad & \boldsymbol{1}^\top \boldsymbol{\lambda} \\
\text{subject to} \quad &  F' \boldsymbol{\lambda} = \boldsymbol{\eta}, \\
                        &  \boldsymbol{\lambda} \in \RR_{\geq 0}^{c},
\end{split}
\end{align}
has an optimal solution, whenever $\mathrm{supp}(\boldsymbol{\beta})\subseteq\mathrm{supp}(\boldsymbol{\alpha})$.
Consequently, finite pulse-time errors can be systematically compensated for within the Hamiltonian engineering framework by solving the modified linear program \eqref{eq:LP_fpt}.

\section{Conclusion and outlook}
\label{sec:conclusion}

We introduced a framework for fermionic Hamiltonian engineering that synthesizes the time evolution under a target Hamiltonian by interleaving free evolution under a native system Hamiltonian with discrete sequences of local fermionic unitaries generated by number operators.
We find that the sequence of pulses necessary for the simulation task can be found efficiently by a linear program.

For number-operator-free Hamiltonians, the framework realizes any target Hamiltonian with terms defined on a subset of the terms of the native Hamiltonian, with arbitrary complex coefficients on each term. 
This makes the engineering of artificial gauge fields a relevant use case for the framework, which we illustrated by mapping a uniform square lattice onto the Harper--Hofstadter model and a uniform triangular lattice onto a model with $\pi$-flux per plaquette; in the latter case the pulse sequence can be written down analytically and bypasses the linear program altogether.
We further proved a no-go result for interaction terms that are polynomials of number operators: no conjugation by local unitaries that rescale the number operators can do more than globally rescale such terms. 
We have shown that this restriction does not prevent the engineering of physically relevant interacting models, however.
Whenever the interaction strength of the native Hamiltonian is itself globally tunable -- for example via Feshbach resonances for ultracold atoms in optical lattices -- the global rescaling factor produced by the protocol can be absorbed into the choice of system parameters.
As concrete applications, we showed that the $U/t$ ratio of the Fermi--Hubbard model can be increased independently of the native lattice parameters, decoupling the interaction-to-tunnelling ratio from the lattice depth, and that one-dimensional Fermi--Hubbard chains with arbitrary complex tunnellings can be simulated to high accuracy.

We analyzed the two dominant error sources of the protocol. 
Trotter error is controlled by standard product-formula bounds.
Finite-pulse-time errors, which appear whenever the native Hamiltonian cannot be switched off during the application of a local pulse, were treated within average Hamiltonian theory.
To first order in the Magnus expansion they can be systematically compensated for by a corrective modification of the linear constraints, leaving the structure of the linear program and its solvability guarantees unchanged. 
Compared to Floquet engineering, our scheme replaces continuous periodic driving with discrete, optimized pulse sequences and therefore does not suffer from the continuous energy absorption that ultimately limits Floquet protocols, allowing engineered dynamics to be explored over longer timescales while preserving many-body correlations.

These results demonstrate that pulse-based fermionic Hamiltonian engineering is a versatile approach for near-term analogue quantum simulation, giving access to quenches to interacting Hamiltonians with complex tunnellings in the regime where classical simulation becomes infeasible, making it a promising route toward early practical quantum advantage, and motivate several directions for future work.
The most natural next step is to seek an experimental implementation combining artificial gauge fields with on-site interactions. 
The Hofstadter--Hubbard model is a particularly attractive target, since the non-interacting Hofstadter Hamiltonian has been realized with ultracold atoms~\cite{aidelsburgerRealizationHofstadterHamiltonian2013}, but reaching the strongly interacting fermionic regime has so far been challenging.
Such an experimental implementation would also motivate tailoring the protocol to real, noisy hardware, providing the setting for an experimental comparison with Floquet engineering.
Another open question is how much the choice of phase $2\pi/3$, made to minimize the size of the \ac{LP}, costs us in quantum runtime.
Preliminary numerics suggest no improvement on average from using higher roots of unity, but it would be interesting to know how different phases could be tailored to the specific Hamiltonian engineering problem at hand to yield shorter quantum runtimes. Finally, the efficient relaxation introduced by \cite{basslerGeneralEfficientRobust2025a}, on which our efficiency claims rest, is still only a numerical observation.
It would be interesting to prove that uniformly sampling from the columns of a row-restricted Fourier submatrix like $F'$ yields a sharp phase transition analogous to the one in Wendel's theorem \cite{wendelProblemGeometricProbability1962}.

\begin{acknowledgments} 
We are grateful to Pascal Baßler, Bruno Murta, Mirko Arienzo, Marcus Meschede and Thomas Friese for fruitful discussions on the project.
This work has been funded by 
the TouQan project within the QuantERA~II Programme that has received funding from the EU's H2020 research and innovation programme under Grant Agreement No.\ 101017733,
and from the Deutsche Forschungsgemeinschaft (DFG, German Research Foundation) under the grant number 532779266;
by the German Federal Ministry of Research, Technology and Space (BMFTR) within the funding program ``quantum technologies -- from basic research to market'' via the joint project
MIQRO (grant number 13N15522); 
and 
by the Fujitsu Germany GmbH and Dataport as part of the endowed professorship ``Quantum Inspired and Quantum Optimization.''
\end{acknowledgments}

\bibliography{HE.bib,mk}

\clearpage
\appendix

\onecolumngrid

\section*{Appendix}
\section{Conjugation of annihilation and creation operators via local unitaries}
\label[appendix]{sec:conjugation_of}
We make use of the  Campbell identity 
\begin{align}
  \label{eq:Campbell}
  \e^{X}Y\e^{-X}=\sum_{m=0}^{\infty} \frac{[X,Y]_m}{m!}
\end{align}
where $[X,Y]_{m}\coloneqq [X,\cdots[X,[X,Y]]\cdots]$, with the operator $X$ being repeated $m$ times, and $[X,Y]_{0}\coloneqq Y$, to show that
\begin{align}
  \begin{split}
  V_j^\dagger(\theta) \ c_k \ V_j(\theta) =& \e^{\i\theta n_j}\ c_k \ \e^{-\i\theta n_j}\\
                                        =& \sum_{m=0} \frac{(\i\theta)^m}{m!}[n_j,c_k]_m\\
                                        =& \sum_{m=0} \frac{(-\i\theta)^m}{m!} \delta_{jk}c_k\\
                                        =& \left(\e^{-\i\theta}\right)^{\delta_{jk}}c_k.
  \end{split}
\end{align}
\section{Simple effective Hamiltonians}
\label[appendix]{sec:simple_examples_appendix}
In this section, we focus on two example pulse sequences and investigate their impact on the generated effective Hamiltonians.
Note that in the examples we consider, the connectivity graphs of the system and target Hamiltonians are the same, therefore we write $E_T = E_S = E$, for the edge set. 

\paragraph*{Example 1:}
The first example shows how to impart complex phases on tunnelling coefficients, which allows one to investigate artificial gauge fields \cite{struckTunableGaugePotential2012,cooperTopologicalBandsUltracold2019}.
To this end, we conjugate time evolution generated by the system Hamiltonian with the local unitary $V_j(\theta)$
\begin{align}
  V_j(\theta)^\dagger \e^{-\i H_S t} V_j(\theta)  =   \e^{-\i V_j(\theta)^\dagger H_S V_j(\theta) t}.
\end{align}
This unitary can be interpreted as the time evolution operator of an effective Hamiltonian 
\begin{equation}
  \begin{aligned}
  H_\mathrm{eff} &= V_j(\theta)^\dagger H_S V_j(\theta)\\
      &= \alpha\sum_{\{k,l\}\in E}  V_j(\theta)^\dagger c_k^\dagger c_l V_j(\theta) + \mathrm{H. c.}\\
      &= \alpha\sum_{\{k,l\}\in E}  \e^{\i\theta\delta_{jk}}\e^{-\i\theta\delta_{jl}}c_k^\dagger c_l + \mathrm{H. c.}\\
      &= \alpha\sum_{m\in N(j)} \left( \e^{\i\theta} c_j^\dagger c_m + \e^{-\i\theta} c_m^\dagger c_j\right)
          + \alpha\sum_{\{k,l\}\in E \colon k\neq j\neq l } c_k^\dagger c_l + \mathrm{H. c.},
  \end{aligned}
\end{equation}
where $N(j)\coloneqq \Set{m\in V \given  \exists m\in V: \Set{j,m}\in E}$ is the set of vertices neighboring $j$.
This means that we can impart phases $\e^{\i\theta}$ on the tunnelling terms for selected vertices, as illustrated in  \cref{fig:FHE_phase_engineering} on a simple 1D chain.
In particular for $\theta = \pi$, we obtain an effective Hamiltonian where tunnelling terms on edges incident to vertex $j$ acquire a minus sign, corresponding to an inverted tunnelling strength.

\begin{figure*}[t]
\centering

\subfloat[Phase engineering via a local unitary.\label{fig:FHE_phase_engineering}]{
\resizebox{0.47\textwidth}{!}{%
\begin{tikzpicture}[
  mode/.style={circle, draw, thick, minimum size=7mm, inner sep=0pt},
  hop/.style={thick},
  arrow/.style={thick, ->, >=stealth},
  highlight/.style={circle, draw=red, thick, minimum size=11mm}
]

  \node[mode] (L1) at (-5,0) {$1$};
  \node[mode] (L2) at (-3.5,0) {$2$};
  \node[mode] (L3) at (-2,0) {$3$};

  \node[highlight] at (L2.center) {};

  \draw[hop] (L1) -- (L2);
  \draw[hop] (L2) -- (L3);

  \draw[arrow, bend right=35] (L2) to (L1);
  \draw[arrow, bend right=35] (L3) to (L2);

  \node at (-4.25,0.6) {$\alpha$};
  \node at (-2.75,0.6) {$\alpha$};

  \draw[thick, red] (-3.5,1.2) -- (-3.7,0.8) -- (-3.4,0.9) -- (-3.6,0.5);
  \node[red] at (-3.5,1.5) {$V_2(\theta)$};

  \draw[very thick, ->] (-0.8,0) -- (0.8,0);
  \node at (0,0.6) {\textsc{FHE}};

  \node[mode] (R1) at (2,0) {$1$};
  \node[mode] (R2) at (3.5,0) {$2$};
  \node[mode] (R3) at (5,0) {$3$};

  \draw[hop] (R1) -- (R2);
  \draw[hop] (R2) -- (R3);

  \draw[arrow, bend right=35] (R2) to (R1);
  \draw[arrow, bend right=35] (R3) to (R2);

  \node at (2.75,0.6) {$\alpha e^{-i\theta}$};
  \node at (4.25,0.6) {$\alpha e^{i\theta}$};

\end{tikzpicture}
}}
\hfill
\subfloat[Independent tuning of real tunnellings.\label{fig:FHE_chain}]{
\resizebox{0.47\textwidth}{!}{%
\begin{tikzpicture}[
  mode/.style={circle, draw, thick, minimum size=7mm, inner sep=0pt},
  hop/.style={thick},
  arrow/.style={thick, ->, >=stealth}
]

  \node[mode] (L1) at (-5,0) {$1$};
  \node[mode] (L2) at (-3.5,0) {$2$};
  \node[mode] (L3) at (-2,0) {$3$};

  \draw[hop] (L1) -- (L2);
  \draw[hop] (L2) -- (L3);

  \draw[arrow, bend right=35] (L2) to (L1);
  \draw[arrow, bend right=35] (L3) to (L2);

  \node at (-4.25,0.6) {$\alpha$};
  \node at (-2.75,0.6) {$\alpha$};

  \draw[very thick, ->] (-0.8,0) -- (0.8,0);
  \node at (0,0.6) {\textsc{FHE}};

  \node[mode] (R1) at (2,0) {$1$};
  \node[mode] (R2) at (3.5,0) {$2$};
  \node[mode] (R3) at (5,0) {$3$};

  \draw[hop] (R1) -- (R2);
  \draw[hop] (R2) -- (R3);

  \draw[arrow, bend right=35] (R2) to (R1);
  \draw[arrow, bend right=35] (R3) to (R2);

  \node at (2.75,0.6) {$\beta_{12}$};
  \node at (4.25,0.6) {$\beta_{23}$};

\end{tikzpicture}
}}

\caption{Schematic illustrations of fermionic Hamiltonian engineering (FHE). (a) Conjugation by a local unitary $V_2(\theta)$ induces opposite phases on adjacent tunnellings, yielding $\alpha e^{\pm i\theta}$. (b) Pulse sequences enable independently tunable real coefficients $\beta_{12},\beta_{23}\in\mathbb{R}$ starting from uniform $\alpha\in\mathbb{R}$.}
\label{fig:FHE_examples}

\end{figure*}
\paragraph*{Example 2:}
For the second example, we show how to tune the relative strength of two tunnelling terms in the effective Hamiltonian by making use of more than one local unitary.
This amounts to a simple demonstration of our Hamiltonian engineering method.
We consider $n=3$ fermionic modes on a one-dimensional chain, as indicated in \cref{fig:FHE_chain}, 
\begin{align}
  H_S = \alpha \left(c_1^\dagger c_2 + c_2^\dagger c_3\right) + \mathrm{H.c.}
\end{align}
with $\alpha\in\RR$.
 Implementing conjugations described by the following sequence of operators $\{\1,V_1(\pi),V_2(\pi),V_3(\pi)\}$ we obtain
\begin{align}
  \begin{split}
  & \e^{-\i H_S\lambda_0 t}\prod_{j=1}^{3}V_j(\pi)^\dagger \e^{-\i H_S \lambda_jt} V_j(\pi)  \\
  &\approx  
  \exp\left(-\i\lambda_0t H_S-\i t\sum_{j=1}^{3} \lambda_j V_j(\pi)^\dagger H_S  V_j(\pi)\right)
  \\
  &\equiv 
  \e^{-\i H_\mathrm{eff}t},
  \end{split}
\end{align}
where the approximation can be made precise with a product formula, like the Suzuki-Trotter formula \cite{trotterProductSemiGroupsOperators1959,dalzellQuantumAlgorithmsSurvey2023}.
Due to $V_j(\pi)^\dagger \cdag_k c_l V_j(\pi)= (-1)^{\delta_{jk}+\delta_{jl}}\cdag_k c_l$, 
the effective Hamiltonian is
\begin{align}
  \begin{split}
  H_\mathrm{eff} &=\alpha\Big[ \left(\lambda_0 - \lambda_1 - \lambda_2 + \lambda_3\right) c_1^\dagger c_2 + \left(\lambda_0 + \lambda_1 - \lambda_2 - \lambda_3\right) c_2^\dagger c_3  \Big] + \mathrm{H.c.}
  \end{split}
\end{align}
Therefore, by appropriately choosing the free evolution times $\lambda_j\geq 0$, we can simulate time evolution under any \emph{target Hamiltonian} of the form
\begin{align}
  H_T = \beta_1 c_1^\dagger c_2 + \beta_2 c_2^\dagger c_3 + \mathrm{H.c.}, 
\end{align}
where $\beta_1,\beta_2\in\RR$. 
A suitable choice for $\boldsymbol{\lambda} = (\lambda_0,\lambda_1,\lambda_2,\lambda_3)^\top$ is obtained by solving the linear system
\begin{align}
  \label{eq:1D_chain_constraint}
  \alpha
  \begin{pmatrix}
    1 & -1 & -1 & \phantom{+}1\\
    1 & \phantom{+}1 & -1 & -1
  \end{pmatrix}
  \boldsymbol{\lambda}
  =
  \begin{pmatrix}
    \beta_1\\
    \beta_2
  \end{pmatrix},
\end{align}
which gives us the constraint equation for the \ac{LP} \eqref{eq:LP_quadratic}.

\section{Simulating fermionic Hamiltonian engineering for particle number preserving quadratic Hamiltonians} 
\label[appendix]{sec:Harper--Hofstadter_appendix}

Suppose we are given the Hamiltonian engineering task for particle-number-preserving quadratic system and target Hamiltonians $H_S = \sum_{jk} \alpha_{jk} c_j^\dagger c_k$ and $H_T = \sum_{jk} \beta_{jk} c_j^\dagger c_k$.
As seen in \cref{sec:FHE_general}, the \ac{LP} \eqref{eq:LP} outputs a set of pulses $\{V_\ind{b}\}_\ind{b}$ and evolution times $\{\lambda_\ind{b}\}_\ind{b}$ that can be implemented on hardware.
Lacking the quantum hardware to test the method directly, we resort to numerical experiments.
For interacting systems, these numerical experiments are performed using exact diagonalization due to the transparency and the simplicity of the method, but this drastically limits the number of modes we can reach.
However, we can simulate quadratic Hamiltonians efficiently, since their evolution reduces to that of the $n \times n$ coefficient matrix $h$ for the particle number preserving case and avoids having to deal with the high-dimensional
Fock space.
It is this reduction to the single-particle picture that enables the simulation of the 1088-mode Harper--Hofstadter model.

\subsection{Particle number preserving quadratic Hamiltonians}
\label[appendix]{sec:particle_number_preserving_quadratic}

We begin by discussing some of the basics of particle number preserving quadratic Hamiltonians.
For a comprehensive review of numerical approaches to fermionic quadratic Hamiltonians see \cite{suraceFermionicGaussianStates2022}. 
Let $H$ be a particle number preserving quadratic Hamiltonian with $n$ modes, i.e.
\begin{align}
  H = \sum_{i,j\in[n]} h_{ij} c_i^\dagger c_j
\end{align}
with $n\times n$ \emph{coefficient matrix} $h\in\Herm(\CC^n)$ to ensure Hermiticity of $H$.
Let $\ind{c}\coloneqq(c_1,\dots,c_n)^\top$ be the column vector of all annihilation operators. 
Then $\ind{c}^\dagger$ is the row vector of all creation operators.
This allows us to rewrite the Hamiltonian as
\begin{align}
  H = \ind{c}^\dagger h \ind{c}.
\end{align} 
Since $h$ is Hermitian, there exists a unitary $u\in \U(\CC^n)$ that diagonalizes $h$, i.e.\ $h = u^\dagger \varepsilon u$, where $\varepsilon = \diag(\varepsilon_1,\dots,\varepsilon_n)$, and w.l.o.g. we order the eigenvalues such that $\varepsilon_1\leq \varepsilon_2\dots\leq\varepsilon_n$.
This means that the Hamiltonian can be rewritten as
\begin{align}
  H = \ind{c}^\dagger h \ind{c} = (u\ind{c})^\dagger \varepsilon (u \ind{c}) \equiv \ind{d}^\dagger \varepsilon \ind{d} = \sum_{k\in[n]} \varepsilon_k d_k^\dagger d_k, 
\end{align}
where $\ind d$ is the list of annihilation operators corresponding to the normal modes of $H$.
In particular, $d_j$ and $d_k^\dagger$ satisfy the canonical anti-commutation relations.

\subsubsection{Fermionic Gaussian states}
A state $\rho$ is said to be a \emph{fermionic Gaussian state} if there exists a quadratic Hamiltonian $H$ such that
\begin{align}
  \rho = \frac{\e^{-H}}{Z},
\end{align}
where $Z = \Tr(\e^{-H})$.
Such states are completely characterized by the \emph{correlation matrix} $\Gamma^{(c)}$ defined by $\Gamma^{(c)}_{ij}\coloneqq\langle c_j^\dagger c_i \rangle$.
To see this, consider
\begin{align}
  \rho = \frac{\e^{-H}}{Z} = \frac{\e^{-\sum_k \varepsilon_k d_k^\dagger d_k}}{Z} = \frac{1}{Z}\bigotimes_{k\in[n]} \e^{- \varepsilon_k d_k^\dagger d_k} = \bigotimes_{k\in[n]} \frac{\e^{- \varepsilon_k d_k^\dagger d_k}}{Z_k},
\end{align}
where $Z_k  \coloneqq \Tr(\e^{-\varepsilon_k d_k^\dagger d_k})$.
Therefore the state $\rho$ can be fully specified by the correlation matrix $\Gamma^{(d)}\in\CC^{n\times n}$ which is diagonal in the $d$-basis, and has the occupation numbers $\{\langle d_k^\dagger d_k \rangle\}_k$ on the diagonal.
Transforming back to the $c$-basis
\begin{align}
  \Gamma^{(c)} = u^\dagger \Gamma^{(d)} u,
\end{align}
we have the claimed result.

Assuming there is a gap at the Fermi level, i.e.\ $\varepsilon_N < \varepsilon_{N+1}$, the ground state of the Hamiltonian $H = \ind{c}^\dagger h \ind{c} = \ind{d}^\dagger \varepsilon \ind{d}$ with $N$ particles is the state where the lowest energy $N$ levels in the diagonal basis are filled,
\begin{align}
  \Gamma_0^{(d)} =
  \begin{pmatrix}
    \1_{N\times N} & 0\\
    0 & 0
  \end{pmatrix} = 
  \sum_{k\in[N]} \ketbra{k}{k}.
\end{align}
The correlation matrix in the original $c$-basis is obtained by the transformation
\begin{align}
  \label{eq:ground_slater}
  \Gamma_0^{(c)} = u^\dagger \Gamma_0^{(d)} u  = 
  \sum_{k\in[N]} u^\dagger\ketbra{k}{k}u 
  \eqqcolon\sum_{k\in[N]} \ketbra{\phi_k}{\phi_k} \eqqcolon \Phi_0 \Phi_0^\dagger, 
\end{align}
where $\Phi_0 \in \CC^{n\times N}$ is the \emph{Slater matrix}, whose columns correspond to the $N$ lowest-energy eigenvectors $\{\ket{\phi_k}\}_{k\in[N]}$ of $h$.

More generally, the Slater matrix can be used to represent any pure Gaussian state.
The reason is that any pure Gaussian state corresponds to a correlation matrix $\Gamma$ which is a projector, $\Gamma^2 = \Gamma$.
Therefore,
\begin{align}
  \Gamma = \sum_{k\in N_\mathrm{oc}\subset[n]} \ketbra{\phi_k}{\phi_k} \equiv \Phi\Phi^\dagger,
\end{align}
where $N_\mathrm{oc}\subset[n]$ is the set of occupied modes in the diagonal basis.  
The state $\ket\Phi\in\mc{F}_N$ in the $N$ particle Fock space associated to the Slater matrix $\Phi$ is given by 
\begin{align}
  \ket\Phi = d_{j_1}^\dagger d_{j_2}^\dagger \dots d_{j_N}^\dagger \ket{\Omega},
\end{align}
where $\ket{\Omega}$ is the vacuum state.
Using $d_j^\dagger = \sum_k (u^\dagger)_{kj}c_k^\dagger = \sum_k \Phi_{kj}c_k^\dagger$ and rearranging we find
\begin{align}
  \ket{\Phi} = \sum_{S}\det(\Phi_S)\ket{S},
\end{align}
where $S\in\mc{P}([n])$ such that $|S|=N$ and $s_1<s_2<\dots<s_N$ for all $\{s_1,s_2\dots,s_N\}\in S$, and $\Phi_S\in\CC^{N\times N}$ such that $(\Phi_S)_{jk}\coloneqq \Phi_{s_jk}$.
The \emph{state fidelity} can then be expressed in terms of the Slater matrices as
\begin{align}
  F(\ket{\Psi},\ket{\Phi}) \coloneqq |\braket{\Psi}{\Phi}|^2 = |\det(\Psi^\dagger \Phi)|^2.
\end{align}

\subsubsection{Time evolution of fermionic Gaussian states}
Unitary time evolution of a Gaussian state under a quadratic Hamiltonian is captured by the time evolution of the corresponding correlation matrix.
Given a correlation matrix $\Gamma(0)$ at time $t=0$, the time evolved matrix $\Gamma(t)$ 
is given by
\begin{align}
  \Gamma(t) = \e^{-\i ht}\Gamma(0)\e^{\i ht}.
\end{align}
For pure states $\Gamma(t) = \Phi(t)\Phi^\dagger(t)$, where
\begin{align}
  \label{eq:slater_time_evolution}
  \Phi(t) = \e^{-\i ht}\Phi(0),
\end{align}  
so the dynamics is fully captured by the evolution of the Slater matrix.
\subsubsection{Calculating expectation values of observables}
We are interested in evaluating the expectation values of observables using the correlation matrix.
The expectation value for the energy is 
\begin{align}
  \langle H\rangle = \sum_{i,j\in[n]} h_{ij} \langle c_i^\dagger c_j\rangle = \Tr(h\Gamma).
\end{align}
The \emph{bond current} $J_{kj}$ corresponding to the flow of particles from mode $k$ to mode $j$ is defined using the continuity equation
\begin{align}
  \frac{d n_k}{dt} \eqqcolon -\sum_{j\in[n]} J_{kj}.
\end{align} 
A straightforward calculation using the Heisenberg equation of motion shows that 
\begin{align}
  \frac{d n_k}{dt} = \i\, [H,n_k] = \sum_{j\in[n]} \i\left(h_{jk}c_j^\dagger c_k - h_{kj}c_k^\dagger c_j \right) \implies J_{kj} = \i\left(h_{kj}c_k^\dagger c_j - h_{jk}c_j^\dagger c_k \right).
\end{align}
So, the expectation value of the bond current from $k$ to $j$ is given by
\begin{align}
  \langle J_{kj} \rangle = \i \left(h_{kj} \Gamma_{jk} - h_{jk} \Gamma_{kj}\right) = 2\Im(h_{jk}\Gamma_{kj}).
\end{align}

\subsection{Translating FHE into the language of quadratic Hamiltonians}

A pulse $V_\ind{b}$ is generated by a quadratic Hamiltonian 
\begin{align}
  G_\ind{b} = \frac{2\pi}{3}\sum_{j\in[n]}b_j c_j^\dagger c_j,
\end{align}
so the coefficient matrix $g_\ind{b}$ for the generator $G_\ind{b}$ is given by
\begin{align}
  g_\ind{b} = \frac{2\pi}{3} \diag(b_1,\dots,b_n),
\end{align}
where $b_j\in\TT\equiv\{-1,0,1\}$ as before.
These matrices generate unitaries that act on the smaller $n$-dimensional space
\begin{align}
  v_{\ind{b}}
  = \e^{-\i g_{\ind{b}}}
  = \diag\!\left(
      \e^{-2\pi \i b_1/3},
      \e^{-2\pi \i b_2/3},
      \dots,
      \e^{-2\pi \i b_n/3}
    \right).
\end{align}
In the previous section, \cref{eq:slater_time_evolution} describes the time evolution of a pure Gaussian state.
Using the second order Trotter formula \eqref{eq:second_order_trotter}, the engineered time evolution operator in the quadratic picture corresponding to the target Hamiltonian is
\begin{align}
  \label{eq:second_order_trotter_quadratic}
  u_\mathrm{eng}(t) =\left(\prod_{\mu=1}^{\stackrel{2r}{\leftarrow}}v_{\ind{b}_\mu}^\dagger \e^{-\i h_S\lambda_{\ind{b}_\mu} t/2n_T}v_{\ind{b}_\mu}\prod_{\mu=1}^{\stackrel{2r}{\rightarrow}}v_{\ind{b}_\mu}^\dagger \e^{-\i h_S\lambda_{\ind{b}_\mu} t/2n_T}v_{\ind{b}_\mu}\right)^{n_T}\approx e^{-ih_Tt}.
\end{align}
We can therefore use \cref{eq:second_order_trotter_quadratic} to act on the initial state described by the Slater matrix $\Phi_0$ defined in \cref{eq:ground_slater} corresponding to the ground state of the system Hamiltonian to simulate the quench dynamics of interest, as described in \cref{eq:slater_time_evolution}.

\section{Artificial gauge fields on a triangular lattice}

We consider a quadratic system Hamiltonian on a triangular lattice
\begin{align}
\begin{split}
  H_S = -J\sum_{\langle x,y \rangle} \Big(c_{x+1,y}^\dagger c_{x,y} + c_{x,y+1}^\dagger c_{x,y}\\
   + c_{x-1,y+1}^\dagger c_{x,y} + \mathrm{H.c.} \Big),
\end{split}
\end{align}
and the set of pulses
\begin{align}
  V_1 &= \prod_{x,y} \exp(-\i\theta_1 x n_{x,y}),\\
  V_2 &= \prod_{x,y} \exp(-\i\theta_2 y n_{x,y}),\\
  V_3 &= \prod_{x,y} \exp(-\i\theta_3 (x+y) n_{x,y}).
\end{align}
Setting $\lambda_1 = \lambda_2 = \lambda_3 = 1$, the effective Hamiltonian
\begin{align}
  H_{\mathrm{eff}} &= \sum_{j=1}^3 \lambda_j V_j^\dagger H_S V_j
\end{align}
becomes
\begin{align}
\begin{split}
    H_\mathrm{eff} = -\sum_{\langle x,y \rangle} \Big(J_1 c_{x+1,y}^\dagger c_{x,y} + J_2 c_{x,y+1}^\dagger c_{x,y}\\
   + J_3 c_{x-1,y+1}^\dagger c_{x,y} + \mathrm{H.c.} \Big),
\end{split}
\end{align}
where
\begin{align}
  J_1/J &= \e^{\i\theta_1} + 1 + \e^{\i\theta_3}\\
  J_2/J &= 1 + \e^{\i\theta_2} + \e^{\i\theta_3}\\
  J_3/J &= \e^{-\i\theta_1} + \e^{\i\theta_2}+ 1.
\end{align}
By setting $\theta_1 = \pi/2, \theta_2 = \pi$ and $\theta_3 = \pi$ we find that
\begin{align}
  J_1/J\equiv \e^{\i\phi_1} &= \i = \e^{\i\pi/2}\\
  J_2/J\equiv \e^{\i\phi_2} &= -1  = \e^{\i\pi}\\
  J_3/J\equiv \e^{\i\phi_3} &= -\i = \e^{-\i\pi/2}.
\end{align}
So the total flux per plaquette is
\begin{align}
  \Phi = \phi_1 - \phi_2 + \phi_3 = \pi.
\end{align}

\section{Proof of Lemma \ref{lem:main}}
\label[appendix]{ap:lemma_proof}

In this section, we prove \cref{lem:main}, which provides the backbone of our framework.
We start by introducing some notation and showing some preliminary results. 
For the following definitions and lemmas, let $A\in\RR^{m\times n}$, $\vec{x}\in\RR^n$ and $\vec{y}\in\RR^m$.
The Moore-Penrose pseudo-inverse \cite{rohatgiGeneralizedInversesTheory1976} of the matrix $A$ is denoted by $A^+\in\RR^{n\times m}$.

\begin{definition}
  A linear system $A\vec{x}=\vec{y}$ is said to be consistent if it has at least one solution. 
\end{definition}

\begin{lemma}[\cite{rohatgiGeneralizedInversesTheory1976}]
	\label{lem:lin_system_is_consistent}
  The linear system $A\vec{x}=\vec{y}$ is consistent if and only if $AA^+\vec{y}=\vec{y}$. 
	Further, if $\rank{(A)}=m$ then $AA^+=\1_m$.
  Therefore, if $\rank{(A)} = m$ then the linear system is consistent for all $\vec{y}$.
\end{lemma}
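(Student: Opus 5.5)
This is the standard Moore--Penrose fact, so I would prove it directly from the defining (Penrose) identities $AA^+A=A$, $A^+AA^+=A^+$, and the symmetry of $AA^+$ and $A^+A$.

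For the ``if'' direction, suppose $AA^+\vec y=\vec y$. Then $\vec x\coloneqq A^+\vec y$ satisfies $A\vec x=AA^+\vec y=\vec y$, so the system is consistent.

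For the ``only if'' direction, suppose $A\vec x_0=\vec y$ for some $\vec x_0$. Then $AA^+\vec y=AA^+A\vec x_0=A\vec x_0=\vec y$, using the first Penrose identity.

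For the second claim, assume $\rank(A)=m$. Then the column space of $A$ is all of $\RR^m$. The matrix $P\coloneqq AA^+$ satisfies $P^2=AA^+AA^+=AA^+=P$ and $P^\top=P$, so it is the orthogonal projector onto its range. Its range is contained in the range of $A$, and it contains the range of $A$ because $PA=A$. Hence $P$ projects orthogonally onto $\operatorname{range}(A)=\RR^m$, so $P=\1_m$.

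Alternatively, with full row rank the matrix $AA^\top$ is invertible and $A^+=A^\top(AA^\top)^{-1}$. One checks that this satisfies the Penrose identities, and it gives $AA^+=\1_m$ immediately.

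The final claim then follows from combining the two parts: $AA^+\vec y=\vec y$ holds for every $\vec y$, so the system is consistent for every $\vec y$.

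There is no real obstacle here. The only point needing care is identifying $\operatorname{range}(AA^+)$ with $\operatorname{range}(A)$, and that uses only the Penrose identities.
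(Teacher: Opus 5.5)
Your proof is correct and complete. The witness $\vec x = A^+\vec y$ gives the \emph{if} direction, and $AA^+A = A$ gives the \emph{only if} direction. Under full row rank, the idempotent $AA^+$ has range equal to $\operatorname{range}(A) = \RR^m$ and is therefore $\1_m$; the symmetry of $AA^+$ is not even needed for that step, since any idempotent with full range is the identity.

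The paper gives no proof of its own here. It cites the lemma as a standard fact about generalized inverses, and your derivation from the Penrose identities is the standard textbook argument, so there is nothing to add.
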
 

\begin{lemma}
    \label{lem:primitive_root}
	Let $n \geq 1$ and $k \geq 2$ be integers.
	For any $\ind a \in \ZZ_k^n$ and any primitive $k$-th root of unity $\omega_k$, we have
	\begin{equation}
		\sum_{\ind{b}\in{\ZZ_k^n}} \omega_k^{\ind{a}\cdot\ind{b}} = k^n \delta_{\ind{a},\boldsymbol{0}} .
	\end{equation}
\end{lemma}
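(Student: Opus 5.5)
The statement to prove is Lemma~\ref{lem:primitive_root}: the character sum $\sum_{\ind{b}\in\ZZ_k^n}\omega_k^{\ind{a}\cdot\ind{b}}$ equals $k^n\delta_{\ind{a},\boldsymbol{0}}$. The plan is to factorize this sum over the $n$ coordinates and reduce to a one-dimensional geometric sum.

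First, the summand is well defined on $\ZZ_k^n$. Since $\omega_k^k=1$, the value $\omega_k^{\ind{a}\cdot\ind{b}}$ depends only on $\ind{a}\cdot\ind{b}\bmod k$. That residue depends only on the residues of the $a_j$ and $b_j$ modulo $k$.

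Next, I would write $\omega_k^{\ind{a}\cdot\ind{b}}=\prod_{j\in[n]}\omega_k^{a_jb_j}$. The sum over $\ind{b}\in\ZZ_k^n$ ranges over the product set $\ZZ_k\times\cdots\times\ZZ_k$, so it factorizes as
\begin{align}
  \sum_{\ind{b}\in\ZZ_k^n}\omega_k^{\ind{a}\cdot\ind{b}}=\prod_{j\in[n]}\Bigl(\sum_{b_j\in\ZZ_k}\omega_k^{a_jb_j}\Bigr).
\end{align}
It therefore suffices to show the one-mode identity $\sum_{b\in\ZZ_k}\omega_k^{ab}=k\,\delta_{a,0}$ for $a\in\ZZ_k$.

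If $a=0$, every term equals $1$, giving $k$. If $a\neq0$ in $\ZZ_k$, set $z=\omega_k^{a}$. Because $\omega_k$ is primitive, $\omega_k^m=1$ holds exactly when $k\mid m$. Since $k\nmid a$, we get $z\neq1$, while $z^k=(\omega_k^k)^a=1$. The geometric series then gives $\sum_{b=0}^{k-1}z^b=(z^k-1)/(z-1)=0$.

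Finally, I would combine the factors. If $\ind{a}=\boldsymbol{0}$, every factor is $k$, so the product is $k^n$. Otherwise some coordinate $a_j\neq0$, its factor vanishes, and the product is $0$. This yields $k^n\delta_{\ind{a},\boldsymbol{0}}$.

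The only genuinely delicate point is the use of primitivity to guarantee $\omega_k^{a}\neq1$ for $a\not\equiv0$. Without primitivity the lemma fails, for example if $\omega_k$ were $1$. The rest of the argument is routine bookkeeping.
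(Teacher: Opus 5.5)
Your proof is correct and follows essentially the same route as the paper. Both handle the one-mode case with the geometric series, using primitivity to get $\omega_k^{a}\neq 1$ for $a\neq 0$, and then factorize the $n$-mode sum into a product over coordinates. Your remark that $\omega_k^{\ind{a}\cdot\ind{b}}$ is well defined on $\ZZ_k^n$ is a harmless addition that the paper leaves implicit.
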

\begin{proof}
	First, we consider $n=1$.
	Let $a = 0 \in \ZZ_k$, then
	\begin{equation}
		\sum_{b \in \ZZ_k} \omega_k^{ab}
		= \sum_{b \in \ZZ_k} \omega_k^0
		= \sum_{b \in \ZZ_k} 1
		= |\ZZ_k|
		= k .
	\end{equation}
	It is easy to check that $\omega_k^a$ is another $k$-th root of unity, and assuming $a \neq 0$, we have $\omega_k^a \neq 1$ by the assumption that $\omega_k$ is primitive.
	Then, by the summation formula for finite geometric series,
	\begin{equation}
		\sum_{b \in \ZZ_k} \omega_k^{ab}
		= \sum_{b=0}^{k-1} (\omega_k^a)^b
		= \frac{1 - (\omega_k^a)^k}{1 - \omega_k^a}
		= \frac{1 - 1}{1 - \omega_k^a}
		= 0 ,
	\end{equation}
	employing the very definition of a $k$-th root.
	
	The generalization to arbitrary $n$ is straightforward,
	\begin{equation}
		\sum_{\ind{b}\in{\ZZ_k^n}} \omega_k^{\ind{a}\cdot\ind{b}}
		= \sum_{\ind{b}\in{\ZZ_k^n}} \prod_{j \in [n]} \omega_k^{a_j b_j}
		= \prod_{j \in [n]} \sum_{b_j \in \ZZ_k} \omega_k^{a_j b_j}
		= \prod_{j \in [n]} (k \delta_{a_j,0})
		= k^n \delta_{\ind{a},\boldsymbol{0}} .
	\end{equation}
\end{proof}

Let $F_k \in \CC^{d\times c}$ be the submatrix of the Fourier matrix defined by $(F_k)_{\ind{a}\ind{b}} = \omega_k^{\ind{a}\cdot\ind{b}}$ where $\omega_k = e^{2\pi \i/k}$, and $\ind{a}\in\mathcal{T}_n\subset\TT^n$ and $\ind{b}\in \ZZ_k^n$. 
This means that we have $d = |\mathcal{T}_n| = (3^n-1)/2$ and $c = k^n$ .

\begin{lemma}
  \label{lem:main_appendix}
  The conical hull of the set of columns of $F_k$ is the complex vector space $\CC^d$, i.e. $\cone(\col(F)) = \CC^d$, for any integer $k\geq 3$. 
  Hence, as a map $F_k:\RR_{\geq 0}^c \to \CC^d$, $F_k$ is surjective.
\end{lemma}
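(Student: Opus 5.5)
The plan is to view $\CC^d$ as the real vector space $\RR^{2d}$ and use an elementary fact from convex geometry. If a finite set of vectors $\{\vec v_\ind{b}\}_\ind{b}$ spans a real vector space $W$ over $\RR$, and some \emph{strictly positive} combination of them vanishes, then their conical hull is all of $W$.

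To see this, let $\vec y\in W$ and write $\vec y=\sum_\ind{b}\mu_\ind{b}\vec v_\ind{b}$ with $\mu_\ind{b}\in\RR$. Suppose $\sum_\ind{b}\vec v_\ind{b}=0$. Then for $s\geq\max_\ind{b}\abs{\mu_\ind{b}}$ we also have $\vec y=\sum_\ind{b}(\mu_\ind{b}+s)\vec v_\ind{b}$, and all these coefficients are nonnegative. So it suffices to establish two properties of the columns $\vec v_\ind{b}\coloneqq F_k\vec e_\ind{b}$, for $\ind b\in\ZZ_k^n$: that they sum to zero, and that they span $\CC^d$ over $\RR$. Surjectivity of $F_k:\RR^c_{\geq0}\to\CC^d$ is then just a restatement.

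\emph{Step 1 (zero positive combination).} The $\ind a$-th entry of $\sum_\ind{b}\vec v_\ind{b}$ is $\sum_{\ind b\in\ZZ_k^n}\omega_k^{\ind a\cdot\ind b}$. Here $\ind a\in\mathcal T_n$ is read as an element of $\ZZ_k^n$. It is nonzero there, since its first nonzero entry is $1\not\equiv 0 \pmod k$. By \cref{lem:primitive_root} every entry therefore vanishes, so $\sum_\ind{b}\vec v_\ind{b}=0$ with all weights equal to $1>0$.

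\emph{Step 2 (real spanning).} This is the main obstacle, and it is the only place where $k\geq3$ enters. I would argue by duality. Suppose $\vec z\in\CC^d$ is orthogonal to every column with respect to the real inner product $\Re\langle\cdot,\cdot\rangle$, i.e.\ $\Re\sum_\ind{a}\overline{z_\ind{a}}\,\omega_k^{\ind a\cdot\ind b}=0$ for all $\ind b$. Writing $2\Re w=w+\bar w$, this becomes
\begin{align}
  \sum_{\ind a\in\mathcal T_n}\overline{z_\ind a}\,\omega_k^{\ind a\cdot\ind b}+\sum_{\ind a\in\mathcal T_n} z_\ind a\,\omega_k^{(-\ind a)\cdot\ind b}=0\qquad\forall\,\ind b\in\ZZ_k^n .
\end{align}
This is a linear relation among the characters $\chi_\ind{a'}(\ind b)=\omega_k^{\ind a'\cdot\ind b}$ of $\ZZ_k^n$, indexed by $\ind a'\in\mathcal T_n\sqcup(-\mathcal T_n)=\TT^n\setminus\{\ind 0\}$.

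I then need these $3^n-1$ characters to be pairwise distinct. For $\ind a'\neq\ind a''$ in $\TT^n$, the difference $\ind a'-\ind a''$ has entries in $\{-2,\dots,2\}$ and at least one nonzero entry. Such an entry is nonzero modulo $k$ exactly when $k\geq3$; for $k=2$, $+1$ and $-1$ coincide, which explains why $k=2$ fails. Given distinctness, \cref{lem:primitive_root} applied to $\ind a'-\ind a''$ gives orthogonality of the characters, hence their linear independence. All coefficients in the relation must then vanish, so $\vec z=0$. Thus the real span of the columns has trivial orthogonal complement, i.e.\ it equals $\CC^d\cong\RR^{2d}$.

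The same orthogonality argument shows $F_kF_k^\dagger=k^n\1_d$, which could alternatively be fed into \cref{lem:lin_system_is_consistent}. That only yields complex solvability, however, so the real-span argument above is the one I would rely on. Combining Steps 1 and 2 with the convex-geometry observation completes the proof for every $k\geq3$, and in particular gives \cref{lem:main} for $k=3$.
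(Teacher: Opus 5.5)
Your proof is correct and follows the paper's route. Like the paper, you realify $F_k$, show that the resulting real $2d\times k^n$ matrix has full row rank via character orthogonality (\cref{lem:primitive_root}), note that the all-ones vector lies in its kernel, and shift a real solution along $\boldsymbol{1}$ until it is nonnegative. The differences are minor: you obtain full real rank by a duality argument (a trivial left kernel, from linear independence of the distinct characters indexed by $\TT^n\setminus\{\ind 0\}$) rather than by computing $\tilde F_k\tilde F_k^\top=\tfrac{k^n}{2}\1$, and you make explicit where $k\geq 3$ is needed, which the paper's $\delta$-evaluation leaves implicit.
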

\begin{proof}
  Let $\boldsymbol{\lambda}\in \RR_{\geq 0}^{c}$ be an entrywise nonnegative vector.
  We prove the lemma by showing that the linear system of equations
  \begin{align}
    F_k \boldsymbol{\lambda} = \boldsymbol{\gamma}
  \end{align}
  is consistent for all $\boldsymbol{\gamma}\in{\CC^d}$.
  We show this by successively shrinking the domain of the variable $\boldsymbol{\lambda}$ from complex, to real, and finally to nonnegative, while preserving consistency.
  
  Let $\boldsymbol{z}\in \CC^{c}$, then the complex equation system 
  \begin{align}
    \label{eq:complex_system}
    F_k\boldsymbol{z} = \boldsymbol{\gamma}
  \end{align}
  is clearly consistent for all $\boldsymbol{\gamma}\in\CC^d$ since $\rank(F_k)=d$.
  To consider the real case, we first use the decomposition $F_k = \Re(F_k)+i\Im(F_k)$ to rewrite \cref{eq:complex_system} as a real equation system,
  \begin{align}
    \label{eq:real_system}
    \begin{pmatrix}
      \Re(F_k) & -\Im(F_k) \\
      \Im(F_k) & \Re(F_k) 
    \end{pmatrix}
    \begin{pmatrix}
      \Re(\ind{z})\\
      \Im(\ind{z})
    \end{pmatrix}
    =
    \begin{pmatrix}
      \Re(\ind{\gamma})\\
      \Im(\ind{\gamma})
    \end{pmatrix}.
  \end{align}
  \Cref{eq:complex_system} has purely real solutions, if and only if \cref{eq:real_system} has solutions with $\Im(\ind{z})=0$.
  Thus, we wish to find solutions with less degrees of freedom on the left-hand side, but the same number of constraints and right-hand side;
  or equivalently, solutions to the smaller system
  \begin{align}
    \label{eq:reduced_system}
    \begin{pmatrix}
      \Re(F_k) \\
      \Im(F_k)  
    \end{pmatrix}
      \ind{x}
      \eqqcolon \tilde{F}_k\ind{x} = \tilde{\ind{\gamma}}
      \coloneqq
    \begin{pmatrix}
      \Re(\ind{\gamma})\\
      \Im(\ind{\gamma})
    \end{pmatrix},
  \end{align}  
  where $\ind{x}\in\RR^{c},\tilde{\ind{\gamma}}\in\RR^{2d}$ and $\tilde{F}_k\in \RR^{2d\times c}$.
  We show that the rows of $\tilde{F}_k$ are mutually orthogonal, i.e. that
\begin{align}
    \tilde{F}_k \tilde{F}_k^\top =
    \begin{pmatrix}
      \Re(F_k)\Re(F_k)^\top & \Re(F_k) \Im(F_k)^\top\\
      \Im(F_k) \Re(F_k)^\top & \Im(F_k)\Im(F_k)^\top
    \end{pmatrix}
\end{align}
is proportional to the identity.
  The real and imaginary parts of the matrix $F_k$ have entries
  \begin{align}
    (\Re(F_k))_{\boldsymbol{a}\boldsymbol{b}} &= \frac{1}{2}(\omega_k^{\boldsymbol{a}\cdot \boldsymbol{b}} + \omega_k^{-\boldsymbol{a}\cdot \boldsymbol{b}}),
    \\
		(\Im(F_k))_{\boldsymbol{a}\boldsymbol{b}} &= \frac{1}{2\i}(\omega_k^{\boldsymbol{a}\cdot \boldsymbol{b}} - \omega_k^{-\boldsymbol{a}\cdot \boldsymbol{b}}). 
  \end{align}
  Therefore,
  \begin{align}
  \label{eq:fr_2}
  \begin{split}
    \left(\Re(F_k)\Re(F_k)^\top\right)_{\ind{a}\ind{b}} &= \sum_{\ind{c}\in\ZZ_k^n} \left(\Re(F_k)\right)_{\ind{a}\ind{c}} \left(\Re(F_k)\right)_{\ind{b}\ind{c}}\\
    &= \frac{1}{4} \sum_{\ind{c}\in\ZZ_k^n}
    \left(
    \omega_k^{\boldsymbol{a}\cdot\boldsymbol{c}}\omega_k^{\boldsymbol{b}\cdot\boldsymbol{c}}
    + \omega_k^{\boldsymbol{a}\cdot\boldsymbol{c}}\omega_k^{-\boldsymbol{b}\cdot\boldsymbol{c}}
    + \omega_k^{-\boldsymbol{a}\cdot\boldsymbol{c}}\omega_k^{\boldsymbol{b}\cdot\boldsymbol{c}}
    + \omega_k^{-\boldsymbol{a}\cdot\boldsymbol{c}}\omega_k^{-\boldsymbol{b}\cdot\boldsymbol{c}}\right)\\
    &= \frac{1}{4} \sum_{\ind{c}\in\ZZ_k^n}
    \left(
    \omega_k^{(\boldsymbol{a}+\boldsymbol{b})\cdot\ind{c}} +
    \omega_k^{(\boldsymbol{a}-\boldsymbol{b})\cdot\ind{c}} +
    \omega_k^{(-\boldsymbol{a}+\boldsymbol{b})\cdot\ind{c}} +
    \omega_k^{(-\boldsymbol{a}-\boldsymbol{b})\cdot\ind{c}}\right)\\
    &= \frac{k^n}{2} \left( \delta_{\boldsymbol{a},\boldsymbol{b}} + \delta_{\boldsymbol{a},-\boldsymbol{b}}\right) = \frac{k^n}{2}\delta_{\boldsymbol{a},\boldsymbol{b}},
  \end{split}
\end{align}
where we made use of \cref{lem:primitive_root} for the first equality in the final line.
Recall the split of \cref{eq:trit_split} in the form $\ind{a}\in\mathcal{T}_n \iff -\ind{a}\notin\mathcal{T}_n$ (for $\ind{a} \neq \boldsymbol{0}$), then the row index restriction on $F_k$ gives us
\begin{equation}
	\ind{a}\in\mathcal{T}_n \wedge \ind{b}\in\mathcal{T}_n
	\quad\iff\quad
	\ind{a}\in\mathcal{T}_n \wedge -\ind{b}\notin\mathcal{T}_n
	\quad\implies\quad
	\ind{a} \neq -\ind{b}
	\quad\iff\quad
	\delta_{\boldsymbol{a},-\boldsymbol{b}} = 0,
\end{equation}
and thus the last equality above.
Analogously,
\begin{align}
  \label{eq:fi_2}
  \begin{split}
    \left(\Im(F_k)\Im(F_k)^\top\right)_{\ind{a}\ind{b}}
    = \frac{k^n}{2} \left( \delta_{\boldsymbol{a},\boldsymbol{b}} - \delta_{\boldsymbol{a},-\boldsymbol{b}}\right) = \frac{k^n}{2}\delta_{\boldsymbol{a},\boldsymbol{b}},
  \end{split}
\end{align}
and
\begin{align}
  \left(\Re(F_k) \Im(F_k)^\top \right)_{\ind{a}\ind{b}}= \left(\Im(F_k) \Re(F_k)^\top \right)_{\ind{a}\ind{b}} = 0.
\end{align}
This means that all the rows of $\tilde{F}_k$ are mutually orthogonal and $\rank(\tilde{F}_k)=2d$, therefore by \cref{lem:lin_system_is_consistent} the equation system \eqref{eq:reduced_system} is consistent for all $\tilde{\ind{\gamma}}\in\RR^{2d}$.
We also conclude that the pseudo-inverse is $\tilde{F}_k^+ = \frac{2}{k^n} \tilde{F}_k^\top$, and $\tilde{F}_k^+ \tilde{\ind\gamma}$ is a particular solution of \cref{eq:reduced_system}, and hence a particular \emph{real} solution of \cref{eq:complex_system}.

To see that the equation system remains consistent when we restrict the solutions to be nonnegative, notice that the all-ones-vector $\ind{1}\in\RR^c$ is in the kernel of $\tilde{F}_k$,
\begin{align}
  \tilde{F}_k\boldsymbol{1} = \begin{pmatrix} \Re(F_k) \boldsymbol{1}\\ \Im(F_k)\boldsymbol{1}\end{pmatrix} = \ind{0}.
\end{align}
This is because \cref{lem:primitive_root} gives
\begin{align}
  \begin{split}
  \sum_{\ind{b}\in{\ZZ_k^n}} \omega_k^{\ind{a}\cdot\ind{b}} = \sum_{\ind{b}\in{\ZZ_k^n}}\left( \Re\left(\omega_k^{\ind{a}\cdot\ind{b}}\right) + \i\Im\left(\omega_k^{\ind{a}\cdot\ind{b}}\right) \right)= k^n \delta_{\ind{a},\ind{0}} \\
  \implies \sum_{\ind{b}\in{\ZZ_k^n}} \Re\left(\omega_k^{\ind{a}\cdot\ind{b}}\right) = 0 \quad \text{and} \quad \sum_{\ind{b}\in{\ZZ_k^n}}\Im\left(\omega_k^{\ind{a}\cdot\ind{b}}\right) = 0 \quad \forall\ind{a}\neq\ind{0},
  \end{split}
\end{align}
and therefore
\begin{align}
  (\Re(F)\boldsymbol{1})_\ind{a} = \sum_{\ind{b}\in\ZZ_k^n} \Re\left(\omega_k^{\ind{a}\cdot\ind{b}}\right) = 0 \quad \text{and}\quad  (\Im(F)\boldsymbol{1})_\ind{a} = \sum_{\ind{b}\in\ZZ_k^n} \Im\left(\omega_k^{\ind{a}\cdot\ind{b}}\right)=0, \quad \forall \ind{a}\in\mathcal{T}_n.
\end{align}
It is a standard result in linear algebra that the solution space of a linear equation system is an affine shift of the kernel, so moving from a real solution along the positive direction of $\boldsymbol{1}$ gives more real solutions.
All entries increase along this ray, so it eventually has to become nonnegative.
Concretely, for any given $\ind{\gamma} \in \CC^d$,
\begin{align}
  \ind{\lambda} = \tilde{F}_k^+ \tilde{\ind\gamma} + \abs{\min(\tilde{F}_k^+ \tilde{\ind\gamma})}\boldsymbol{1}
\end{align}
is a nonnegative solution to \cref{eq:reduced_system}, where the minimum is taken over all entries of the vector.
\end{proof}

\Cref{lem:main} in the main text is a special case of \cref{lem:main_appendix} with $k=3$.

\section{Efficient relaxation}
\label[appendix]{sec:relaxation}
In general, \acp{LP} can be solved in polynomial time. 
However, 
the \ac{LP}
\begin{align}
\begin{split}
\label{eq:LP_appendix}
\text{minimize} \quad & \boldsymbol{1}^\top \boldsymbol{\lambda} \\
\text{subject to} \quad &  F' \boldsymbol{\lambda} = \boldsymbol{\gamma}', \\
                        &  \boldsymbol{\lambda} \in \RR_{\geq 0}^{c},
\end{split}
\end{align} 
where $F' \coloneqq F\big|_{\supp(\alpha)}\in\CC^{r\times c}$ grows exponentially with the system size, making it computationally expensive for larger systems. 
In this section, we argue that for many Hamiltonian engineering instances the linear program can be relaxed to polynomial size while still offering a reasonable but suboptimal solution.
This method was first used for the Hamiltonian engineering task for spin systems in \cite{basslerGeneralEfficientRobust2025a}. 
It is important to emphasize, that the pulses found by the relaxed \ac{LP} still allows us to decompose the target Hamiltonian \emph{exactly}, it is only the quantum runtime that is suboptimal.

\begin{figure*}[t]
\centering

\begin{minipage}{0.32\textwidth}
\centering
\includegraphics[width=\linewidth]{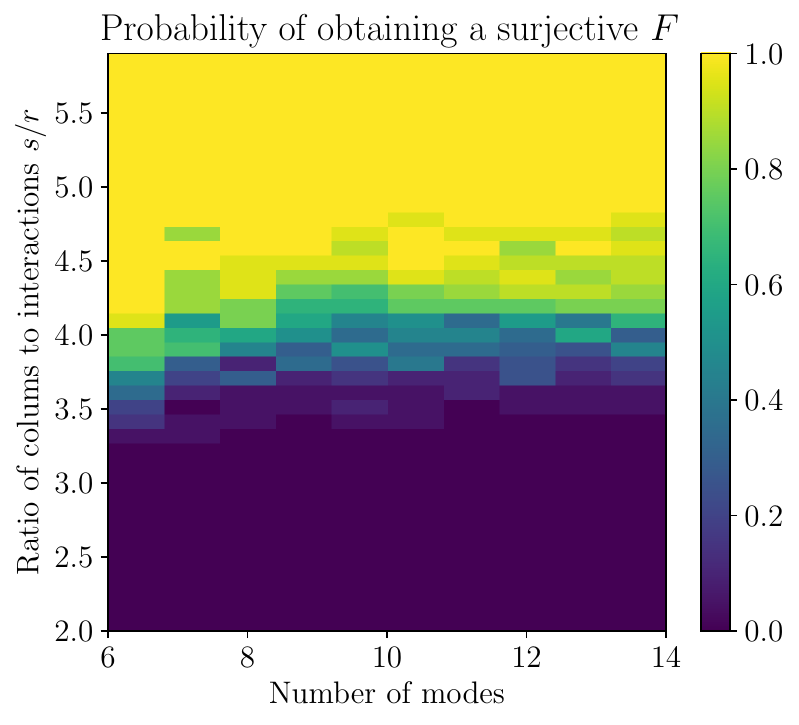}

(a)
\end{minipage}\hfill
\begin{minipage}{0.32\textwidth}
\centering
\includegraphics[width=\linewidth]{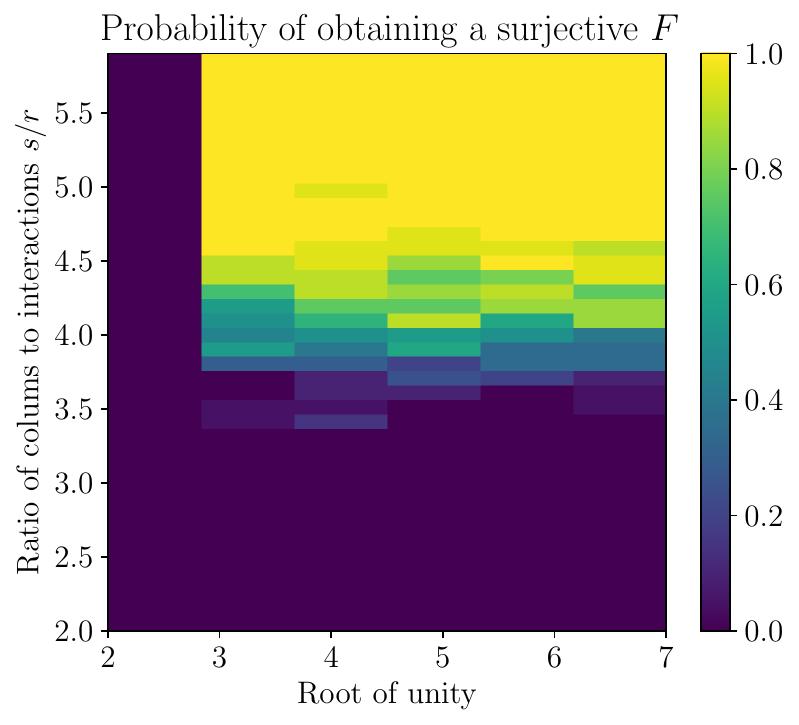}

(b)
\end{minipage}\hfill
\begin{minipage}{0.32\textwidth}
\centering
\includegraphics[width=\linewidth]{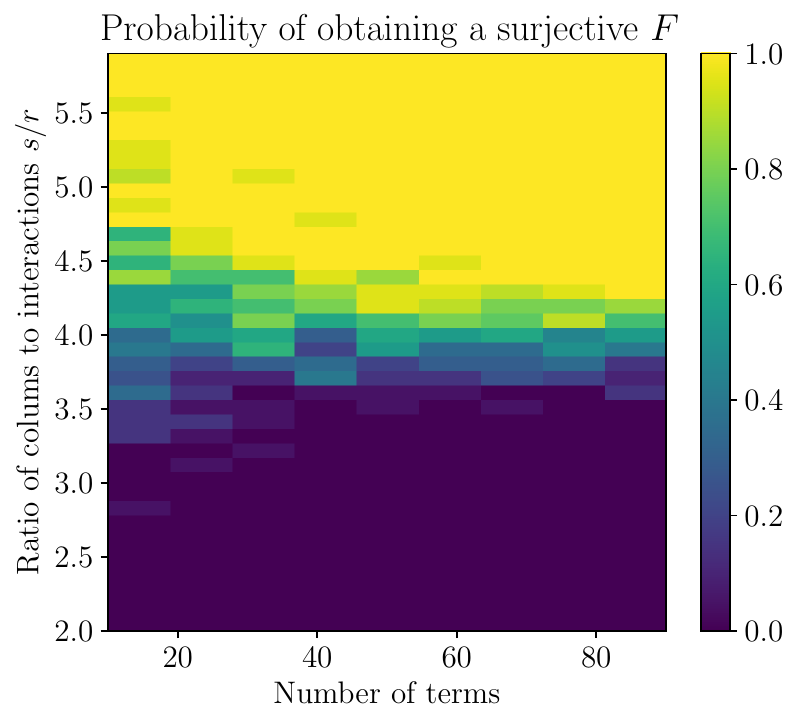}

(c)
\end{minipage}

\caption{Phase transition in the probability of obtaining a feasible matrix as a function of the ratio between the number of sampled pulses (columns) and the number of Hamiltonian terms (rows). The transition occurs when the number of pulses is roughly four times the number of terms. 
(a) Varying the number of modes while keeping the number of interactions fixed at $40$ and the root of unity fixed at $3$. 
(b) Varying the root of unity while keeping the number of interactions fixed at $40$ and the number of modes fixed at $8$. As expected, conical combinations of second roots of unity are insufficient to span the full space. 
(c) Varying the number of interaction terms while keeping the root of unity fixed at $3$ and the number of modes fixed at $8$. The phase transition becomes sharper as the number of terms increases.
}
\label{fig:phase_transition}

\end{figure*}

The idea is to construct a new constraint matrix $\tilde{F}\in \CC^{r\times s}$ by randomly sampling $s$ columns ---corresponding to $s$ random local pulses \eqref{eq:defn_pulse}--- of the full matrix $F$, until the columns of $\tilde{F}$ conically span $\CC^r$, $\cone(\col(\tilde{F})) = \CC^r$, i.e. until $\tilde{F}$ is surjective.
This can be efficiently checked thanks to the following proposition.

\begin{proposition}
  \label{prop:origin_interior}
  Let $\col(\tilde{F})$ be the set of columns of $\tilde{F}$ interpreted as vectors in $\CC^r$.
  Then
  \begin{align}
    \label{eq:origin_interior}
  \boldsymbol{0}\in\operatorname{int}(\conv(\col(\tilde{F}))) \implies \cone(\col(\tilde{F})) = \CC^r .
  \end{align}
\end{proposition}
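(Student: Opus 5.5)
We identify $\CC^r$ with $\RR^{2r}$ as a real vector space, since conical combinations only use nonnegative real coefficients. The statement is then a purely real-geometric fact about a finite set of vectors $\{\ind{f}_1,\dots,\ind{f}_s\}\subset\RR^{2r}$ (the columns of $\tilde{F}$). The plan is to show two things. First, the cone contains a neighbourhood of the origin. Second, a convex cone containing a ball around $\boldsymbol{0}$ must be the whole space.

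\emph{Step 1.} Assume $\boldsymbol{0}\in\operatorname{int}(\conv(\col(\tilde F)))$. By definition of the interior, there exists $\varepsilon>0$ such that the ball $B_\varepsilon(\boldsymbol{0})\subseteq\conv(\col(\tilde F))$. Every convex combination $\sum_i\mu_i\ind{f}_i$ with $\mu_i\ge0$ and $\sum_i\mu_i=1$ is in particular a conical combination. Hence $\conv(\col(\tilde F))\subseteq\cone(\col(\tilde F))$, and therefore $B_\varepsilon(\boldsymbol{0})\subseteq\cone(\col(\tilde F))$.

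\emph{Step 2.} Take an arbitrary $\ind{v}\in\CC^r$ and assume $\ind{v}\neq\boldsymbol{0}$; the case $\ind{v}=\boldsymbol{0}$ is trivial, using $\lambda=0$. Set $\ind{u}\coloneqq \tfrac{\varepsilon}{2\norm{\ind{v}}}\ind{v}$, which lies in $B_\varepsilon(\boldsymbol{0})$. By Step 1, $\ind{u}=\sum_i\lambda_i\ind{f}_i$ for some $\lambda_i\ge0$. Then
\[
\ind{v}=\sum_i \tfrac{2\norm{\ind{v}}}{\varepsilon}\lambda_i\ind{f}_i ,
\]
and these coefficients are still nonnegative. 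Hence $\ind{v}\in\cone(\col(\tilde F))$, which gives $\cone(\col(\tilde F))=\CC^r$ as claimed. In the language of the paper, $\tilde F:\RR^s_{\ge0}\to\CC^r$ is then surjective.

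The only point needing care is interpretive rather than technical: the interior must be taken with respect to the real topology of $\CC^r\cong\RR^{2r}$, because the coefficients in \cref{eq:cone_defn} are real. Once this is fixed, the argument is elementary. Scaling invariance of the cone, as used in Step 2, is the key property. If desired, I would also remark that the condition in \eqref{eq:origin_interior} can be checked efficiently by an LP. One way is to test feasibility of $\tilde F\ind{\lambda}=\pm\ind{e}_j$ (and $\pm\i\,\ind{e}_j$) with $\ind{\lambda}\ge0$; a simplex containing these $4r$ points around the origin already witnesses the interior condition. This matches the paper's claim that surjectivity can be verified efficiently.
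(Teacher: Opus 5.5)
Your proof is correct and is essentially the paper's argument: the ball $B_\varepsilon$ lies in $\conv(\col(\tilde F))\subseteq\cone(\col(\tilde F))$, and every vector of $\CC^r$ is a nonnegative multiple of a point in that ball. The paper expresses the rescaling through the map $f(\ind{x})=\ind{x}/(\varepsilon-|\ind{x}|)$; the only properties of $f$ it actually uses are that $f$ is onto and that $f(\ind{x})$ is a positive multiple of $\ind{x}$, so your explicit choice $\ind{u}=\tfrac{\varepsilon}{2\lVert\ind{v}\rVert}\ind{v}$ is a more direct version of the same step.
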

\begin{proof}
  Let $B_\varepsilon\coloneqq \{\ind{x}\in\CC^r\big|\ |\ind{x}|<\varepsilon\}$ denote the open ball of radius $\varepsilon$ and let $\col(F)=\{\ind{v}_j\}_{j\in[s]}$.
  $\boldsymbol{0}~\in~\operatorname{int}(\conv(\col{\tilde{F}}))~\implies~\exists\varepsilon>0$ such that $B_\varepsilon \subset \conv(\col(\tilde{F}))$.
   Therefore every point in this open ball can be written as a convex combination of the columns of $\tilde{F}$, i.e.
   \begin{align}
    \forall \ind{x}\in B_\varepsilon,\ \exists\lambda_j\geq 0 \text{ such that } \ind{x} = \sum_{j=1}^s \lambda_j \ind{v}_j \text{ and }\sum_{j=1}^s\lambda_j = 1.
   \end{align}
   The open ball $B_\varepsilon$ is homeomorphic to the full space $B_\varepsilon\cong \CC^r$, where the homeomorphism is given by $f:B_\varepsilon\to\CC^r$, defined by
   \begin{align}
    f(\ind{x}) \coloneqq \frac{\ind{x}}{\varepsilon-|\ind{x}|}.
   \end{align}
  This means that for any point $\ind{z}\in\CC^r$ there exists $\ind{x}\in B_\varepsilon$ such that $\ind{z}= f(\ind{x})$. 
  Therefore $\ind{z}\in\cone(\col(\tilde{F}))$ since
  \begin{align}
    \ind{z} = f(\ind{x}) = \frac{\ind{x}}{\varepsilon-|\ind{x}|} = \sum_{j=1}^s \frac{\lambda_j}{\varepsilon-|\ind{x}|} \ind{v_j} \quad \text{and} \quad \frac{\lambda_j}{\varepsilon-|\ind{x}|}\geq 0.
  \end{align}
\end{proof}
\Cref{eq:origin_interior} can be checked efficiently, since if the \ac{LP}
\begin{align}
\begin{split}
\label{eq:check_LP}
\text{minimize} \quad & \boldsymbol{1}^\top \boldsymbol{\lambda} \\
\text{subject to} \quad &  \tilde{F} \boldsymbol{\lambda} = \boldsymbol{0}, \\
                        &  \boldsymbol{\lambda} \geq \boldsymbol{1} \quad \text{entrywise},
\end{split}
\end{align}
has a feasible solution, and $\tilde{F}$ has full rank, then $\tilde{F}$ is surjective.

We numerically observe that the number of columns we need to sample to conically span $\CC^r$ is given by four times the number of rows $s = 4r$, as can be seen from \cref{fig:phase_transition}.
Similar to the qubit case \cite{basslerGeneralEfficientRobust2025a}, this parallels a result by Wendel \cite{wendelProblemGeometricProbability1962}, where it's shown that when sampling $s$ vectors from a spherically symmetric distribution in $\RR^{r}$, the probability that all $s$ vectors lie in one half space is given by 
\begin{align}
  \PP_\mathrm{Wendel} [s,r] = \frac{1}{2^{s-1}} \sum_{j=0}^{r-1} \binom{s-1}{j},
\end{align}
which has a sharp phase transition from one to zero at $s=2r$. Our observations fit
\begin{align}
  \PP[\tilde{F} \mathrm{\ is\ surjective}] \approx 1-\PP_\mathrm{Wendel}[s,4r],
\end{align}
so there is a sharp transition in the probability that the conical span of the columns of $\tilde{F}$ is the full space (the negation of Wendel's statement).
The extra factor of $2$ stems from the fact that in our case the vector space to be spanned is complex, so the equivalent real dimension would be $2r$.

\begin{figure*}[t]
\centering
\subfloat[]{
\includegraphics[width=0.48\textwidth]{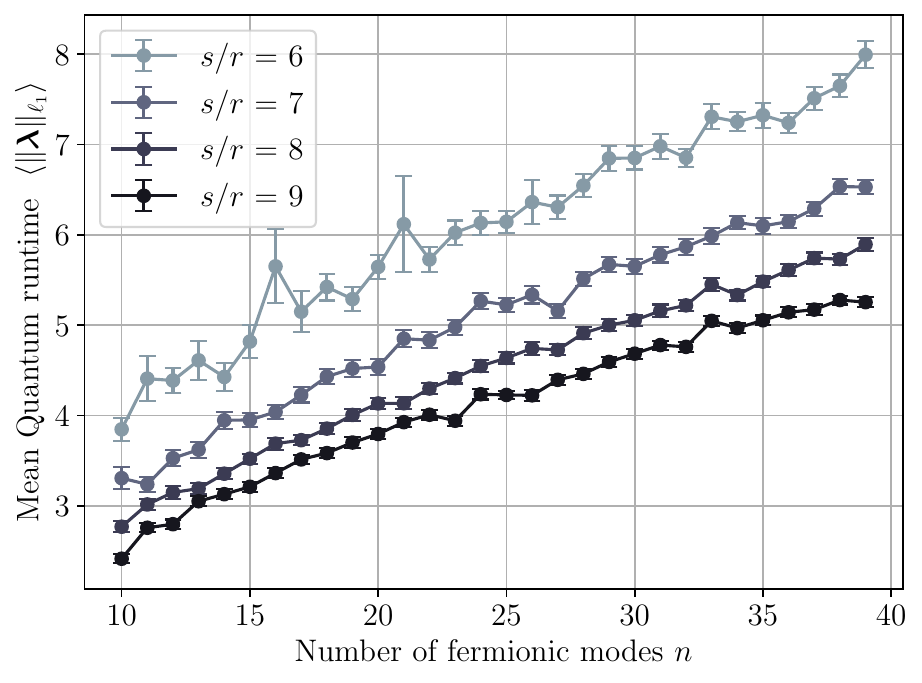}
}
\hfill
\subfloat[]{
\includegraphics[width=0.48\textwidth]{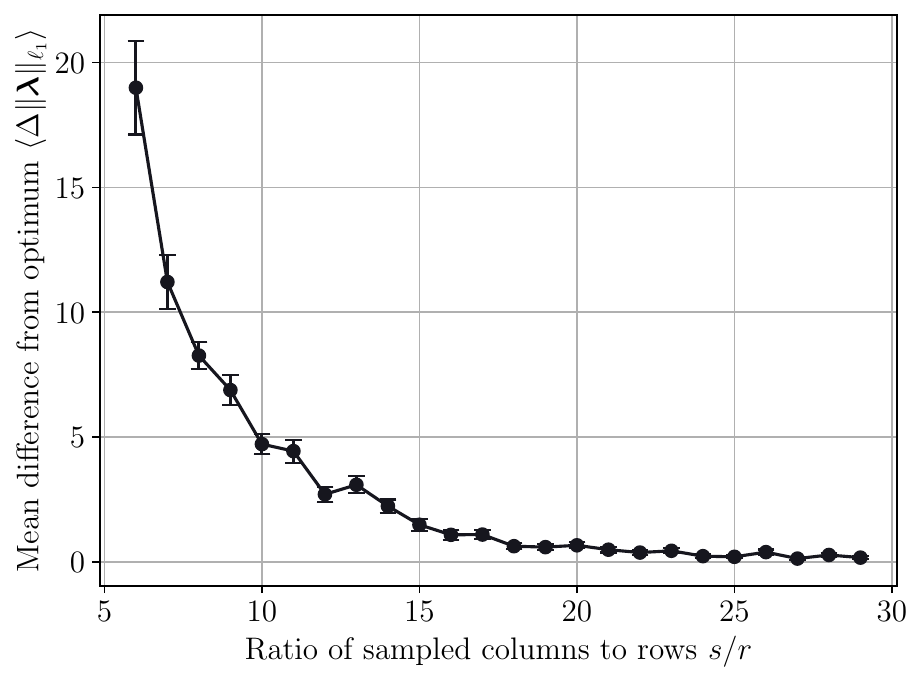}
}
\caption{(a)Average quantum runtime for the simulation of a target Hamiltonian starting from a quadratic system Hamiltonian on a 1D chain.
The averages are calculated over 100 randomly picked Hamiltonians on a chain of length 8, with $|\beta_\ind{a}|\leq 1$, and with the same support as the system Hamiltonian.
We observe that the average runtime decreases as we sample more columns from the constraint matrix, or as the column to row ratio of the reduced constraint matrix $\tilde{F}$ increases. The error bars indicate standard error of the mean.
(b) The mean difference of the total quantum runtime from the optimum for the efficient relaxation for randomly picked target Hamiltonians on a chain of length 8. For each $s/r$ ratio, 100 random Hamiltonians are generated and the pulse sequence is constructed from the LP \eqref{eq:LP_appendix} using both the full constraint matrix $F'$ and the relaxed matrix $\tilde{F}$. As we increase the ratio of columns to rows, we approach the optimum, quantified by the difference in quantum runtime $\Delta\|\boldsymbol{\lambda}\|_{\ell_1} \coloneqq \big| \|\boldsymbol{\lambda}^*\|_{\ell_1} - \|\boldsymbol{\lambda}\|_{\ell_1}\big|$. The error bars indicate standard error of the mean. 
}
\label{fig:trade-off}
\end{figure*}

As can be seen from \cref{fig:trade-off}, by increasing the number of columns we sample, we increase the likelihood of finding a solution close to the optimum, thereby reducing quantum runtime.
However, since the size of the linear program increases as we sample more columns, the classical runtime for the linear program increases, giving us a trade-off between classical and quantum runtime.

\section{Derivation of the FPT error matrix}
\label{ap:fpt_error_matrix}

We start with a brief introduction to the interaction picture.
In the case when the Hamiltonian can be expressed as a sum of two Hamiltonians
\begin{align}
  H(t) = H_A(t) + H_B(t),
\end{align}
where the \emph{propagator} (or the time evolution operator) $U_A(t)$ generated by $H_A$ is assumed to be known, it is often useful to transform the Hamiltonian into what is called the \emph{interaction frame} or the \emph{interaction picture}.
The propagator $U(t)$ generated by $H(t)$ is then written as
\begin{align}
  U(t) = U_A(t)U_I(t),
\end{align}
where $U_I(t)$ is the \emph{interaction frame propagator}.
It is obtained by solving
\begin{align}
  \frac{d}{dt} U_I(t) = -\i H_I(t) U_I(t)
\end{align}
where the \emph{interaction frame Hamiltonian} $H_I(t)$ is given by
\begin{align}
  H_I(t) \coloneqq U_A^\dagger(t)H_B(t)U_A(t).
\end{align}

For an always on Hamiltonian, the $\ind{b}$-th step of the Hamiltonian engineering protocol corresponds to the unitary
\begin{align}
    S_\ind{b}(\lambda_\ind{b}t) = \e^{-\i(H_S-G_\ind{b})t_p } \e^{-\i H_S \lambda_\ind{b}t} \e^{-\i(H_S+G_\ind{b})t_p }.
\end{align}
Let $\tilde{U}_\ind{b}^{(\pm)} = \e^{-\i(H_S\pm G_\ind{b})t_p }$ and $\tilde{H}^{(\pm)} = (H_S\pm G_\ind{b})$.
Following \cite{basslerGeneralEfficientRobust2025a}, we first use the interaction picture to find the generators of the realistic pulses $\tilde{U}_\ind{b}^{(\pm)}$.
Let us start by considering  $\tilde{U}_\ind{b}^{(+)}$ first.
The interaction frame ansatz is
\begin{align}
  \tilde{U}_\ind{b}^{(+)}(t) = V_\ind{b}(t)U_{I,\ind{b}}^{(+)}(t),
\end{align}
where $U_{I,\ind{b}}^{(+)}(t)$ is the time evolution operator for the interaction frame Hamiltonian
\begin{align}
  H_{I,\ind{b}}^{(+)} = V_\ind{b}^\dagger (t) H_S V_\ind{b}(t).
\end{align}
We use the first order Magnus expansion to find the \emph{average interaction picture Hamiltonian}
\begin{align}
  \label{eq:h_plus_integral}
  \overline{H_{I,\ind{b}}^{(+)}} = \frac{1}{t_p} \int_{0}^{t_p} dt V_\ind{b}^\dagger(t) H_S V_\ind{b}(t),
\end{align}
so that
\begin{align}
  \tilde{U}_\ind{b}^{(+)} = V_\ind{b}\ \e^{-\i \overline{H_I^{(+)}}t_p}.
\end{align}
Repeating a similar analysis for the $\tilde{H}^{(-)}$ Hamiltonian, we find
\begin{align}
  \label{eq:h_minus_integral}
  \overline{H_{I,\ind{b}}^{(-)}} = \frac{1}{t_p} \int_{0}^{t_p} dt V_\ind{b}(t) H_S V_\ind{b}^\dagger(t)
\end{align}
and
\begin{align}
  \tilde{U}_\ind{b}^{(-)} &= V_\ind{b}^\dagger\ \e^{-\i \overline{H_I^{(-)}}t_p}\\
                          &= \e^{-\i \overline{H_{I,\ind{b}}'^{(-)}}t_p}V_\ind{b}^\dagger
\end{align}
where we used $\overline{H_{I,\ind{b}}'^{(-)}}\coloneqq V_\ind{b}^\dagger\overline{H_{I,\ind{b}}^{(-)}}V_\ind{b}$.
Therefore,
\begin{align}
   S_\ind{b}(\lambda_\ind{b}t) = \e^{-\i \overline{H_{I,\ind{b}}'^{(-)}}t_p} \e^{-\i V_\ind{b}^\dagger H_S V_\ind{b} \lambda_\ind{b}t} \e^{-\i \overline{H_{I,\ind{b}}^{(+)}}t_p},
\end{align}
is the time evolution operator for the piecewise time-independent Hamiltonian 
\begin{align}
  K_\ind{b}(t)\coloneqq \begin{cases}
    \overline{H_{I,\ind{b}}^{(+)}} \quad &t\in[0,t_p],\\
    V_\ind{b}^\dagger H_S V_\ind{b} \quad &t\in[t_p,t_p+\lambda_\ind{b}t],\\
    \overline{H_{I,\ind{b}}'^{(-)}} \quad &t\in[t_p+\lambda_\ind{b}t,2t_p+\lambda_\ind{b}t].
  \end{cases}
\end{align}
Up to first order in the Magnus expansion
\begin{align}
  \overline{K_\ind{b}} = \frac{1}{\lambda_\ind{b}t+2t_p}\left[t_p\left(\overline{H_{I,\ind{b}}^{(+)}}+\overline{H_{I,\ind{b}}'^{(-)}}\right)+\lambda_\ind{b}t V_\ind{b}^\dagger H_S V_\ind{b}\right].
\end{align}
We see that the effective Hamiltonian is the desired Hamiltonian $\lambda_\ind{b}t V_\ind{b}^\dagger H_S V_\ind{b}$ plus some \emph{error Hamiltonian}
\begin{align}
  H_{E,\ind{b}} \coloneqq t_p \left(\overline{H_{I,\ind{b}}^{(+)}}+\overline{H_{I,\ind{b}}'^{(-)}}\right).
\end{align}
To find the error Hamiltonian we need to calculate the integrals \cref{eq:h_plus_integral} and \cref{eq:h_minus_integral}.
\begin{align}
  \overline{H_{I,\ind{b}}^{(+)}} &= \frac{1}{t_p}\int_{0}^{t_p} dt V_\ind{b}^\dagger (t) H_S V_\ind{b}(t) \\
 & = \frac{1}{t_p} \int_{0}^{t_p} dt \sum_{\ind{a}\in \supp(\ind{\alpha})}\alpha_\ind{a} V_\ind{b}^\dagger (t) C_\ind{a}V_\ind{b}(t)\\
 & = \frac{1}{t_p} \int_{0}^{t_p} dt \sum_{\ind{a}\in \supp(\ind{\alpha})}\alpha_\ind{a} \omega^{(\ind{a}\cdot\ind{b})t/t_p}C_\ind{a}\\
 \label{eq:int_last_step}
 & = \frac{1}{t_p} \sum_{\ind{a}\in \supp(\ind{\alpha})} \alpha_\ind{a} \left(\int_{0}^{t_p} \omega^{(\ind{a}\cdot\ind{b})t/t_p} dt \right) C_\ind{a}.
\end{align}
Consider the integral
\begin{align}
  I(\varphi) \coloneqq \frac{1}{t_p}\int_{0}^{t_p} \e^{\varphi t/t_p} dt =  \left(\frac{\e^{\varphi}-1}{\varphi}\right),
\end{align}
with this \cref{eq:int_last_step} becomes
\begin{align}
  \overline{H_{I,\ind{b}}^{(+)}}= \sum_{\ind{a}\in \supp(\ind{\alpha})} \frac{3 \alpha_\ind{a}}{2\pi\i (\ind{a}\cdot\ind{b})}\left(\omega^{\ind{a}\cdot \ind{b}}-1\right)C_\ind{a}.
\end{align}
Similarly,
\begin{align}
  \overline{H_{I,\ind{b}}'^{(-)}} &= \frac{1}{t_p}\int_{0}^{t_p} dt V_\ind{b} (t) H_S V_\ind{b}^\dagger(t) \\
 & = \frac{1}{t_p} \int_{0}^{t_p} dt \sum_{\ind{a}\in \supp(\ind{\alpha})}\alpha_\ind{a} V_\ind{b} (t) C_\ind{a}V_\ind{b}^\dagger(t)\\
 & = \frac{1}{t_p} \int_{0}^{t_p} dt \sum_{\ind{a}\in \supp(\ind{\alpha})}\alpha_\ind{a} \omega^{-(\ind{a}\cdot\ind{b})t/t_p}C_\ind{a}\\
 \label{eq:int_last_step_minus}
 & = \frac{1}{t_p} \sum_{\ind{a}\in \supp(\ind{\alpha})} \alpha_\ind{a} \left(\int_{0}^{t_p} \omega^{-(\ind{a}\cdot\ind{b})t/t_p} dt \right) C_\ind{a}\\
 & = \sum_{\ind{a}\in \supp(\ind{\alpha})} \frac{3 \alpha_\ind{a}}{2\pi\i (\ind{a}\cdot\ind{b})}\left(1-\omega^{-\ind{a}\cdot \ind{b}}\right)C_\ind{a}.
\end{align}
So the error Hamiltonian is given by
\begin{align}
  H_{E,\ind{b}} & = \frac{3 t_p}{2\pi\i}  \sum_{\ind{a}\in \supp(\ind{\alpha})} \frac{\alpha_\ind{a}}{\ind{a}\cdot\ind{b}}\left(\omega^{\ind{a}\cdot\ind{b}}-\omega^{-\ind{a}\cdot\ind{b}}\right)C_\ind{a}\\
  & = \frac{3 t_p}{\pi}  \sum_{\ind{a}\in \supp(\ind{\alpha})} \frac{\alpha_\ind{a}}{\ind{a}\cdot\ind{b}} \sin{\left(\frac{2\pi}{3}(\ind{a}\cdot\ind{b})\right)}C_\ind{a}.
\end{align}
The effective Hamiltonian for the $\ind{b}$-th step is

\begin{align}
  \overline{K_\ind{b}} & = \frac{1}{2t_p+\lambda_\ind{b}t}\left[\frac{3 t_p}{\pi}  \sum_{\ind{a}\in \supp(\ind{\alpha})} \frac{\alpha_\ind{a}}{\ind{a}\cdot\ind{b}} \sin{\left(\frac{2\pi}{3}(\ind{a}\cdot\ind{b})\right)}C_\ind{a} + \lambda_\ind{b}t V_\ind{b}^\dagger H_S V_\ind{b}\right] \\
  & = \frac{1}{2t_p+\lambda_\ind{b}t} \left[ \sum_{\ind{a}\in \supp(\ind{\alpha})}\alpha_\ind{a}\left\{ \frac{3 t_p}{\pi(\ind{a}\cdot\ind{b})} \sin{\left(\frac{2\pi}{3}(\ind{a}\cdot\ind{b})\right)} +  \omega^{\ind{a}\cdot\ind{b}}\lambda_\ind{b}t\right\}C_\ind{a}\right].
\end{align}

Let $T_\ind{b}\coloneqq 2t_p+\lambda_\ind{b}t$ be the total time for the $\ind{b}$-th step.
Then we have
\begin{align}
  \prod_{\ind{b}}S_\ind{b} = \prod_{\ind{b}}\e^{-\i \overline{K_\ind{b}}T_\ind{b}}\approx \e^{-\i\sum_\ind{b}\overline{K_\ind{b}}T_b}\stackrel{!}{=}\e^{-\i H_T \tau}.
\end{align}
This means that

\begin{align}
  \tau \sum_{\ind{a}\in\mathrm{supp}(\boldsymbol{\beta})}\beta_\ind{a} C_\ind{a} & = \sum_\ind{b} \sum_{\ind{a}\in \supp(\ind{\alpha})}\alpha_\ind{a}\left\{ \frac{3 t_p}{\pi(\ind{a}\cdot\ind{b})} \sin{\left(\frac{2\pi}{3}(\ind{a}\cdot\ind{b})\right)} + \omega^{\ind{a}\cdot\ind{b}}\lambda_\ind{b}t \right\}C_\ind{a}\\
  & = \sum_{\ind{a}\in \supp(\ind{\alpha})}\alpha_\ind{a}\left[\left(\sum_\ind{b}\omega^{\ind{a}\cdot\ind{b}}\lambda_\ind{b}t\right)+\left(\sum_\ind{b}\frac{3 t_p}{\pi(\ind{a}\cdot\ind{b})} \sin{\left(\frac{2\pi}{3}(\ind{a}\cdot\ind{b})\right)}\right)\right],
\end{align}

or in matrix notation
\begin{align}
  \label{eq:fpt_constraint_matrix_appendix}
  F'\boldsymbol{\lambda}+E\boldsymbol{1} = \tau \boldsymbol{\gamma}',
\end{align}
where we defined the \emph{error matrix} $E \in \RR^{r\times c}$, with
\begin{align}
  E_{\ind{a}\ind{b}} = \frac{3 t_p}{\pi(\ind{a}\cdot\ind{b})} \sin{\left(\frac{2\pi}{3}(\ind{a}\cdot\ind{b})\right)}
\end{align}
 and $\boldsymbol{\gamma}' = \boldsymbol{\beta}\oslash\boldsymbol{\alpha} \big|_{\supp(\alpha)}$ and $F = (F^{(3)})^{\otimes n}\big|_{\supp(\alpha)}$ as before. 
We may rewrite \cref{eq:fpt_constraint_matrix_appendix}
\begin{align}
  \label{eq:fpt_effective_constraint}
  F'\boldsymbol{\lambda} = \boldsymbol{\eta}
\end{align}
where we defined $\boldsymbol{\eta} \coloneqq \tau \boldsymbol{\gamma}' - E\boldsymbol{1}\in\CC^r$.
Crucially, \cref{eq:fpt_effective_constraint} is of the same form as before, and $\mathrm{supp}(\boldsymbol{\eta}) = \mathrm{supp}(\boldsymbol{\beta})\cup \mathrm{supp}(\boldsymbol{\alpha})$, so the proof of \cref{thm:main} applies to this case as well and the linear program
\begin{align}
\begin{split}
\text{minimize} \quad & \boldsymbol{1}^\top \boldsymbol{\lambda} \\
\text{subject to} \quad &  F' \boldsymbol{\lambda} = \boldsymbol{\eta}, \\
                        &  \boldsymbol{\lambda} \in \RR_{\geq 0}^{c},
\end{split}
\end{align}
has an optimal solution, whenever $\mathrm{supp}(\boldsymbol{\beta})\subseteq\mathrm{supp}(\boldsymbol{\alpha})$.

\end{document}